\documentclass[a4paper,UKenglish]{mystyle}

\bibliographystyle{plain}

\usepackage{latexsym}
\usepackage{amsmath,amssymb}
\usepackage{wrapfig}
\usepackage{atbeginend}
\usepackage{enumitem}
\usepackage{tikz}
\usepackage{hyperref}
\usepackage{bbding}
\usepackage{microtype}
\usepackage{xcolor}
\usepackage{caption}

\usetikzlibrary{arrows,automata,backgrounds,calc,chains,fadings,fit,patterns,positioning,shapes.geometric,shapes.symbols,through,trees,decorations.markings,decorations.text}

\usepackage{url}
\urlstyle{same}

\newcommand{\xtend}[1]{}
\newcommand{\xreplace}[2]{#1}

% \newtheorem{lemma*}{Lemma}
% \spnewtheorem*{lemma*}{Lemma}{\bfseries}{\rmfamily}

% \theoremstyle{plain}
% % \newtheorem{theorem}{Theorem}[section]
% \newtheorem{theorem}{Theorem}
% \newtheorem{corollary}[theorem]{Corollary}
% \newtheorem{lemma}[theorem]{Lemma}
% \newtheorem{proposition}[theorem]{Proposition}
% \newtheorem{conjecture}[theorem]{Conjecture}
% \newtheorem*{fact}{Fact}
% \newtheorem*{maintheorem*}{Main Theorem}
% % 
% \theoremstyle{definition}
% %\theoremsymbol{\ensuremath{\ast}}
% \newtheorem{definition}[theorem]{Definition}
% \newtheorem{remark}[theorem]{Remark}
% \newtheorem{example}[theorem]{Example}
% \newtheorem{notation}[theorem]{Notation}
% \newtheorem{convention}[theorem]{Convention}
% % \newtheorem{question}[theorem]{Question}
% \newtheorem{claim}{Claim}
% \newtheorem*{lemma*}{Lemma}

% \AfterBegin{definition}{\normalfont}

\newcommand{\tup}[1]{\langle#1\rangle}
\newcommand{\pair}[2]{\tup{#1,#2}}

\newcommand{\msf}{\mathsf}
\newcommand{\mbb}{\mathbb}
\newcommand{\mbf}{\mathbf}
\newcommand{\mbs}{\boldsymbol}
\newcommand{\mcl}{\mathcal}

\newcommand{\nat}{\mbb{N}}
\newcommand{\rat}{\mbb{Q}}

\newcommand{\reals}{\mbb{R}}

\newcommand{\myal}[2]{\forall #1\;#2}

\renewcommand{\emptyset}{\varnothing}
\newcommand{\emptyword}{\varepsilon}

\newcommand{\length}[1]{|#1|}

\newcommand{\fap}[2]{#1(#2)}
\newcommand{\fapn}[2]{\fap{#1^{#2}}}
\newcommand{\bfap}[3]{\fap{#1}{#2,#3}}

\newcommand{\where}{\mid}

\newcommand{\two}{\mbf{2}}
\newcommand{\str}[1]{#1^{\nat}}

\newcommand{\botdeg}{\mbs{0}}

\newcommand{\seq}[1]{\langle#1\rangle}
\newcommand{\sshift}{\mcl{S}}
\newcommand{\shift}{\fapn{\sshift}}

\newcommand{\wof}[2]{#1 \cdot #2}

\newcommand{\wprod}[2]{#1 \otimes #2}

% double product form

\newcommand{\zip}{\fap{\msf{zip}}}

\newcommand{\scyc}{\varphi}

\newcommand{\cyc}{\bfap{\scyc}}

\newcommand{\prefixof}{\sqsubseteq}

% identity if non-negative

\newcommand{\morse}{\msf{T}}
\newcommand{\perioddoubling}{\msf{P}}
\newcommand{\mephisto}{\msf{W}}
\newcommand{\uncomp}{\msf{U}}
\newcommand{\comptop}{\msf{C}}

\tikzset{state/.style={draw=black,ellipse,inner sep=.6mm,outer sep=.5mm}}
\tikzset{default/.style={->,>=stealth',shorten >=1pt,shorten <= 1pt,auto,node di
stance=2cm,semithick}}

\tikzstyle{roundNode}=[gyellow,thick,circle,minimum size=4mm,inner sep=0.5mm]

\tikzset{state/.style={draw=black,ellipse,inner sep=.6mm,outer sep=.5mm}}
\tikzset{default/.style={->,>=stealth',shorten >=1pt,shorten <= 1pt,auto,node di
stance=2cm,semithick}}
\tikzset{bl/.style={below left of=#1,yshift=3mm}}
\tikzset{br/.style={below right of=#1,yshift=3mm}}

\tikzset{lbr/.style={below right,inner sep=0.5mm}}
\tikzset{lar/.style={above right,inner sep=0.5mm}}
\tikzset{lbl/.style={below left,inner sep=0.5mm}}
\tikzset{lal/.style={above left,inner sep=0.5mm}}

\tikzset{tloop/.style={out=60,in=120,looseness=5}}
\tikzset{bloop/.style={out=-60,in=-120,looseness=5}}
\tikzset{lloop/.style={out=210,in=150,looseness=5}}
\tikzset{rloop/.style={out=-30,in=30,looseness=5}}

\tikzset{lhead/.style={at=(#1.west),anchor=east,xshift=0cm,inner sep=.5mm}}
\tikzset{rhead/.style={at=(#1.east),anchor=west,xshift=0cm,inner sep=.5mm}}
\tikzset{bhead/.style={at=(#1.south),anchor=north,xshift=0cm,inner sep=.5mm}}

% tikz
\tikzstyle{gyellow}=[draw=brown!80,top color=yellow!50,bottom color=yellow!20!brown]
\tikzstyle{gblue}=[draw=blue!50,top color=white,bottom color=blue!60]
\tikzstyle{gred}=[draw=red!50,top color=white,bottom color=red!60]
\tikzstyle{ggreen}=[draw=green!70!black,top color=white,bottom color=green!80!black]

\tikzstyle{roundNode}=[gyellow,thick,circle,minimum size=4mm,inner sep=0.5mm]

% symbol for representant of atom of degree k

% interpolate

\newcommand{\sumlength}[1]{||#1||}

\newcommand{\rpl}{{|}}

\newcommand{\wordemp}{\emptyword}

\newcommand{\thuemorse}{Thue\hspace*{.5pt}--Morse}

% FSTs
\newcommand{\alphin}{\Sigma}
\newcommand{\alphout}{\Gamma}
\newcommand{\states}{Q}
\newcommand{\istate}[1]{q_{#1}}
\newcommand{\stransfun}{\delta}
\newcommand{\transfun}[2]{\stransfun(#1,#2)}
\newcommand{\soutfun}{\lambda}
\newcommand{\outfun}[2]{\soutfun(#1,#2)}

\newcommand{\sixtuple}[6]{\tup{#1,#2,#3,#4,#5,#6}}
\newcommand{\quadruple}[4]{\tup{#1,#2,#3,#4}}

\tikzset{degree/.style={circle,
    draw=blue!50,top color=blue!80!black!10!white,bottom color=blue!90!black!60!white,outer sep=.5mm}}

\newcommand{\astr}{w}
\newcommand{\bstr}{u}

\newcommand{\red}[1]{\ge_{\text{#1}}}%{\mathrel{\triangleright}}
\newcommand{\redi}[1]{\le_{\text{#1}}}%{\mathrel{\triangleright}}

\newcommand{\redistrict}[1]{<_{\text{#1}}}

\newcommand{\fstred}{\red{}}
\newcommand{\fstredi}{\redi{}}

\newcommand{\fstredistrict}{\redistrict{}}
\newcommand{\fstconv}{\equiv}
\newcommand{\convclass}[1]{#1^{{\fstconv}}}
\newcommand{\convclassset}[1]{#1/_{{\fstconv}}}

\newcommand{\words}{\str{\two}}%{\mathfrak{S}}
\renewcommand{\deg}{\mathfrak}

\newcommand{\yes}{
  \begin{tikzpicture}[baseline=-.5ex,scale=.9,nodes={scale=.9}]
    \node [circle,draw=none,fill=green!70!black!30,inner sep=0mm,minimum size=5mm] {{\small \CheckmarkBold}};
  \end{tikzpicture}%
}
\newcommand{\no}{
  \begin{tikzpicture}[baseline=-.5ex,scale=.9,nodes={scale=.9}]
    \node [circle,draw=none,fill=red!90!black!30,inner sep=0mm,minimum size=5mm] {{\small \XSolidBrush}};
  \end{tikzpicture}%
}
\newcommand{\open}{
  \begin{tikzpicture}[baseline=-.5ex,scale=.9,nodes={scale=.9}]
    \node [circle,draw=none,fill=blue!80!green!30,inner sep=0mm,minimum size=5mm] {\!\textbf{?}};
  \end{tikzpicture}%
}

\newcommand{\questionnodee}[1]{%
  \begin{tikzpicture}[baseline=-0.5ex]
%   \node [shade,top color=red!20,bottom color=red!40,
%          rectangle,rounded corners=2mm,
%          thick,draw=red!80!black!50,inner sep=.75mm] {#1};
  \node [circle,ultra thick,draw=blue!70!green!70,fill=blue!70!green!10,inner sep=.2mm,outer sep=0mm] {{\small #1}};
  \end{tikzpicture}%
}

\newcommand{\questionnode}[1]{%
  \questionnodee{Q\arabic{#1}}%
}

\makeatletter
\def\jlabel#1#2{{\def\@currentlabel{#2}\label{#1}}}
\makeatother

\newcounter{q}
\setcounter{q}{0}
\newcommand{\myquestion}{\addtocounter{q}{1}\scalebox{.8}{\questionnode{q}}}
\newcommand{\qlabel}[1]{
  \newcounter{#1}
  \setcounter{#1}{\value{q}}
%   \def\bla{(Q\arabic{#1})}
%   \jlabel{#1}{\bla}
}

\newcommand{\concat}{\mathrel{;}}

\newcommand{\mmul}{\,}

\usepackage{euler}

\renewcommand{\xtend}[1]{#1}
\renewcommand{\xreplace}[2]{#2}

\title{Degrees of Infinite Words, \protect\\ Polynomials, and Atoms \protect\\
  (Extended Version)\footnote{%
  This research has been supported by the Academy of Finland  under the grant 257857.%
}\footnote{%
  This is an extended version of the papers~\cite{endr:karh:klop:saar:2016,endr:karh:klop:saar:2018}.%
}}

\titlerunning{Degrees of Infinite Words, Polynomials and Atoms (Extended Version)}

\author[1]{J\"{o}rg Endrullis}
\author[2]{Juhani Karhum\"{a}ki}
\author[1,3]{\\Jan Willem Klop}
\author[2]{Aleksi Saarela}

\affil[1]{
  Department of Computer Science \\
  VU University Amsterdam,
  Amsterdam, the Netherlands \\
  Email: j.endrullis@vu.nl,\; j.w.klop@vu.nl
}
\affil[2]{
  Department of Mathematics and Statistics \& FUNDIM\\
  University of Turku,
  Turku, Finland \\ %FI-20014 Turku,
  Email: karhumak@utu.fi,\; amsaar@utu.fi
}
\affil[3]{
  Centrum Wiskunde \& Informatica (CWI),
  Amsterdam, the Netherlands
}

\authorrunning{J. Endrullis, J. Karhum\"{a}ki, J.W. Klop and A. Saarela}

%% mandatory lists of keywords and classifications:
% \subjclass{D.1.1, D.3.1, F.4.1, F.4.2, I.1.1, I.1.3}
\keywords{finite state transducers, infinite words, infinite sequences, streams, complexity, degrees}

\serieslogo{}%please provide filename (without suffix)
\EventShortName{}

\begin{document}
\captionsetup{singlelinecheck=true,margin=1cm}
\maketitle

\begin{abstract}
We study finite-state transducers and their power 
for transforming infinite words. 
Infinite sequences of symbols are of paramount importance in a wide range of fields, 
from formal languages to pure mathematics and physics. 
% In real-world applications, words occur as sensor data, financial transactions, and message words in social media.
%
While finite automata for recognising and transforming languages are well-understood, 
very little is known about the power of automata to transform infinite words. 

The word transformation realised by finite-state transducers
gives rise to a complexity comparison of words
and thereby induces equivalence classes, called \emph{(transducer) degrees}, 
and a partial order on these degrees.
The ensuing hierarchy of degrees is analogous to  
the recursion-theoretic \emph{degrees of unsolvability}, also known as \emph{Turing degrees},
where the transformational devices are Turing machines.
However, as a complexity measure, Turing machines are too strong: 
they trivialise the classification problem by identifying all computable words. 
Finite-state transducers give rise to a much more fine-grained, discriminating hierarchy. 
In contrast to Turing degrees, 
hardly anything is known about transducer degrees,
in spite of their naturality.

We use methods from linear algebra and analysis 
to show that there are infinitely many atoms in the transducer degrees,
that is, minimal non-trivial degrees.
\xtend{%
We also show that there exists an uncomputable degree
that has only uncomputable degrees and the trivial bottom degree below itself.
}

% This is alarming since finite-state transducers are the most natural way to transform words. 
% Every algorithm, intended for practical use, transforming an endless input word must have 
% constant-space complexity and hence is a finite-state transducer.
% 
% On the one hand, transducers are `weak enough' to exhibit a rich structure within the computable words. 
% On the other hand, they capture several usual transformations, such as alphabet renaming, 
% insertion and removal of elements, or morphisms.
\end{abstract}
\allowdisplaybreaks

\section{Introduction}\label{sec:intro}

In recent times, computer science, logic and mathematics have extended
the focus of interest from finite data types to include infinite data types,
of which the paradigm notion is that of infinite sequences of symbols, or \emph{words}.
Infinite words are of paramount importance in a wide range of fields,
from formal languages to pure mathematics and physics:
they appear in functional programming, formal language
theory, in the mathematics of dynamical systems, fractals and number theory, in
business (financial data words) and in physics (signal processing).

\newcommand{\tree}{
    \begin{scope}[line width=2mm,blue!80!green!40,xshift=50mm,yshift=-10mm,xscale=.8]
      \draw (0,0) to[out=90,in=-90] 
        node [at end, above] {mathematics} 
        node [pos=.92,s] {number theory} 
        node [pos=.79,s] {dynamic systems} 
        node [pos=.65,s] {fractals} 
        (-4cm,6cm);
      \draw (0,0) to[out=90,in=-110] 
        node [at end, above] {functional\\programming} 
        node [pos=.85,s] {communicating\\processes} 
        (1cm,6cm);
      \draw (0,0) to[out=90,in=-90] 
        node [at end, above] {physics,\\ engineering} 
        node [pos=.9,s] {embedded\\devices} 
        node [pos=.7,s] {signal\\processing} 
        node [pos=.5,s] {financial\\data words} 
        (4cm,6cm);
      \draw (0,0) to[out=90,in=-110] 
        node [at end, above] {formal\\languages} 
        node [pos=.45,xshift=-.5cm,s] {combinatorics\\ on words} 
        (-1.5cm,6cm);
    \end{scope}
}

\definecolor{grasstop}{HTML}{648510}
\definecolor{grassbottom}{HTML}{8AB51F}

\definecolor{mountainlight}{HTML}{857a2b}
\definecolor{mountaindark}{HTML}{534307}

\definecolor{snowlight}{HTML}{ece9c0}
\definecolor{snowdark}{HTML}{c8ae59}

\newcommand{\landscape}{
    \begin{scope}
    \clip (0,-2cm) -- (10cm,-2cm) -- (10cm,5cm) -- (0,5cm);
          
    \begin{scope}[shift={(2cm,3cm)},m/.style={draw=none,top color=mountainlight!30,bottom color=mountaindark!30}]
    \draw [rounded corners=2mm,m] 
      (0,0) to (1.2cm,-.8cm) to (1.5cm,-1.5cm) to (3cm,-4cm) to (-3cm,-4cm) to (-1.5cm,-1.5cm) -- cycle;
    \end{scope}
    \begin{scope}[shift={(2cm,3cm)},m/.style={draw=none,top color=snowlight,bottom color=white}]
    \clip [rounded corners=2mm] 
      (0,0) to (1.2cm,-.8cm) to (1.5cm,-1.5cm) to (3cm,-4cm) to (-3cm,-4cm) to (-1.5cm,-1.5cm) -- cycle;
    \node [starburst,minimum size=30mm,m] {}; 
    \end{scope}

    \begin{scope}[shift={(8.5cm,2.5cm)},xscale=.7,m/.style={draw=none,top color=mountainlight!20,bottom color=mountaindark!20}]
    \draw [rounded corners=2mm,m] 
      (0,0) to (1.2cm,-.8cm) to (1.5cm,-1.5cm) to (3cm,-4cm) to (-3cm,-4cm) to (-1.5cm,-1.5cm) -- cycle;
    \end{scope}
    \begin{scope}[shift={(8.5cm,2.5cm)},xscale=.7,m/.style={draw=none,top color=snowlight,bottom color=white}]
    \clip [rounded corners=2mm] 
      (0,0) to (1.2cm,-.8cm) to (1.5cm,-1.5cm) to (3cm,-4cm) to (-3cm,-4cm) to (-1.5cm,-1.5cm) -- cycle;
    \node [starburst,minimum size=30mm,m] {}; 
    \end{scope}

    \begin{scope}[shift={(5.5cm,2.2cm)},xscale=-1,yscale=.7,m/.style={draw=none,top color=mountainlight!50,bottom color=mountaindark!50}]
    \draw [rounded corners=2mm,m] 
      (0,0) to (1.2cm,-.8cm) to (1.5cm,-1.5cm) to (3cm,-4cm) to (-3cm,-4cm) to (-1.5cm,-1.5cm) -- cycle;
    \end{scope}
    \begin{scope}[shift={(5.5cm,2.2cm)},xscale=-1,,yscale=.7,m/.style={draw=none,top color=snowlight,bottom color=white}]
    \clip [rounded corners=2mm] 
      (0,0) to (1.2cm,-.8cm) to (1.5cm,-1.5cm) to (3cm,-4cm) to (-3cm,-4cm) to (-1.5cm,-1.5cm) -- cycle;
    \node [starburst,minimum size=25mm,m] {}; 
    \end{scope}

    \end{scope}

    \draw [draw=none,top color=grasstop!40,bottom color=grassbottom!40,yshift=2mm] 
      (0cm,-.1cm) to[out=20,in=220] (4.5cm,0.3cm) to[out=40,in=180] (7cm,-0cm) to[out=0,in=180] (9cm,.0cm)
      to[out=0,in=180] (10cm,-.1cm)
      to (10cm,-2cm) to (0cm,-2cm) -- cycle;
    \draw [draw=none,top color=grasstop!70,bottom color=grassbottom!70] 
      (0cm,-1cm) to[out=20,in=180] (2.5cm,.2cm) to[out=0,in=180] (5cm,-.8cm) to[out=0,in=180] (9cm,.-.2cm)
      to[out=0,in=180] (10cm,-.2cm)
      to (10cm,-2cm) to (0cm,-2cm) -- cycle;
    \draw [draw=none,top color=grasstop,bottom color=grassbottom] 
      (0cm,0cm) to[out=20,in=180] (3cm,-1cm) to[out=0,in=180] (6cm,-.3cm) to[out=0,in=180] (10cm,-1cm)
      to (10cm,-2cm) to (0cm,-2cm) -- cycle;
      
    \begin{scope}[yshift=-5mm]
    \tree
    \end{scope}
}

\xtend{%
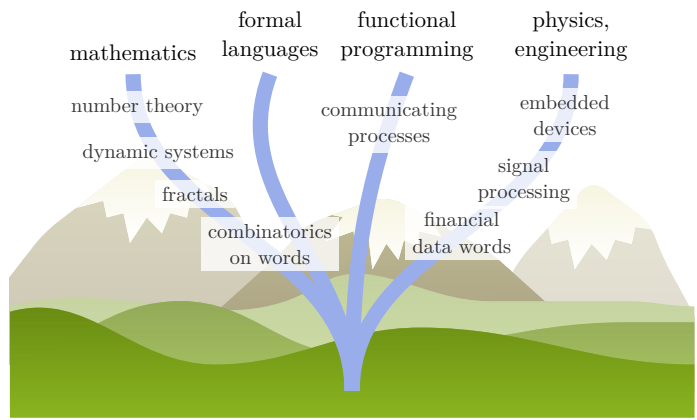
\begin{figure}[h!]
  \centering
  \begin{tikzpicture}[xscale=0.9,yscale=.7,
    very thick,nodes={black,fill=white,align=center},nodes={scale=.85},s/.style={scale=.9,opacity=.8}]
    
    \landscape
  \end{tikzpicture}  
  \caption{Panta Rhei; words are ubiquitous.}
  \label{fig:panta}
\end{figure}
}

An accepted and deep mathematical insight is that together with a class of structures, one has to deal
with the ways to transform these structures into each other, such as morphisms%
\footnote{Morphisms in a category should not be confused with morphisms in the sense of this paper.} in a category.
Our objects of interest are words and how they can be transformed into each other
via finite-state transducers.
% Our objects of study are finite-state transducers and their power for transforming words.

\xtend{%
In computer science, infinite words are often referred to as streams. 
In the real world, we encounter 
streams in the form of sensor data from continual measurements, 
streams of financial transactions, and
streams of messages in social media.
Analysing and processing the large amount of data generated by these applications
is one of the major challenges of computer science today,
and an active field of research, known as \emph{Big Data}.
When it comes to data sets that are massive in size, 
even linear algorithms may be too complex for processing the data.
For instance, think of an algorithm with linear space complexity (e.g. linear random-access memory)
applied to petabytes of input data.
This has led to the research field of \emph{sublinear algorithms},
a rapidly developing area of computer science, that is rooted in the study of Big Data.
This field is concerned with the development of algorithms having sublinear-space and/or sublinear-time complexity. 

We are interested in the most strict form of sublinear-space complexity:
\emph{constant-space complexity}.
Algorithms with constant space-complexity ($O(1)$ space-complexity) are indispensable for programs
that are intended to run indefinitely, to continually transform 
an endless input word into an endless output word.
Any algorithm \emph{not} having constant space-complexity will eventually 
run out of memory on a real world device (computer)
when transforming an \emph{infinite} word.
This motivates the study of constant-space algorithms for the transformation of words.
Note that a constant-space algorithm 
is nothing else than a finite-state automaton.
}

While finite automata for recognising and transforming languages
are well-studied and well-understood,
surprisingly, very little is known about the power of finite automata for transforming words.
%
% Given a concrete pair of words $w_1$ and $w_2$, an important question is how to determine a finite-state transducer transforming $w_1$ into $w_2$. 
% However, currently no techniques are known to determine whether such a finite-state transducer exists.
Even for concrete examples of words $w_1$ and $w_2$, 
there exist no techniques to determine whether $w_1$ can be transformed into $w_2$ by some finite-state transducer.
See, e.g., Questions~\questionnodee{Q4} and~\questionnodee{Q5}.
We are interested in understanding the power of finite-state transducers for transforming words.
% %
% This project aims to overcome this situation, 
% and develop \emph{fundamental theory} to deepen our understanding of finite automata for transforming words.

Such a study can be profitably cast in the form of setting up a hierarchy of degrees, 
induced by a transformational device (sometimes called a `reduction').
This is a well-known reasoning framework in logic and computer science~\cite{loewe}, 
with many instances, e.g., Wadge degrees~\cite{van1978wadge}, 
Turing degrees~\cite{shoe:1971,odif:1999}, r.e. degrees~\cite{odif:1999}, and so on. 
In our case the hierarchy of degrees is obtained as follows.

The transformation realised by finite-state transducers
induces a partial order of degrees of infinite words:
for words $v,w \in \str{\Delta}$, 
we write $v \ge w$ if $v$ can be transformed into $w$ by some finite-state transducer.
If $v \ge w$, then $v$ can be thought of as \emph{at least as complex as} $w$. 
This complexity comparison induces equivalence classes of words, called \emph{degrees}, 
and a partial order on these degrees, that we call \emph{transducer degrees}.
%
% The structure of degrees is interesting only for infinite words
% (finite words form a spurious bottom degree).
% We will henceforth only speak of words, meaning infinite words.

The ensuing hierarchy of degrees is analogous to  
the recursion-theoretic \emph{degrees of unsolvability}, also known as \emph{Turing degrees},
where the transformational devices are Turing machines.
The Turing degrees have been widely studied in the 60's and 70's.
However, as a complexity measure, Turing machines are too strong: 
they trivialise the classification problem by identifying all computable infinite words. 
Finite-state transducers (FSTs) give rise to a much more fine-grained hierarchy. 

In our view, transducers are the most natural devices for transforming words.
Unlike Turing machines, they are not too strong and still very expressive.
On the one hand, transducers are `weak enough' to exhibit a rich structure within the computable words. 
On the other hand, they capture several usual transformations, such as alphabet renaming, 
insertion and removal of elements, or morphisms as usually studied in theories of infinite sequences~\cite{allo:shal:2003}.
% 
% capturing all relevant morphisms as usually studied in theories of infinite words,
% morphic, automatic and other families as treated in~\cite{allo:shal:2003}.
% In contrast to Turing degrees, very little is known about transducer degrees. 

\begin{wrapfigure}{r}{.35\textwidth}
  \vspace{-2ex}
  \begin{tikzpicture}[thick,>=stealth,
    node/.style={degree}]
  
    \node (evp) [node,minimum size=6mm] at (0mm,0mm) {$\botdeg$};
    \node [anchor=west] at (3mm,0mm) {\textit{bottom degree}};
  
    \node at ($(evp)+(0mm,9mm)$) {other degrees};
  
    \begin{scope}[very thick,dashed]
    \draw (evp) -- ($(evp) + (15mm,10mm)$);
    \draw (evp) -- ($(evp) + (-15mm,10mm)$);
    \end{scope}
  \end{tikzpicture}\vspace{-1ex}
\end{wrapfigure}
Like the Turing degrees, the transducer degrees have a bottom degree
that is less than or equal to all other degrees (pictorial on the right).
The bottom degree of the Turing degrees contains all computable words.
In contrast, transducer degrees are much more fine-grained.
The \emph{bottom degree} $\botdeg$ of the transducer degrees consists only of the ultimately periodic words,
that is, words of the form $uvvv\cdots$ for finite words $u,v$.

We present a comparison of some basic properties,
as to their validity in the Turing degrees and the transducer degrees. 
\xreplace{%
  For Turing degrees we have~\cite{shoe:1971,odif:1999}
}{%
  We list a few key results~\cite{shoe:1971,odif:1999} on Turing degrees 
  due to Spector, Kleene, Post, Sacks, Lacombe and Simpson:
}
  \begin{enumerate}\setlength{\itemsep}{.2ex}
%     \item \label{thm:tm:degree:countable} (Kleene, Post) Every degree is countably infinite. 
%       \hfill\yes
%     \item (Kleene, Post) There are $2^{\aleph_0}$ distinct degrees.
%       \hfill\yes
%     \item (Kleene, Post) For every degree $\deg{a}$, $\deg{a}\,{\downarrow} = \{\deg{b} \mid \deg{a} \red{} \deg{b}\}$ is countable.
%       \hfill\yes
%     \item \label{thm:tm:up:continuum} (Kleene, Post) For every degree $\deg{a}$, the set $\deg{a}\,{\uparrow} = \{\deg{b} \mid \deg{b} \tmred{} \deg{a}\}$ has cardinality  $2^{\aleph_0}$.
%       \hfill\yes
%     \medskip
    
    \item\label{thm:tm:atom} %(Spector) 
      There exist $2^{\aleph_0}$ atom (minimal) degrees.
      \hfill\open
    \item\label{thm:tm:cover} %(Spector)
      Every degree has a minimal cover.
      \hfill\open
    \item\label{thm:tm:fin:sup} %(Kleene, Post) 
      Every finite set of degrees has a supremum.
      \hfill\no
    \item\label{thm:tm:inf:sup} %(Kleene, Post, Spector) 
      No infinite ascending sequence has a supremum.
      \hfill\open
    \item\label{thm:tm:fin:inf} %(Kleene, Post)
      There are pairs of degrees without infimum.
      \hfill\yes
    \item\label{thm:tm:incompare} %(Kleene, Post) 
      For every degree $\ne \botdeg$ there exists an incomparable degree.
      \hfill\yes
    \item\label{thm:tm:embed} %(Sacks)
      Every countable partial order can be embedded.
      \hfill\open
    \item\label{thm:tm:dense} %(Sacks) 
      The recursively enumerable degrees are dense.
%       whenever $\deg{a} < \deg{c}$ for recursively enumerable degrees $\deg{a},\deg{c}$,
%       then there exists a recursively enumerable degree $\deg{b}$ such that $\deg{a} < \deg{b} < \deg{c}$.
      \hfill\no
    \item\label{thm:tm:logic} %(Simpson) 
      The first-order theory of Turing degrees in the language 
      $\pair{\red{}}{=}$ is recursively isomorphic to that of true second-order arithmetic.
      \hfill\open
  \end{enumerate}
Here the symbols on the right indicate whether the properties also hold for transducer degrees:
\yes{} if the property also holds for transducer degrees,
\no{} if it fails,
and \open{} for questions that are open in transducer degrees.

In previous papers~\cite{endr:hend:klop:2011,endr:hend:2014,endr:grab:hend:zant:2015,endr:klop:saar:whit:2015}, 
we have discussed several structural properties of the hierarchy of transducer degrees.
In this paper, we focus on atom degrees.
An \emph{atom degree} is a minimal non-trivial degree,
that is, a degree that is directly above the bottom degree without interpolant degree:
\begin{center}\vspace{-1ex}
  \begin{tikzpicture}[thick,>=stealth,
    node/.style={degree}]
  
    \node (evp) [node,minimum size=6mm] at (0mm,0mm) {$\botdeg$};
    \node [anchor=west,align=left] at (3mm,0mm) {bottom degree\\(ultimately periodic words)};
  
    \node (prime) [node,minimum size=6mm] at (0mm,10mm) {};
    \node (primeL) [anchor=west,node distance=25mm,right of=prime] {\textit{\emph{atom degree}}};
    \draw [->,very thick,dashed] (primeL) -- (prime);

    \draw [<-,very thick,dashed] ($(evp)!.5!(prime)+(-2mm,0mm)$) -- node [anchor=east,at end] {nothing in between} +(-7mm,0mm);
  
    \begin{scope}[very thick]
    \draw (prime) -- (evp);
    \end{scope}
  \end{tikzpicture}
\end{center}
Thus the atom degrees reduce only to $\botdeg$ or themselves.
The following questions are still open for transducer degrees:
%\begin{enumerate}\setlength{\itemsep}{.5ex}
  \;\myquestion\qlabel{q:atoms:uncountable}\!Are there $2^{\aleph_0}$ atoms in the transducer degrees?
  \;\myquestion\qlabel{q:atoms:uncomputable}\!Do uncomputable atoms exist in the transducer degrees?
  \;\myquestion\qlabel{q:atoms:morse}\!Is the degree of the \thuemorse{} word $\morse = 0110100110010110 \cdots$ an atom?
%\end{enumerate} 
% \end{samepage}

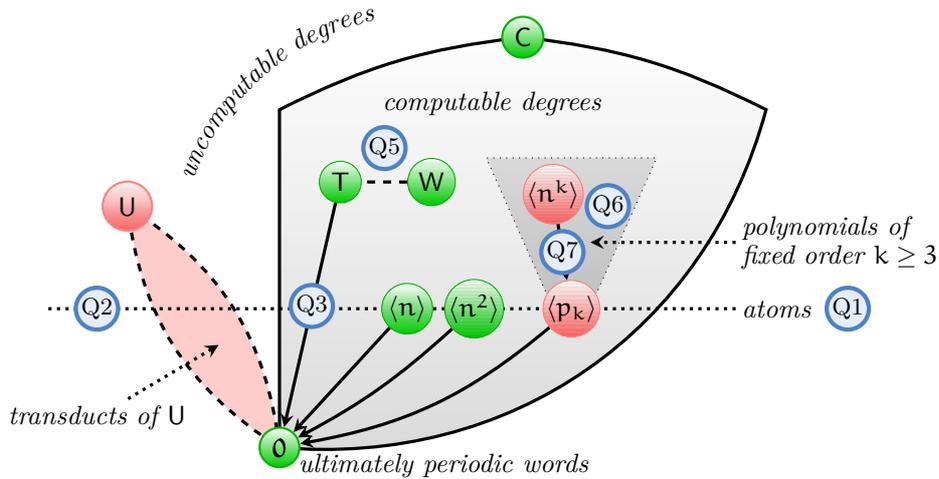
\begin{figure*}[h!]
  \centering
  \begin{tikzpicture}[thick,>=stealth,
  node/.style={circle,fill=black,draw=none,ggreen},scale=.8,inner sep=.75mm,
  r/.style={gred}]

    % fill=black!5
    \begin{scope}[scale=.8]
    \draw [very thick,black,shade,top color=black!1,bottom color=black!15] (0mm,0mm) to[bend left=0] (0mm,70mm) to[bend left=10] (50mm,85mm) to[bend left=10] (100mm,70mm) to[bend left=40] (0mm,0mm);
    \node [node,minimum size=5mm] at (50mm,85mm) {$\comptop$};
    \end{scope}
    \node [anchor=center] at (35mm,57mm) {\textit{computable degrees}};
    \path [postaction={decorate,decoration={text along path,text align=center,text={|\itshape|uncomputable degrees}}}] (-15mm,30mm) to[bend left=60] (30mm,70mm);
  
%     \coordinate (a) at (35mm,33mm);
%     \foreach \x in {2,...,15} {
%       \draw [node] (a) circle (5mm/\x);
%       \coordinate (a) at ($(a) + (3mm/\x,10mm/\x)$);
%     }
  
%     \coordinate (a) at (-41mm,53mm);
%     \foreach \x in {2,...,15} {
%       \draw [node] (a) circle (5mm/\x);
%       \coordinate (a) at ($(a) + (3mm/\x,-10mm/\x)$);
%     }
  
%     \node (asc) [anchor=west] at (35mm,30mm) {\textit{\parbox{3cm}{\centerline{ascending sequence}\centerline{of degrees}}}};
%     \draw [->,very thick] (asc) -- (35mm,46mm);
%     \node (des) [anchor=east] at (-35mm,30mm) {\textit{\parbox{3cm}{\centerline{descending sequence}\centerline{of degrees}}}};
%     \draw [->,very thick] (des) -- (-35mm,46mm);
  
    \node (u) [node,r,minimum size=6.5mm,inner sep=0mm] at (-25mm,40mm) {$\uncomp$};
    \node (uL) [anchor=center,align=center] at (-30mm,5mm) {\textit{transducts of $\uncomp$}};
    \draw [very thick,dotted,->] (uL) to ($(uL) + (20mm,12mm)$);

    \node (evp) [node,minimum size=5mm] at (0mm,0mm) {$\botdeg$};
    \node [anchor=west] at (2mm,-3mm) {\textit{ultimately periodic words}};
    
    \begin{pgfonlayer}{background} 
    \draw [very thick,black,dashed,fill=red!20] (-25mm,40mm) to[bend left=25] ($(evp)$) to[bend left=25] cycle;
    \end{pgfonlayer}
    
  \begin{scope}[xshift=45mm,yshift=10mm]
    \node (morse) [node,minimum size=5mm] at (-35mm,34mm) {$\morse$};
    \node (sierpinski) [node,minimum size=5mm] at (-20mm,34mm) {$\mephisto$};
  
%     \begin{scope}[xshift=-13mm,yshift=-8mm]
%     \node (sup1) [node,minimum size=4mm] at (-8mm,52mm) {};
%     \node (sup2) [node,minimum size=4mm] at (3mm,52mm) {};
%     \node (sup) [node,rectangle,rounded corners=2mm,minimum size=5mm] at (-2.5mm,63mm) {sup?};
%     \end{scope}
      
%     \begin{scope}[xshift=15mm,yshift=-8mm]
%     \node (up) [node,rectangle,rounded corners=2mm,minimum size=5mm] at (-2.5mm,63mm) {upper bound};
%     \coordinate (s) at (-13mm,52mm);
%     \foreach \x in {1,2,...,10} {
%       \begin{scope}[line width=0.8mm/(\x^1.2+2),opacity=10/(\x+9)]
%       \node (s1) [node,minimum size=10mm/(\x^1.3+2),inner sep=0] at (s) {};
%       \coordinate (s') at ($(s) + ({12mm/(\x^.9+2)},0mm)$);
%       \coordinate (s) at (s');
%       \draw [<-] (s1) -- (up);
%       \end{scope}
%     }
%     \end{scope}

    \node (primeL) [anchor=west] at (30mm,13mm) {\textit{atoms}};
    \draw [very thick,dotted] (primeL) -- ($(primeL)+(-120mm,0mm)$);
    \node at (primeL.east) [anchor=west] {\questionnodee{Q1}} ;

    \begin{scope}[xshift=-15mm]
    \draw [draw=black,dotted,thin,shade,top color=black!15,bottom color=black!25] 
      (15mm,13mm) to ++(-11mm,25mm) to ++(28mm,0mm) to (21mm,13mm) -- cycle;

    \node (primen) [node,minimum size=6.5mm,inner sep=.2mm] at (-9mm,13mm) {$\seq{n}$};
    \node (primen2) [node,minimum size=6.5mm,inner sep=0mm] at (2mm,13mm) {$\seq{n^2}$};
    \node (primenk) [node,r,minimum size=6.5mm,inner sep=0mm] at (18mm,13mm) {$\seq{p_k}$};
    \node (nk) [node,r,minimum size=6.5mm,inner sep=0mm] at (15mm,32mm) {$\seq{n^k}$};
    \node at (24mm,30mm) {\questionnodee{Q6}};
    \end{scope}
    
    \node (poly) [anchor=west,align=left] at (30mm,24mm) {\textit{polynomials of}\\[-.5ex]\textit{fixed order $k \ge 3$}};
    \draw [->,very thick,dotted] (poly) -- ($(poly)+(-41mm,0mm)$);
  \end{scope}
  
    \node at (-30mm,23mm) {\questionnodee{Q2}} ;

    \begin{scope}[very thick,->]
    \draw (primen) to[bend left=0] (evp);
    \draw (primen2) to[bend left=7] (evp);
    \draw (primenk) to[bend left=15] (evp);
    \draw (nk) -- node {\questionnodee{Q7}} (primenk);
    \draw (morse) -- node [pos=.46] (mp) {\questionnodee{Q3}} (evp); %[node,dashed,pos=.45,inner sep=0.45mm]
    \draw [-,dashed] (sierpinski) -- node [above,solid] {\questionnodee{Q5}} (morse);
%     \draw (sup) -- (sup1);
%     \draw (sup) -- (sup2);
    \end{scope}
\end{tikzpicture}
  \caption{\textit{The partial order of transducer degrees with focus on the properties studied in this paper. 
    Our contribution is indicated using the colour red.
    Here $p_k$ is a particular polynomial of order $k$, 
    see Section~\ref{sec:polynomial}. % for the form of this polynomial.
    The degree of $\seq{p_k}$ is an atom and
    all other polynomials of order $k$ can be transduced to $\seq{p_k}$.
    For $k \ge 3$, the degree $\seq{n^k}$ is not an atom as shown in Section~\ref{sec:nk}.
    The definitions of the words \thuemorse{}~$\morse$ and the Mephisto Walz~$\mephisto$ are given in Section~\ref{sec:prelims}.
    The degree of $\comptop$ is the top degree of the computable words.
    Finally, the nodes Q1,\ldots,Q7 indicate open problems discussed in Sections~\ref{sec:intro} and~\ref{sec:degrees}.
    }}
  \label{fig:hierarchy}
\end{figure*}%

We show that there are at least $\aleph_0$ atoms
residing in the interesting subclass of words that we call \emph{sporadic words},
of which the simplest one is
$1\,10\,100\,1000\,10000\,\cdots$.
(Jacobs~\cite{jaco:1992} called this word `rarefied ones'.)
Here `sporadic' refers to the fact that the ones are becoming more and more sporadic.
In general, they are of the form
$\seq{f} = 10^{f(0)} \, 1 0^{f(1)} \, 1 0^{f(2)} \, \cdots$,
for some $f : \nat \to \nat$. % with $\lim_{n\to\infty} f(n) = \infty$.
% We write $\seq{f(n)}$ short for~$\seq{n \mapsto f(n)}$.
%
This paper studies in particular the case where $f$ is a polynomial.
% Thus we consider sporadic words of the form 
% \begin{align*}
%   \seq{n} &= 10^0\, 10^1\, 10^2\, 10^3\, 10^4 \, 10^5\ldots \\
%   \seq{n^2} &= 10^0\, 10^1\, 10^4\, 10^9\, 10^{16}\, 10^{25}\ldots \\ 
%   \seq{n^k} &= 10^{0^k}\, 10^{1^k}\, 10^{2^k}\, 10^{3^k}\, 10^{4^k}\, 10^{5^k}\ldots 
% %   \seq{a n^3 + b n^2 + c n + d} &= 10^{d}\, 10^{a + b + c + d} \, 10^{8a + 4b + 2c + d}\ldots
% %   \seq{p(n)} &= 10^{p(0)}\, 10^{p(1)} \, 10^{p(2)} \, 10^{p(4)} \, 10^{p(4)}\ldots
% \end{align*}
We consider the `atomicity' of these words 
depending on the polynomials determining how the ones become ever more sporadic.

\xtend{
% \begin{samepage}
In this `polynomially sporadic' subhierarchy
of the transducer degrees, we have the following state of affairs:
\begin{itemize}
  \item $\seq{n}$ is an atom, 
  \item $\seq{n^2}$ is an atom,
  \item $\seq{n^3}$ is \emph{not} an atom, 
  \item $\seq{a n^3 + b n^2 + c n + d}$ is an atom for some $a,b,c,d > 0$.
\end{itemize}
This hints at an interesting, rich structure in this hierarchy of sporadic degrees.
% \end{samepage}
}
\vspace{-1ex}

\paragraph*{Our contribution.}

The words $\seq{n}$ and~$\seq{n^2}$ are atoms~\cite{endr:hend:klop:2011,endr:grab:hend:zant:2015}.
Surprisingly, we find that this does not hold for $\seq{n^3}$.
In particular, we show that the degree of $\seq{n^k}$ is \emph{never} an atom for $k \ge 3$
(see Theorem~\ref{thm:not:atom}).
On the other hand, we prove that for every $k > 0$,
there exists a unique atom among the degrees 
of words $\seq{p(n)}$ for polynomials $p(n)$ of order~$k$
(see Theorem~\ref{thm:atom}).
(To avoid confusion between two meanings of \emph{degrees}, namely \emph{degrees of words} and \emph{degrees of polynomials},
we speak of the \emph{order} of a polynomial.)
We moreover show that this atom is the infimum of all degrees of polynomials $p(n)$ of order~$k$.
Figure~\ref{fig:hierarchy} summarises the state of affairs as in this paper.
Finally, we show that there exists an uncomputable word $\uncomp$
that transduces only to uncomputable words or to ultimately periodic words.
\xtend{%
In particular, this word has no transducts that are computable and not ultimately periodic.
}

\paragraph*{Further related work.}

L\"{o}we~\cite{loewe} discussed complexity hierarchies derived from notions of reduction.
The paper~\cite{endr:klop:saar:whit:2015} gives an overview over the subject
of transducer degrees and compares them with the well-known Turing degrees~\cite{shoe:1971,odif:1999}.
Restricting the transducers to output precisely one letter in each step,
we arrive at Mealy machines.
These give rise to an analogous hierarchy of Mealy degrees
that has been studied in~\cite{belo:2008,rayn:1974}.
The structural properties of  
this hierarchy are very different from the transducer degrees~\cite{endr:klop:saar:whit:2015}.
The paper~\cite{bosm:zant:2016} studies a hierarchy of two-sided infinite sequences
arising from the transformation realised by permutation transducers.
% While the ultimately periodic sequences form the bottom degree under ordinary transduction,
% they split into three unpointed and seven pointed equivalence classes under permutation transduction.

\xreplace{
The current paper is an extension of~\cite{endr:karh:klop:saar:2016}.
Beyond better exposition, more examples and more detailed proofs, 
the results in Section~\ref{sec:uncomputable} are new.
}{%
}

\section{Preliminaries}\label{sec:prelims}

Let $\Sigma$ be an alphabet. The empty word is denote by $\emptyword$.
Let $\Sigma^*$ be the set of finite words over~$\Sigma$,
and $\Sigma^+ = \Sigma^* \setminus \{\emptyword\}$.
The set of infinite words over~$\Sigma$ is 
$\str{\Sigma} = \{ \sigma \where \sigma : \nat \to \Sigma \}$
and we let $\Sigma^\infty = \Sigma^* \cup \str{\Sigma}$.
Let $u,w \in \Sigma^\infty$.
Then $u$ is called a \emph{prefix of} $w$, denoted $u \prefixof w$,
if $u = w$ or there exists $u' \in \Sigma^\infty$ such that $uu' = w$.

% \begin{convention}
%   In this paper, we consider only sequences over finite alphabets. 
%   Without loss of generality we assume that
%   the alphabets are of the form 
%   \begin{align*}
%     \Sigma_n = \{0,1,\ldots,n-1\}
%   \end{align*}
%   for some $n \in \nat$.
%   % Then there are countably many alphabets and countably many finite state transducers over these alphabets.
%   We use $\words$ to denote the set of all words over these alphabets,
%   that is, $\words = \bigcup_{n\in\nat} \str{\Sigma_n}$.
% \end{convention}

Of particular importance are \emph{morphic words}~\cite{allo:shal:2003}.
For example:
\begin{enumerate}\setlength{\itemsep}{.5ex}
  \item 
    The \thuemorse{} word $\morse$ 
    arises by starting from the word $0$, 
    as the limit of repeatedly applying the morphism
    $0 \mapsto 01$, $1 \mapsto 10$.
    We abbreviate this by: $\tup{0 \mid 0 \mapsto 01,\, 1 \mapsto 10}$.
    The first iterations are
    $0 \mapsto 01 \mapsto 0110 \mapsto \cdots$. %0110 1001 1001 0110 \mapsto \cdots$
  \item 
    The period-doubling word $\perioddoubling = \tup{0 \mid 0\mapsto 01,\, 1\mapsto 00}$.
  \item 
    The Mephisto Waltz word~$\mephisto = \tup{0 \mid 0\mapsto 001,\, 1\mapsto 110}$.
\end{enumerate}
\xreplace{%
  For a more formal metric definition of morphic words, see~\cite{salo:81,allo:shal:2003}.
}{%
  Formally, morphic words are defined as follows~\cite{salo:81,allo:shal:2003}.
  Let 
  $s \in \Sigma^*$ be a starting word,
  $h : \Sigma \to \Sigma^*$ a morphism, and 
  $c : \Sigma \to \Sigma$ a coding.
  % \end{itemize}
  % \begin{itemize}
  %   \item $s \in \Sigma^*$ be a starting word,
  %   \item $h : \Sigma \to \Sigma^*$ a morphism, and 
  %   \item $c : \Sigma \to \Sigma$ a coding.
  % \end{itemize}
  If the limit $h^\omega(s) = \lim_{i\to\infty} h^i(s)$
  exists, then 
  $c(h^\omega(s))$ is called a \emph{morphic word}.
  Here the limit is taken with respect to the following metric $d$ on $\Sigma^{\infty}$:
  $d(u,u) = 0$ and $d(u,v) = 2^{-n}$ for all $u,v \in \Sigma^{\infty}$ with $u \ne v$, 
  where $n$ is the length of the longest common prefix of $u$ and $v$.
}

\xtend{\section{Finite-state Transducers}\label{sec:fst}}

A \emph{sequential finite-state transducer} (FST)~\cite{allo:shal:2003,saka:03}, 
a.k.a.\ \emph{deterministic generalised sequential machine (DGSM)},
is a finite automaton with input letters and finite output words along the edges.
A transducer reads the input word letter by letter,
in each step producing an output word and changing its state.
Then the output word is the concatenation of all the output words encountered along the edges.%

\begin{definition}\normalfont
  A \emph{sequential finite-state transducer} 
  $A = \sixtuple{\alphin}{\alphout}{\states}{\istate{0}}{\stransfun}{\soutfun}$
  consists of
  a finite \emph{input alphabet} $\alphin$,
  a finite \emph{output alphabet} $\alphout$,
  a finite set of \emph{states} $\states$,
  an \emph{initial state} $\istate{0} \in \states$,
  a \emph{transition function} ${\stransfun} : {\states \times \alphin \to \states}$, and
  an \emph{output function} ${\soutfun} : {\states \times \alphin \to \alphout^*}$.
%   \begin{enumerate}
%     \item a finite input alphabet $\alphin$,
%     \item a finite output alphabet $\alphout$,
%     \item a finite set of states $\states$,
%     \item an initial state $\istate{0} \in \states$,
%     \item a transition function ${\stransfun} : {\states \times \alphin \to \states}$, and
%     \item an output function ${\soutfun} : {\states \times \alphin \to \alphout^*}$.
%   \end{enumerate}
  Whenever the alphabets $\alphin$ and $\alphout$ are clear from the context, 
  we write $A = \quadruple{\states}{\istate{0}}{\stransfun}{\soutfun}$.
\end{definition}

An example of an FST is depicted in Figure~\ref{fig:fst:diff},
where we write `$a \,\rpl\, w$' along the transitions to indicate that the input
letter is $a$ and the output word is $w$.
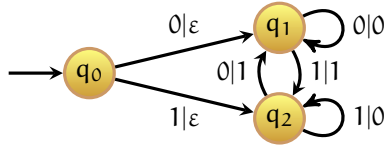
\begin{figure}[h!]
  \centering\vspace{-1ex}
  \begin{tikzpicture}[thick,>=stealth,very thick,node distance=25mm,
    node/.style={circle,gyellow,inner sep=.5ex,minimum size=6mm}]
    \node (start) {};
    \node (q0) [node,right of=start,node distance=12mm] {$q_{0}$};
    \node (q1) [node,right of=q0,yshift=6mm] {$q_{1}$};
    \node (q2) [node,right of=q0,yshift=-6mm] {$q_{2}$};
    \draw [->] (start) -- (q0);
    \draw [->] (q0) -- (q1) node [midway, above] {$0\rpl\wordemp$};
    \draw [->] (q0) -- (q2) node [midway, below] {$1\rpl\wordemp$};
    \draw [->] (q1) to[bend left=30] node [midway, right] {$1 \rpl 1$} (q2) ;
    \draw [->] (q2) to[bend left=30] node [midway, left] {$0 \rpl 1$} (q1);
    \draw [->,>=stealth'] (q2) .. controls ($(q2)+(10mm,-5mm)$) and ($(q2)+(10mm,5mm)$) .. (q2) node [midway, right] {$1 \rpl 0$}; 
    \draw [<-,>=stealth'] (q1) .. controls ($(q1)+(10mm,-5mm)$) and ($(q1)+(10mm,5mm)$) .. (q1) node [midway, right] {$0 \rpl 0$}; 
  \end{tikzpicture}\vspace{-1ex}
  \vspace{-.2cm}
  \caption{An FST realising the difference of consecutive bits modulo~2.
    For example, $\morse = 01101001\cdots$ 
    is transformed in $\overline{\perioddoubling} = 1011101\cdots$
    where the overbar signifies inversion between $0$ and $1$.
  }
  \label{fig:fst:diff}
\end{figure}

The output given by a transition is allowed to be a \emph{word} over the output alphabet,
and not just a single letter or the empty word $\wordemp$, although that may also be the case.
Thereby finite-state transducers generalize the class of Mealy machines
that output precisely one letter in each step.

\xtend{%
  The transducer in Figure~\ref{fig:fst:diff} 
  computes the first difference of the input word.
  For example, it reduces the \thuemorse{} sequence $\morse$ to
  the inverted period doubling sequence $\overline{\perioddoubling}$: 
  \begin{align*}
   &&&&&& &\ 0\;\;\;1\;\;\;1\;\;\;0\;\;\;1\;\;\;0\;\;\;0\;\;\;1\;\;\ldots &&= \morse &&&&&&\\[-.5ex]
   &&&&&& \to&\ \;\;\;1\;\;\;0\;\;\;1\;\;\;1\;\;\;1\;\;\;0\;\;\;1\;\;\ldots &&= \overline{\perioddoubling}
  \end{align*}
  Formally, the transduction of words is defined as follows.
}%
We only consider sequential transducers and will simply speak of finite-state transducers henceforth.
\begin{definition}\normalfont
  Let $A = \sixtuple{\alphin}{\alphout}{\states}{\istate{0}}{\stransfun}{\soutfun}$
  be a finite-state transducer.
  We homomorphically extend the transition function $\delta$ to $Q \times \alphin^* \to Q$ as follows:
  for $q \in Q$, $a \in \alphin$, $u \in \alphin^*$ let
  $\delta(q,\emptyword) = q$ and $\delta(q,au) = \delta(\delta(q,a),u)$.
  We extend the output function~$\lambda$ to $Q \times \alphin^\infty \to \alphout^\infty$ as follows:
  for $q \in Q$, $a \in \alphin$, $u \in \alphin^\infty$, let
  $\lambda(q,\emptyword) = \emptyword$ and $\lambda(q,au) = \lambda(q,a) \cdot \lambda(\delta(q,a),u)$.
%   We homomorphically extend the transition function $\delta$ to $Q \times \alphin^* \to Q$ by:
%   for $q \in Q$, $a \in \alphin$, $u \in \alphin^*$ let
%   \begin{align*}
%     \delta(q,\emptyword) & = q 
%     & \delta(q,au) & = \delta(\delta(q,a),u)
%   \end{align*}
%   We extend the output function~$\lambda$ to $Q \times \alphin^\infty \to \alphout^\infty$ by
%   \begin{align*}
%     \lambda(q,\emptyword) & = \emptyword 
%     & \lambda(q,au) & = \lambda(q,a) \cdot \lambda(\delta(q,a),u)
%   \end{align*}
%   for $q \in Q$, $a \in \alphin$, $u \in \alphin^\infty$.
\end{definition}
\xtend{%
  We note that finite-state transducers can be viewed as productive term rewrite systems~\cite{endr:hend:2011}
  and the transduction of infinite words as infinitary rewriting~\cite{endr:hans:hend:polo:silv:2015}.
}

\section{Transducer Degrees}\label{sec:degrees}

We now explain how FSTs
give rise to a hierarchy of degrees of infinite words, called transducer degrees.
First, we formally introduce the
transducibility relation $\ge$ on words as realised by FSTs.

\begin{definition}\normalfont
  Let $\astr \in \str{\Sigma}$, $\bstr \in \str{\Gamma}$ for finite alphabets $\Sigma$,~$\Gamma$.
  Let $A = \sixtuple{\alphin}{\alphout}{\states}{\istate{0}}{\stransfun}{\soutfun}$
  be a FST. 
  We write $\astr \red{A} \bstr$ if $\bstr = \outfun{q_0}{\astr}$.
  We write 
  $\astr \fstred \bstr$, and say that $u$ is a \emph{transduct} of $w$,
  if there exists a FST $A$ such that $\astr \red{A} \bstr$. 
\end{definition}

Note that the transducibility relation $\fstred$ is a pre-order.
It thus induces a partial order of `degrees', 
the equivalence classes with respect to $\red{} \cap \redi{}$.
We denote equivalence using $\fstconv$.
It is not difficult to see that every word over a finite alphabet
is equivalent to a word over the alphabet $\two = \{\,0,1\,\}$. 
Thus every degree contains a representative from $\str{\two}$. 
For the study of transducer degrees it suffices therefore to consider words over the latter alphabet.

\begin{definition}\normalfont
  Define the equivalence relation 
  $\fstconv \;=\; (\red{} \cap \redi{})$.
  The \emph{(transducer) degree} $\convclass{\astr}$ of an infinite word $\astr$ 
  is the equivalence class of~$\astr$ with respect to $\fstconv$,
  that is, $\convclass{\astr} = \{\bstr \in \words \mid \astr \fstconv \bstr\}$.
%   \begin{align*}
%     \convclass{\astr} = \{\bstr \in \words \mid \astr \fstconv \bstr\}
%   \end{align*}
%   
%   For a set of words $X \subseteq \words$,
%   we write $\convclassset{X}$ to denote the set of degrees 
%   $\{\convclass{\astr} \mid \astr \in X\}$.
  We write $\convclassset{\words}$ to denote the set of degrees 
  $\{\convclass{\astr} \mid \astr \in \str{\two}\}$.

  The \emph{transducer degrees} form the partial order $\pair{\convclassset{\words}}{\fstred}$\xtend{%
    \footnote{%
    We note that finite state transducers transform infinite words to finite or infinite words.
    The result of the transformation is finite
    if the transducer outputs the empty word~$\wordemp$ for all except a finite number of letters of the input word.
    We are interested in infinite words only, since the set of finite words would merely entail
    two spurious extra sub-bottom degrees in the hierarchy of transducer degrees.
    }%
  },
  induced by the pre-order $\fstred$ on $\words$, that is, for words $\astr,\bstr \in \words$ we have
  $\convclass{\astr} \fstred \convclass{\bstr} \iff \astr \fstred \bstr$.
\end{definition}

The \emph{bottom degree $\botdeg$} is 
the least degree of the hierarchy,
that is, the unique degree $\deg{a} \in \convclassset{\words}$ 
such that $\deg{a} \fstredi \deg{b}$ for every $\deg{b} \in \convclassset{\words}$;
it
consists of the ultimately periodic words,
that is, words of the form $uvvv\cdots$ for finite words $u,v$ where $v \ne \emptyword$.

% An atom is a degree that has only $\botdeg$ below itself.
\begin{definition}\normalfont
%   A \emph{minimal cover} of a degree $\deg{a} \in \convclass{\words}$ is a degree $\deg{b} \in \convclass{\words}$ such that
%   $\deg{a} \fstredistrict \deg{b}$ 
%   and there exists no degree strictly between $\deg{a}$ and $\deg{b}$. 
  An \emph{atom} is a minimal non-bottom degree,
  that~is, a degree $\deg{a} \in \convclassset{\words}$ such that 
  $\botdeg \fstredistrict \deg{a}$ and 
  there exists no $\deg{b} \in \convclassset{\words}$ 
  with $\botdeg \fstredistrict \deg{b} \fstredistrict \deg{a}$.
\end{definition} 

% An atom is a minimal cover of the bottom degree $\botdeg$.
% \begin{definition}
%   A \emph{minimal cover} of a degree $\deg{a} \in \convclass{\words}$ is a degree $\deg{b} \in \convclass{\words}$ such that
%   $\deg{a} \fstredistrict \deg{b}$ 
%   and there exists no degree strictly between $\deg{a}$ and $\deg{b}$. 
% \end{definition} 
% % 

Although FSTs are very simple and elegant devices,
we hardly understand their power for transforming words~\cite{endr:klop:saar:whit:2015}.
No methods are available to answer simple questions such as:
\newcommand{\xx}[1]{\underline{\hphantom{#1}}}
\begin{enumerate}\setlength{\itemsep}{.5ex}
  \item [\myquestion] \qlabel{q:perioddoubling}
    Consider the period-doubling sequence $\perioddoubling$ and drop every third element%
    \xreplace{,
      resulting in $w = 01\xx{0}0 \;0\xx{1}01 \;\xx{0}10\xx{0} \;01\xx{0}0 \;0\xx{1}00\cdots$.
    }{:
    \begin{samepage}
    \begin{align*}
      \perioddoubling &= 0100 \;0101 \;0100 \;0100 \;0100\cdots\\
      w &= 01\xx{0}0 \;0\xx{1}01 \;\xx{0}10\xx{0} \;01\xx{0}0 \;0\xx{1}00\cdots
    \end{align*}
    \end{samepage}
    }
    Obviously we have $\perioddoubling \fstred w$. 
    Do we have $w \fstred \perioddoubling$?
    
  \item [\myquestion] \qlabel{q:mephisto}
    Are the degrees of \thuemorse{}~$\morse$ and Mephisto Waltz~$\mephisto$ incomparable?
\end{enumerate}

\section{Spiralling Words}\label{sec:tools}

We now consider \emph{spiralling words} over the alphabet $\two = \{0,1\}$
for which the distance of consecutive $1$'s in the word grows to infinity.
We additionally require that the sequence of distances between consecutive $1$'s
is ultimately periodic modulo every natural number.
The class of spiralling words
permits a characterisation of their transducts
in terms of weighted products.
% Here we strengthen results from~\cite{endr:grab:hend:zant:2015}
% in order to also reason about non-transducibility.

For a function $f : \nat \to \nat$, we define $\seq{f} \in \str{\two}$ by
\begin{align*}
  \seq{f} = \textstyle\prod_{i = 0}^{\infty} 10^{f(i)} = 1 0^{f(0)} \, 1 0^{f(1)} \, 1 0^{f(2)} \, \cdots\,\;.
\end{align*}%
We write $\seq{f(n)}$ as shorthand for~$\seq{n \mapsto f(n)}$.

\begin{example}\label{ex:fuse}
  As an example of a transduction between sporadic words, 
  to get a feeling of what finite-state transducers can do on such words, consider
  \begin{math}
    \seq{n^3} \red{A} \seq{(2n)^3 + (2n+1)^3}
  \end{math}.
  Here the transducer $A$ removes the $1$ between the
  appropriate consecutive blocks of $0$'s, as in:
  \(%\begin{align*}
    1\, 10\, 10^8\, 10^{27}\, 10^{64}\, 10^{125}\ 1 \cdots \fstred 10\, 1 0^{(8+27)}\, 1 0^{(64+125)}\, 1 \cdots
  \). %\end{align*}
  It is easy to determine the two-state transducer $A$ that removes the $1$'s
  at the right places.
%   (Note: This transduction, for general $k$ instead of $3$, will be used later in the proof of Theorem~\ref{thm:not:atom}.)
\end{example}

\begin{definition}\label{def:spiralling}\normalfont
  A function $f : \nat \to \nat$ 
  is called \emph{spiralling}
  if 
  \begin{enumerate}
    \item{}\label{def:spiralling:item:i}
      $\lim_{n \to \infty} f(n) = \infty$, and
    \item{}\label{def:spiralling:item:ii} 
      for every $m \ge 1$, the function $n \mapsto f(n) \bmod m$ is ultimately periodic.
  \end{enumerate}
  A word $\seq{f}$ is called \emph{spiralling} whenever $f$ is spiralling.
\end{definition}
\noindent
For example, $\seq{p(n)}$ is spiralling for every polynomial $p(n)$ with natural numbers as coefficients.\footnote{%
  The identity function and constants functions are spiralling.
  Moreover, the class of spiralling functions is closed under addition and multiplication.
  From this it follows that polynomials with natural numbers as coefficients are spiralling.
}
Spiralling functions are called `cyclically ultimately periodic' in the literature~\cite{bers:boas:cart:peta:pin:2006}. % bers:boas:cart:peta:pin:2003
For a tuple $\vec{\alpha} = \tup{\alpha_0,\ldots,\alpha_{m}}$, we define
\begin{itemize}
  \item the \emph{length} $|\vec{\alpha}| = m+1$, and
  \item its \emph{rotation} by $\vec{\alpha}' = \tup{\alpha_1,\ldots ,\alpha_{m},\alpha_0}$.
\end{itemize}
Let $A$ be a set and $f : \nat \to A$ a function.
We write $\shift{k}{f}$ for the \emph{$k$-th shift} of $f$ 
defined by $\shift{k}{f}(n) = f(n+k)$.

We use `weights' to represent linear functions.
\begin{definition}\label{def:weight}\normalfont
  A \emph{weight} $\vec{\alpha}$ is a tuple $\tup{a_0,\ldots,a_{k-1},b} \in \rat^{k+1}$ 
  of rational numbers such that $k \in \nat$ and $a_0,\ldots,a_{k-1} \ge 0$.
  The weight $\vec{\alpha}$ is called 
  \begin{itemize}
    \item \emph{non-constant} if $a_i \ne 0$ for some $i < k$, else \emph{constant},
    \item \emph{strongly non-constant} if $a_i,a_j \ne 0$ for some $i < j < k$.
  \end{itemize}
  Now let us also consider a tuple of tuples.
  For a tuple $\vec{\alpha} = \tup{\vec{\alpha_0},\ldots,\vec{\alpha_{m-1}}}$ of weights 
  we define 
  $%\begin{align*}
    \sumlength{\vec{\alpha}} = \textstyle\sum_{i = 0}^{m-1} (\,\length{\vec{\alpha_i}} - 1\,) \;.
  $%\end{align*}
\end{definition}

\begin{definition}\label{def:weighted:product}\normalfont
  Let $f : \nat \to \rat$ be a function.
  For a weight~$\vec{\alpha} = \tup{a_0,\ldots,a_{k-1},b}$ 
  we define
  $\wof{\vec{\alpha}}{f} \in \rat$ by
  \,$%\begin{align*}
    \wof{\vec{\alpha}}{f} \;=\; a_0 f(0) + a_1 f(1) + \cdots + a_{k-1} f(k-1) + b \,.
  $ %\end{align*}
  For a tuple of weights $\vec{\alpha} = \tup{\vec{\alpha_0},\vec{\alpha_1},\ldots, \vec{\alpha_{m-1}}}$,
  we define the \emph{weighted product} % of $\vec{\alpha}$ and $f$ is a function 
  $\wprod{\vec{\alpha}}{f} : \nat \to \rat$
  by induction on $n$: 
  \begin{align*}
    (\wprod{\vec{\alpha}}{f})(0) & = \vec{\alpha_0} \cdot f \\
    (\wprod{\vec{\alpha}}{f})(n+1) & = (\wprod{\vec{\alpha}'}{\shift{|\vec{\alpha_0}|-1}{f}})(n) && (n \in \nat)
  \end{align*}
  We say that $\wprod{\vec{\alpha}}{f}$ is a \emph{natural weighted product} 
  if $(\wprod{\vec{\alpha}}{f})(n) \in \nat$ for all $n \in \nat$. 
\end{definition}

Weighted products are easiest understood by examples.
\begin{example}\label{ex:wprod}
  Let $f(n) = n^2$ be a function and 
  $\vec{\alpha} = \tup{\vec{\alpha_0},\vec{\alpha_1}}$ a tuple of weights 
  with $\vec{\alpha_0} = \tup{1,2,3,4}$, $\vec{\alpha_1} = \tup{0,1,1}$. 
  Then the weighted product $\wprod{\vec{\alpha}}{f}$ can be visualised as follows
  \begin{center}\vspace{-0.5ex}
    \begin{tikzpicture}[scale=0.95,nodes={scale=1}]
      \node at (-.5cm,0) [anchor=east] {$f$};
      \node at (10.1*.8cm,0) [anchor=east] {$\cdots$};
      \foreach \i/\j in {0/0,1/1,2/4,3/9,4/16,5/25,6/36,7/49,8/64,9/81} {
        \node (\i) at (\i*.8cm,0) {\j};
      }
      \node at (-.2cm,-1.2cm) [anchor=east] {$\wprod{\vec{\alpha}}{f}$};
      \node at (10.1*.8cm,-1.2cm) [anchor=east] {$\cdots$};
      \foreach \i/\x/\v in {0/1/18,1/3.5/17,2/6/248,3/8.5/82} {
        \node (s\i) at (\x*.8cm,-1.2cm) {\v};
      }
      \begin{scope}[inner sep=0,->,nodes={scale=.8}]
      \draw (0) -- node [xshift=3mm,pos=.3] {$\times1$} (s0); 
      \draw (1) -- node [xshift=2.5mm,pos=.3] {$\times2$} (s0); 
      \draw (2) -- node [xshift=3mm,pos=.3] {$\times3$} node [xshift=3mm,pos=.8] {$+4$} (s0); 
      \draw (3) -- node [xshift=3mm,pos=.3] {$\times0$} (s1); 
      \draw (4) -- node [xshift=3mm,pos=.3] {$\times1$} node [xshift=3mm,pos=.8] {$+1$} (s1); 
      \draw (5) -- node [xshift=3mm,pos=.3] {$\times1$} (s2); 
      \draw (6) -- node [xshift=2.5mm,pos=.3] {$\times2$} (s2); 
      \draw (7) -- node [xshift=3mm,pos=.3] {$\times3$} node [xshift=3mm,pos=.8] {$+4$} (s2); 
      \draw (8) -- node [xshift=3mm,pos=.3] {$\times0$} (s3); 
      \draw (9) -- node [xshift=3mm,pos=.3] {$\times1$} node [xshift=3mm,pos=.8] {$+1$} (s3); 
      \end{scope}
    \end{tikzpicture}
  \end{center}
  Intuitively, the weight $\vec{\alpha_0} = \tup{1,2,3,4}$ 
  means that $3$ consecutive entries are added while 
  being multiplied by $1$, $2$ and $3$, respectively,
  and $4$ is added to the result.
%   Likewise, the weight $\vec{\alpha_2} = \tup{0,1,1}$
%   means that two consecutive blocks are merged while being multiplied by $0$ and $1$, respectively,
%   and $1$ is added to the result.
\end{example}

We introduce a few operations on weights.
We define scalar multiplication of weights in the obvious way.
We also introduce a multiplication $\odot$ that affects only the last entry of weights
(the constant term).
\begin{definition}\normalfont
  Let $c \in \rat_{\ge 0}$, $\vec{\alpha} = \tup{a_{0},\ldots,a_{\ell-1},b}$ a weight,
  $\vec{\beta} = \tup{\vec{\beta_0},\ldots,\vec{\beta_{m-1}}}$ a tuple of weights.
  We define 
%   $c\alpha = \tup{ca_0,\ldots,ca_{\ell-1},cb}$ and $c\vec{\beta} = \tup{c\beta_0,\ldots,c\beta_{m-1}}$.
  \begin{align*}
    c\vec{\alpha} &= \tup{ca_0,\ldots,ca_{\ell-1},cb} &
    \vec{\alpha} \odot c &= \tup{a_0,\ldots,a_{\ell-1},bc} \\
    c\vec{\beta} &= \tup{c\vec{\beta_0},\ldots,c\vec{\beta_{m-1}}} &
    \vec{\beta} \odot c &= \tup{\vec{\beta_0} \odot c,\ldots,\vec{\beta_{m-1}} \odot c}
  \end{align*}
% \end{definition}
% 
% We also introduce a multiplication that affects only the last entry of weights
% (the constant term).
% \begin{definition}
%   Let $c \in \rat_{\ge 0}$, $\alpha = \tup{a_{0},\ldots,a_{\ell-1},b}$ a weight,
%   and $\vec{\beta} = \tup{\beta_0,\ldots,\beta_{m-1}}$ a tuple of weights.
%   Then we define
%   $\alpha \odot c = \tup{a_0,\ldots,a_{\ell-1},bc}$ and
%   $\vec{\beta} \odot c = \tup{\beta_0 \odot c,\ldots,\beta_{m-1} \odot c}$.
%   \begin{align*}
%     \alpha \odot c &= \tup{a_0,\ldots,a_{\ell-1},bc} \\
%     \vec{\beta} \odot c &= \tup{\beta_0 \odot c,\ldots,\beta_{m-1} \odot c}
%   \end{align*}
\end{definition}

The next lemma follows directly from the definitions. % of weighted products.
\begin{lemma}\label{lem:wprod:mul}
  Let $c \in \rat_{\ge 0}$, 
  $\vec{\alpha}$ %= \tup{\vec{\alpha_0},\ldots,\vec{\alpha_{m-1}}}$ 
  a tuple of weights,
  and $f : \nat \to \rat$ a function.
  Then 
  $c(\wprod{\vec{\alpha}}{f}) = \wprod{(c\vec{\alpha})}{f} = \wprod{(\vec{\alpha} \odot c)}{(cf)}$.
  \qed
\end{lemma}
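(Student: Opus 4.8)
The plan is to unfold the definitions and verify the two claimed identities separately, each by induction on the length $m$ of the tuple of weights $\vec{\alpha} = \tup{\vec{\alpha_0},\ldots,\vec{\alpha_{m-1}}}$, or equivalently by induction on the argument $n$ at which we evaluate the weighted products. I would first handle the base case, which reduces to a statement about a single weight $\vec{\alpha_0} = \tup{a_0,\ldots,a_{k-1},b}$ and the scalar $\wof{\vec{\alpha_0}}{f} = a_0 f(0) + \cdots + a_{k-1} f(k-1) + b$. Here the identity $c\,(\wof{\vec{\alpha_0}}{f}) = \wof{(c\vec{\alpha_0})}{f} = \wof{(\vec{\alpha_0}\odot c)}{(cf)}$ is immediate: distributing $c$ over the sum gives $ca_0 f(0) + \cdots + ca_{k-1}f(k-1) + cb$, which is exactly $\wof{(c\vec{\alpha_0})}{f}$; and regrouping the same expression as $a_0(cf)(0) + \cdots + a_{k-1}(cf)(k-1) + bc$ is exactly $\wof{(\vec{\alpha_0}\odot c)}{(cf)}$, using that $(\vec{\alpha_0}\odot c)$ scales only the constant term $b$ to $bc$ while leaving $a_0,\ldots,a_{k-1}$ untouched.

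For the inductive step I would evaluate at $n+1$. By Definition~\ref{def:weighted:product},
\begin{align*}
  (\wprod{\vec{\alpha}}{f})(n+1)
  &= (\wprod{\vec{\alpha}'}{\shift{|\vec{\alpha_0}|-1}{f}})(n).
\end{align*}
Note that rotating $\vec{\alpha}$ does not change $\|\vec{\alpha}\|$ and that scalar multiplication $c(\cdot)$ and the operation $\odot\,c$ both commute with rotation (they act entrywise on the component weights), and that the shift $\shift{k}{\cdot}$ commutes with scalar multiplication of functions, i.e. $\shift{k}{(cf)} = c\,\shift{k}{f}$, and also $|(c\vec{\alpha})_0| = |\vec{\alpha_0}|$ since scaling does not change the arity of a weight. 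Hence applying the induction hypothesis to the shorter-argument instance $(\wprod{\vec{\alpha}'}{\shift{|\vec{\alpha_0}|-1}{f}})(n)$ yields
\begin{align*}
  c\,(\wprod{\vec{\alpha}'}{\shift{|\vec{\alpha_0}|-1}{f}})(n)
  &= (\wprod{(c\vec{\alpha}')}{\shift{|\vec{\alpha_0}|-1}{f}})(n)
   = (\wprod{((c\vec{\alpha})')}{\shift{|(c\vec{\alpha})_0|-1}{f}})(n)
   = (\wprod{(c\vec{\alpha})}{f})(n+1),
\end{align*}
and similarly
\begin{align*}
  c\,(\wprod{\vec{\alpha}'}{\shift{|\vec{\alpha_0}|-1}{f}})(n)
  &= (\wprod{(\vec{\alpha}'\odot c)}{(c\,\shift{|\vec{\alpha_0}|-1}{f})})(n)
   = (\wprod{((\vec{\alpha}\odot c)')}{\shift{|\vec{\alpha_0}|-1}{(cf)}})(n)
   = (\wprod{(\vec{\alpha}\odot c)}{(cf)})(n+1).
\end{align*}
Combining the two displays gives $c\,(\wprod{\vec{\alpha}}{f})(n+1) = (\wprod{(c\vec{\alpha})}{f})(n+1) = (\wprod{(\vec{\alpha}\odot c)}{(cf)})(n+1)$, completing the induction.

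None of this is genuinely hard; the only point requiring a bit of care — and the step I would flag as the main bookkeeping obstacle — is checking that all the auxiliary operations (rotation $(\cdot)'$, the shift $\shift{k}{\cdot}$ on functions, scalar multiplication on weights and on functions, and $\odot\,c$) commute with one another in exactly the way the recursion needs, and in particular that the shift amount $|\vec{\alpha_0}|-1$ is invariant under scaling, so that the induction hypothesis applies to precisely the right instance. Once those commutations are recorded, the lemma follows directly, as the paper indeed asserts.
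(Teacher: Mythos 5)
Your proof is correct and is essentially just a careful write\nobreakdash-out of what the paper itself dismisses with ``follows directly from the definitions'': the base case is the distributivity computation for a single weight, and the inductive step only needs the (componentwise, hence obvious) commutations of rotation, shift, scalar multiplication and $\odot\,c$ that you record. The one point worth keeping as you phrase it is that the induction is on $n$ with the hypothesis quantified over all tuples $\vec{\alpha}$ and all $f$ (since the tuple rotates and the function shifts at each step) -- induction on the tuple length alone would not go through -- and your write-up already does this correctly.
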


It is straightforward to define a \emph{composition} of tuples of weights
such that 
$\wprod{\vec{\beta}}{(\wprod{\vec{\alpha}}{f})} = \wprod{(\wprod{\vec{\beta}}{\vec{\alpha}})}{f}$
for every function $f : \nat \to \rat$.
Note that $\wprod{\vec{\alpha}}{f}$ is already defined.
For the precise definition of $\wprod{\vec{\beta}}{\vec{\alpha}}$, 
we refer \xreplace{to~\cite{arxiv}}{to Appendix~\ref{sec:composition}}.
We will employ the following two properties of composition.
\begin{lemma}\label{lem:wprod:compose}
  Let $\vec{\alpha},\vec{\beta}$ be tuples of weights.
  Then we have that
  \begin{math}
    \wprod{\vec{\beta}}{(\wprod{\vec{\alpha}}{f})} 
    = \wprod{(\wprod{\vec{\beta}}{\vec{\alpha}})}{f}
  \end{math}
  for every function $f : \nat \to \rat$.
  \qed
\end{lemma}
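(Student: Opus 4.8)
\textbf{Proof plan for Lemma~\ref{lem:wprod:compose}.}
The statement asserts the existence of a composition operation $\wprod{\vec{\beta}}{\vec{\alpha}}$ on tuples of weights that satisfies the associativity-type identity $\wprod{\vec{\beta}}{(\wprod{\vec{\alpha}}{f})} = \wprod{(\wprod{\vec{\beta}}{\vec{\alpha}})}{f}$ for every $f : \nat \to \rat$. The plan is to first give the explicit construction of $\wprod{\vec{\beta}}{\vec{\alpha}}$ (deferred in the main text to the appendix) and then verify the identity by induction. The key observation is that a single weight $\vec{\alpha_i} = \tup{a_0,\ldots,a_{k-1},b}$ acts on $f$ by taking a window of $k$ consecutive values of $f$, forming the affine combination $\wof{\vec{\alpha_i}}{(\shift{j}{f})}$ for the appropriate offset $j$, and then the next weight in $\vec{\alpha}$ consumes the next window, shifted by $k-1$. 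Thus $(\wprod{\vec{\alpha}}{f})(n)$ is an affine function of finitely many values $f(j), f(j+1), \ldots$ whose coefficients are read off cyclically from the tuple $\vec{\alpha}$.

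First I would make precise how $(\wprod{\vec{\alpha}}{f})(n)$ depends on $f$: unfolding the recursion in Definition~\ref{def:weighted:product}, entry $n$ uses the weight $\vec{\alpha_{n \bmod m}}$ (where $m = \length{\vec{\alpha}}$) applied to the window of $f$ starting at position $s(n) = \sum_{i=0}^{n-1}(\length{\vec{\alpha_{i \bmod m}}}-1)$. So $(\wprod{\vec{\alpha}}{f})(n) = \sum_{t} a^{(n)}_t \, f(s(n)+t) + b^{(n)}$, an affine form. Next, to define $\wprod{\vec{\beta}}{\vec{\alpha}}$, I would substitute: $(\wprod{\vec{\beta}}{(\wprod{\vec{\alpha}}{f})})(n)$ is an affine combination of consecutive values of $g := \wprod{\vec{\alpha}}{f}$, each of which is in turn an affine combination of consecutive values of $f$; composing these two affine maps yields an affine combination of consecutive values of $f$, and one reads off the coefficients to obtain a new tuple of weights — this is the definition of $\wprod{\vec{\beta}}{\vec{\alpha}}$. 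The nonnegativity condition $a_i \ge 0$ is preserved because products and sums of nonnegative rationals are nonnegative. One must be slightly careful that the windows line up: a block of consecutive $g$-values corresponds to a block of consecutive $f$-values precisely because the offsets $s(n)$ telescope, so the construction is well-defined and periodic in the index with period $\mathrm{lcm}$ of the relevant lengths (or simply $\length{\vec{\beta}}$ many blocks of $\vec{\alpha}$).

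Then the identity $\wprod{\vec{\beta}}{(\wprod{\vec{\alpha}}{f})} = \wprod{(\wprod{\vec{\beta}}{\vec{\alpha}})}{f}$ holds essentially by construction: both sides are, evaluated at $n$, the same affine form in the values of $f$. I would carry this out by induction on $n$, using the recursion $(\wprod{\vec{\gamma}}{h})(n+1) = (\wprod{\vec{\gamma}'}{\shift{\length{\vec{\gamma_0}}-1}{h}})(n)$ from Definition~\ref{def:weighted:product} and the obvious compatibility $\shift{k}{(\wprod{\vec{\alpha}}{f})} = \wprod{\vec{\alpha}^{(\ast)}}{(\shift{k'}{f})}$ for the rotated/shifted versions, where $k'$ accounts for the cumulative window length consumed. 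The base case $n=0$ is the computation $(\wprod{\vec{\beta}}{(\wprod{\vec{\alpha}}{f})})(0) = \vec{\beta_0} \cdot (\wprod{\vec{\alpha}}{f})$, which expands to exactly $(\wprod{\vec{\beta}}{\vec{\alpha}})_0 \cdot f$ by the definition of the composed first block.

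\textbf{Main obstacle.} I expect the bookkeeping of offsets to be the only real difficulty: keeping track of \emph{which} window of $f$ corresponds to a given window of $g = \wprod{\vec{\alpha}}{f}$, and ensuring that when we rotate $\vec{\beta}$ we correspondingly shift $\vec{\alpha}$ by the \emph{right} amount ($\length{\vec{\beta_0}}$ consecutive blocks of $\vec{\alpha}$, whose total consumed length is $\sum_{i<\length{\vec{\beta_0}}}(\length{\vec{\alpha_{i \bmod \length{\vec{\alpha}}}}}-1)$, though some care is needed if $\vec{\beta_0}$ has constant entries producing windows that do not advance). There is no conceptual subtlety — affine maps compose to affine maps and nonnegativity is preserved — so the proof is routine once the indexing conventions are fixed; this is exactly why the authors relegate the explicit definition to the appendix and state the lemma with a bare \qed.
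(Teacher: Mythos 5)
Your plan is correct and matches what the paper actually does: the paper gives no explicit proof of this lemma either, but instead defines the composition $\wprod{\vec{\beta}}{\vec{\alpha}}$ in the appendix by unfolding both tuples to a common period (the lcm alignment you describe) and composing the affine forms blockwise, after which the identity holds by construction. Your bookkeeping of window offsets, the telescoping of shifts, the preservation of nonnegativity of the linear coefficients, and the caveat about length-one (constant) weights that consume no input are all consistent with the paper's Definitions in Appendix~\ref{sec:composition}.
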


\begin{lemma}\label{lem:wprod:double:non:constant}
  Let $\vec{\alpha}$ be tuple of weights, and $\vec{\beta}$ a tuple of strongly non-constant weights.
  Then $\wprod{\vec{\alpha}}{\vec{\beta}}$ is of the form $\tup{\gamma_0,\ldots,\gamma_{k-1}}$
  such that for every $i \in \nat_{<k}$, the weight~$\gamma_i$ is either constant or strongly non-constant.
  \qed
\end{lemma}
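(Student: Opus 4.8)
The plan is to unwind the definition of the composition $\wprod{\vec{\alpha}}{\vec{\beta}}$ from Appendix~\ref{sec:composition} and inspect each of its component weights separately. By Lemma~\ref{lem:wprod:compose}, the tuple $\vec{\gamma} = \wprod{\vec{\alpha}}{\vec{\beta}} = \tup{\gamma_0,\ldots,\gamma_{k-1}}$ is characterised by $\wprod{\vec{\gamma}}{f} = \wprod{\vec{\alpha}}{(\wprod{\vec{\beta}}{f})}$ for every $f : \nat \to \rat$. Writing $g = \wprod{\vec{\beta}}{f}$, the recursion defining the weighted product shows that $g(n) = \vec{\beta_{j}} \cdot \shift{s}{f}$ for the weight $\vec{\beta_j}$ of $\vec{\beta}$ used at step $n$ (the index $j$ cycling through the entries of $\vec{\beta}$) and a suitable shift $s$, and that the consecutive values $g(0), g(1), g(2), \ldots$ read pairwise disjoint, contiguous blocks of $f$-values, the block read by $g(n)$ having length $\length{\vec{\beta_j}} - 1$. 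Substituting this into $\wprod{\vec{\alpha}}{g}$ exhibits every component $\gamma_i$ as a single weight $\vec{\alpha_p} = \tup{a_0,\ldots,a_{\ell-1},b}$ of $\vec{\alpha}$ ``expanded'' over the $\ell$ consecutive, cyclically chosen weights $\vec{\beta_q},\ldots,\vec{\beta_{q+\ell-1}}$ of $\vec{\beta}$: the non-constant part of $\gamma_i$ is the concatenation of the blocks $a_0 \widehat{\vec{\beta_q}}, a_1 \widehat{\vec{\beta_{q+1}}}, \ldots, a_{\ell-1} \widehat{\vec{\beta_{q+\ell-1}}}$, where $\widehat{\vec{\beta}}$ denotes the coefficient prefix of a weight $\vec{\beta}$, and the constant term of $\gamma_i$ is $b + \sum_{t<\ell} a_t \cdot (\text{constant term of } \vec{\beta_{q+t}})$. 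Turning Lemma~\ref{lem:wprod:compose} together with the appendix definition into a clean, verifiable statement of exactly this block structure is where I expect most of the effort to go.

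With the block structure at hand, the conclusion drops out of a one-line count. Fix $i$. Since the blocks $a_t \widehat{\vec{\beta_{q+t}}}$ occupy disjoint position ranges inside $\gamma_i$, the number of nonzero coefficients of $\gamma_i$ equals $\sum_{t<\ell} \bigl(\text{number of nonzero entries of } a_t \widehat{\vec{\beta_{q+t}}}\bigr)$. If $a_t = 0$, the $t$-th summand is $0$; if $a_t \ne 0$, then $a_t$ is positive (weights have nonnegative coefficients), and because $\vec{\beta_{q+t}}$ is \emph{strongly} non-constant, $\widehat{\vec{\beta_{q+t}}}$ — hence also $a_t \widehat{\vec{\beta_{q+t}}}$ — has at least two nonzero entries, so the $t$-th summand is at least $2$. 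Therefore this sum is either $0$, namely when all $a_t = 0$ (in particular when $\ell = 0$, i.e.\ $\vec{\alpha_p}$ is constant), in which case $\gamma_i$ is constant; or it is at least $2$, in which case $\gamma_i$ is strongly non-constant. As $i$ was arbitrary, every component of $\wprod{\vec{\alpha}}{\vec{\beta}}$ is constant or strongly non-constant, which is the claim.

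One remark on where the hypothesis bites: \emph{strong} non-constancy of the entries of $\vec{\beta}$ is used precisely to rule out a component $\gamma_i$ whose only nonzero block $a_t \widehat{\vec{\beta_{q+t}}}$ stems from a single surviving coefficient $a_t$; were the $\vec{\beta}$-weights merely non-constant, such a block — and hence $\gamma_i$ — could carry just one nonzero entry, making $\gamma_i$ non-constant without being strongly non-constant. (The remaining case of two or more surviving $a_t$ would already follow from ordinary non-constancy, as two distinct surviving blocks then contribute nonzero entries in disjoint position ranges.) Nothing beyond $\vec{\alpha}$ being a tuple of weights is needed on the outer side.
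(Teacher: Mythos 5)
Your proposal is correct and follows essentially the same route as the paper, whose proof is the one-line observation that every weight in $\wprod{\vec{\alpha}}{\vec{\beta}}$ is a concatenation of scalar multiples of weights of $\vec{\beta}$; your counting argument (zero nonzero coefficients if all scalars vanish, at least two otherwise by strong non-constancy) is exactly the dichotomy the paper leaves implicit. The extra detail on the block structure is a faithful unwinding of Definitions~\ref{def:prod:weight:weights} and~\ref{def:wprod:compose}, so nothing further is needed.
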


% \begin{definition}\label{def:doubleprod}
%   Let $f : \nat \to \nat$ be a function, and, for some $m > 0$, let $\vec{\alpha}$ be an $m$-tuple of weights
%   such that $\wprod{\vec{\alpha}}{f}$ is a natural weighted product.
%   Let $\vec{p}$ and~$\vec{c}$\, be $m$-tuples of finite words. 
%   We define the sequence $\dpf{f,\vec{\alpha},\vec{p},\vec{c}} \in \str{\two}$ by
%   \begin{align*}
%     \dpf{f,\vec{\alpha},\vec{p},\vec{c}}
%     = \prod_{i = 0}^{\infty} \prod_{j = 0}^{m-1} p_j \, c_j^{\cyc{i}{j}}
%     && \text{where} &&
%     \cyc{i}{j} = (\wprod{\vec{\alpha}}{f})(mi + j) \,.
%   \end{align*}
% \end{definition}

% The following lemma, theorem and proposition are from~\cite{endr:grab:hend:zant:2015}.

We need a few results on weighted products from~\cite{endr:grab:hend:zant:2015}. %  and transduction of spiralling words.
The following lemma states that
every natural weighted product (see Definition~\ref{def:weighted:product}) can be realised by a FST.
\begin{lemma}[\cite{endr:grab:hend:zant:2015}]\label{lem:wprod:FST}
  Let $f : \nat \to \nat$, and %let $\vec{\alpha}$ be 
  $\vec{\alpha}$ a tuple of weights.
  If ${\wprod{\vec{\alpha}}{f}}$ is a natural weighted product (i.e., $\forall n \in \nat.\; (\wprod{\vec{\alpha}}{f})(n) \in \nat$), then 
  $\seq{f} \fstred \seq{\wprod{\vec{\alpha}}{f}}$.
  \qed
\end{lemma}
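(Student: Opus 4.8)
The plan is to construct, for a given tuple of weights $\vec{\alpha} = \tup{\vec{\alpha_0},\ldots,\vec{\alpha_{m-1}}}$ with $\wprod{\vec{\alpha}}{f}$ natural, a finite-state transducer $A$ that realises the transduction $\seq{f} \fstred \seq{\wprod{\vec{\alpha}}{f}}$. The key observation is that the input word $\seq{f} = 10^{f(0)}10^{f(1)}10^{f(2)}\cdots$ is a sequence of blocks $10^{f(i)}$, and the output $\seq{\wprod{\vec{\alpha}}{f}}$ is a sequence of blocks $10^{(\wprod{\vec{\alpha}}{f})(j)}$, where each output block aggregates a contiguous run of input blocks according to one of the weights $\vec{\alpha_0},\ldots,\vec{\alpha_{m-1}}$, cycling through these weights in order. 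So the transducer should process the input in "rounds": in round $j \bmod m$ it reads the next $|\vec{\alpha_{j \bmod m}}| - 1$ input blocks $10^{f(k)}, 10^{f(k+1)}, \ldots$, multiplies the lengths $f(k), f(k+1), \ldots$ by the respective coefficients $a_0, a_1, \ldots$ of that weight, sums them, adds the constant term $b$, and emits a single output block $1$ followed by that many $0$'s.

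First I would reduce to the case where all coefficients and constant terms are \emph{natural numbers}. The weights have rational entries in general, but since $\wprod{\vec{\alpha}}{f}$ takes only natural values we can clear denominators: pick a common denominator $d$ of all entries appearing in $\vec{\alpha}$, and apply Lemma~\ref{lem:wprod:mul} to rewrite things in terms of $d\vec{\alpha}$ and $\frac{1}{d}$-scaled data — more precisely, one shows that after suitable bookkeeping the construction only needs to output, per input $0$, a fixed rational number of output $0$'s with bounded denominator, which a finite-state transducer can realise by spreading the output evenly (outputting a batch of $0$'s every $d$ input $0$'s, plus a correction from the constant term). Concretely, a transition on reading a single $0$ in the $i$-th slot of the current weight can output $\lfloor a_i \cdot t \rfloor - \lfloor a_i \cdot (t-1)\rfloor$ zeros, where $t$ counts the $0$'s seen so far in the current block; the fractional parts are absorbed into the finite control since $a_i$ has bounded denominator. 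The constant term $b$ (also with bounded denominator) is emitted at the block boundary, again spread out or batched as needed, using the fact that $\wprod{\vec\alpha}{f}$ is integer-valued so the accumulated fractional parts cancel at each block boundary.

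The state set of $A$ then records: (i) which weight $\vec{\alpha_{j \bmod m}}$ we are currently processing (finitely many choices, $m$ of them); (ii) which coefficient-slot $i < |\vec{\alpha_{j\bmod m}}|-1$ within that weight we are at (bounded); and (iii) the accumulated fractional remainder modulo $1$ arising from the rational coefficients (finitely many values, since denominators are bounded by $d$). Reading a $1$ triggers a transition to the next coefficient slot, or — if the current weight is exhausted — emits a $1$ to start a new output block, handles the constant term, and moves to weight $\vec{\alpha}_{(j+1)\bmod m}$. A careful but routine induction on $n$, matching the recursive Definition~\ref{def:weighted:product} of $\wprod{\vec{\alpha}}{f}$ against the round structure of $A$, shows $\seq{f} \red{A} \seq{\wprod{\vec{\alpha}}{f}}$.

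The main obstacle is the bookkeeping for the \emph{rational coefficients}: a transducer outputs whole words, so it cannot literally output "$a_i$ times the next symbol" when $a_i$ is a non-integer; the trick of carrying fractional remainders in the finite control and relying on the integrality of the final block lengths is what makes this work, and verifying that the remainders indeed stay within a finite set and reconcile exactly at every block boundary (so that no rounding error accumulates across infinitely many blocks) is the delicate part. Everything else — the cyclic scheduling of the weights and the block-by-block induction — is straightforward once the state space is set up correctly.
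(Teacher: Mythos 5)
The paper does not actually prove this lemma; it is imported from~\cite{endr:grab:hend:zant:2015} and stated with a \qed, so there is no in-paper proof to compare against. Your construction is the natural one and matches the standard argument: cycle through the weights in the finite control, aggregate $|\vec{\alpha_j}|-1$ consecutive input blocks per output block, and handle rational coefficients by emitting $\lfloor a_i t\rfloor - \lfloor a_i(t-1)\rfloor$ zeros per input zero while carrying the (finitely many, denominator-bounded) fractional remainders in the state. The scheduling, the state space, and the reconciliation of remainders at output-block boundaries (using that $(\wprod{\vec{\alpha}}{f})(n)\in\nat$) are all correct.

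There is one step that fails as written: you treat the boundary correction as something to be \emph{emitted}, but Definition~\ref{def:weight} allows the constant term $b$ to be negative (and the paper relies on this, e.g.\ the weight $\tup{\tup{1,-a_0/(a_kd^k)}}$ in the proof of Lemma~\ref{lem:epsilon}). If $b<0$, the eager floor-based emission can overshoot: the zeros already written may exceed the required block length $\sum_i a_i f(\cdot)+b$, and a transducer cannot retract output. The repair is standard but must be said: since the required correction is bounded in absolute value by a constant depending only on $\vec{\alpha}$ (at most $|\vec{\alpha_j}| + \lceil|b|\rceil$ or so), the transducer should emit zeros with a fixed bounded delay, keeping a counter of withheld zeros in the finite control, and only flush or discard them at the output-block boundary once the exact integer correction is known. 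With that amendment (and a remark on the degenerate constant weights of length $1$, which consume no input blocks), your proof goes through.
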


For the proof of Theorem~\ref{thm:transducts}, below, 
we use the following auxiliary lemma.
The lemma gives a detailed structural analysis, elaborated and explained in~\cite{endr:grab:hend:zant:2015},
of the transducts of a spiralling word $\seq{f}$.
% as an infinite concatenation of words \in 2*, , employing 
% arithmetical subsequences of a weighted product of f.
\begin{lemma}[\cite{endr:grab:hend:zant:2015}]\label{lem:disambiguate}
  Let $f : \nat \to \nat$ be a spiralling function, 
  and let $\sigma \in \str{\two}$ be 
  such that $\seq{f} \fstred \sigma$ and $\sigma \not\in\botdeg$.
  Then there exist % numbers
  $n_0, m \in \nat$, a word $w \in \two^*$,
  a tuple of weights~$\vec{\alpha}$,
  and tuples of finite words $\vec{p}$ and $\vec{c}$ 
  with $\length{\vec{\alpha}} = \length{\vec{p}} = \length{\vec{c}} = m > 0$
  such that
  $%\begin{align*}
    \sigma = w \cdot \prod_{i = 0}^{\infty} \prod_{j = 0}^{m-1} p_j \, c_j^{\cyc{i}{j}}
  $ %\end{align*}
  where $\cyc{i}{j} = (\wprod{\vec{\alpha}}{\shift{n_0}{f}})(mi + j)$, and
  \begin{enumerate}
%     \item % $\sigma$ is of the form 
%       $\sigma = w \cdot \dpf{\shift{n_0}{f},\vec{\alpha},\vec{p},\vec{c}}$,
%       \label{item:form}
    \item 
      $c_j^\omega \ne p_{j+1} c_{j+1}^\omega$ for every $j$ with $0 \le j < m-1$, 
      and $c_{m-1}^\omega \ne p_{0} c_{0}^\omega$, and
      \label{item:no:ambiguities}
    \item 
      $c_j \ne \emptyword$,
      and $\alpha_j$ is non-constant, for all $j \in \nat_{<m}$.
      \label{item:no:empty:cycles}
  \qed
  \end{enumerate}
\end{lemma}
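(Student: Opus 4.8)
The plan is to fix a transducer $A = \sixtuple{\two}{\alphout}{Q}{\istate{0}}{\delta}{\lambda}$ witnessing $\seq{f} \red{A} \sigma$, run it on the block factorisation $\seq{f} = B_0 B_1 B_2 \cdots$ with $B_i = 1 0^{f(i)}$, and read off the claimed shape of $\sigma$ directly from the run, the leverage being that spiralling forces everything to become eventually periodic. The first observation is that the state sequence at block boundaries stabilises: for each state $q$ the run of $A$ on $0^n$ from $q$ passes through an eventually periodic sequence of states, so there are a tail length $T_q$ and a period $L_q$ with $\delta(q,0^n)$ depending only on $q$ and $n \bmod L_q$ once $n \ge T_q$. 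Put $T = \max_q T_q$ and $L = \mathrm{lcm}\{L_q : q \in Q\}$. Since $\lim_{n\to\infty} f(n) = \infty$, there is $n_1$ with $f(i) \ge T$ for $i \ge n_1$, so from then on the state $q_{i+1}$ reached before $B_{i+1}$ is a fixed function of $q_i$ and $f(i) \bmod L$. By the second clause of Definition~\ref{def:spiralling}, $n \mapsto f(n) \bmod L$ is ultimately periodic; driving a deterministic recurrence by an eventually periodic input produces an eventually periodic sequence $(q_i)$, and one period $M$ serves for $f(\cdot) \bmod L$ as well. I take $n_0$ beyond which this period-$M$ behaviour holds, shifted forward so that $B_{n_0}$ already produces $f$-dependent output (possible unless $\sigma \in \botdeg$; see below). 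Then $q_i$ and $f(i) \bmod L$ depend only on $i \bmod M$ for $i \ge n_0$.

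Next I analyse a single block. Fix $i \ge n_0$, let $p = \delta(q_i, 1)$, $t = T_p$, $\ell = L_p$, and write $f(i) = t + \ell k_i + r_i$ with $0 \le r_i < \ell$. The output emitted while reading $B_i$ is $\lambda(q_i, 1)\,\lambda(p, 0^t)\,C^{k_i}\,\pi_{r_i}$, where $C$ is the word output along one full loop from the post-tail state, $\pi_r$ the prefix of $C$ output by the next $r$ loop edges, and $k_i = (f(i) - t - r_i)/\ell$. Since $r_i = (f(i) - t) \bmod \ell$ is periodic in $i$, the exponent $k_i$ is an \emph{exact} affine function of $f(i)$ — no floor — and $p, t, \ell, C, r_i, \pi_{r_i}$ depend only on $i \bmod M$. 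Using the conjugacy identity $C = \pi_r \rho_r \Rightarrow C^k \pi_r = \pi_r (\rho_r \pi_r)^k$, the block output rewrites as $P_i\, c_i^{k_i}$ with $P_i = \lambda(q_i, 1)\,\lambda(p, 0^t)\,\pi_{r_i}$ and $c_i = \rho_{r_i}\pi_{r_i}$, a rotation of $C$, again depending only on $i \bmod M$.

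Now I assemble the weighted product. Concatenating, $\sigma = \lambda(\istate{0}, B_0 \cdots B_{n_0 - 1}) \cdot \prod_{i \ge n_0} P_i c_i^{k_i}$. Cut the tail into consecutive runs of $M$ indices; inside one run the factors with $c_i \ne \emptyword$ sit at positions $0 = \iota_0 < \iota_1 < \cdots < \iota_{m-1}$ mod $M$ (using the shift of $n_0$), the others being fixed words, and $m > 0$. Folding into a finite prefix $w$ of $\sigma$ the output of the first $n_0$ blocks together with the output of a bounded number of further length-$M$ runs — this settles an off-by-one at the period boundary — and absorbing each run of constant factors into the $P$-prefix of the following factor with $c_i \ne \emptyword$, I obtain $\sigma = w \cdot \prod_{i=0}^\infty \prod_{j=0}^{m-1} p_j c_j^{\cyc{i}{j}}$ with all $p_j, c_j$ fixed and $c_j \ne \emptyword$. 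It remains to package the affine forms: let $\vec{\alpha_j}$ be the weight representing $g \mapsto (1/\ell_j)\, g(\iota_j) - (t_j + r_j)/\ell_j$, with length chosen so that $\vec{\alpha_0}, \ldots, \vec{\alpha_{m-1}}$ together consume exactly $M$ consecutive values of $g = \shift{n_0}{f}$. Then each $\vec{\alpha_j}$ is non-constant and, by Definition~\ref{def:weighted:product}, $\cyc{i}{j} = (\wprod{\vec\alpha}{\shift{n_0}{f}})(mi + j)$, which is the required form; condition~\ref{item:no:empty:cycles} holds by construction.

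Finally, disambiguation — where I expect the real work. To force condition~\ref{item:no:ambiguities}, suppose $c_j^\omega = p_{j+1} c_{j+1}^\omega$ (cyclically) for some $j$. Then $c_j$, $p_{j+1}$, $c_{j+1}$ are all compatible with one periodic pattern $z^\omega$, so the stretch $c_j^{\cyc{i}{j}} p_{j+1} c_{j+1}^{\cyc{i}{j+1}}$ is determined by that pattern and a length affine in the relevant values of $f$; after passing to the common refinement of $M$ with the periods of the finitely many extra moduli this introduces, I can merge the two factors into a single factor with cycle word a rotation of $z$ and with exponent again an affine function of $f$-values with nonnegative coefficients on the $f$-terms — that is, one non-constant weight. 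Each merge strictly decreases $m$, so the process terminates, and it can never bring $m$ to $0$ nor leave an all-degenerate configuration, since then $\sigma$ would be a fixed finite word followed by infinitely many copies of a single periodic word, hence ultimately periodic, contradicting $\sigma \notin \botdeg$. The hard part is precisely this last step: verifying that each merge keeps the representation in the weighted-product normal form built in the previous paragraph, and that $m$ stays positive throughout.
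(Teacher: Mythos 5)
The paper does not actually prove this lemma: it is imported from the cited reference and stated without proof, so there is no in-paper argument to compare yours against. Judged on its own terms, your reconstruction follows what is clearly the intended route --- run a fixed transducer block by block on $\seq{f} = \prod_{i} 10^{f(i)}$, use the eventual periodicity of $n \mapsto \delta(q,0^n)$ together with clause (ii) of the spiralling definition to make the per-block behaviour periodic in $i$ with some period $M$, pump each long $0$-block into a prefix-plus-cycle-power whose exponent is an \emph{exact} affine function of $f(i)$ (no floor, since the residue is locked by the period), and package the result as a weighted product with $\sumlength{\vec{\alpha}} = M$. This part is essentially right, including the correct observation that a weight's constant term may be negative while the linear coefficients $1/\ell_j$ are positive, and the correct absorption of degenerate ($c_i = \emptyword$) factors into neighbouring prefixes.

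The genuine gap is where you flag it: disambiguation. Two things are asserted but not verified. First, termination: ``each merge strictly decreases $m$'' is not true of the process as described, because each merge is preceded by a refinement of the period that multiplies $m$. What saves the argument is that merging cannot create new ambiguous boundaries: if $c_j^\omega = p_{j+1}c_{j+1}^\omega$ and the merged factor is written $p_j c_j^{e''}\pi$ with the leftover $\pi$ pushed into the following prefix, then $c_j^\omega = \pi\, c_{j+1}^\omega$, so the new boundary condition $c_j^\omega \ne \pi p_{j+2} c_{j+2}^\omega$ is equivalent to the old condition $c_{j+1}^\omega \ne p_{j+2}c_{j+2}^\omega$, while the left boundary of each merged run is untouched; hence one common refinement followed by a single round of merging maximal ambiguous runs already yields condition~\ref{item:no:ambiguities}. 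You should prove this preservation rather than appeal to a decreasing measure that does not decrease. Second, the all-ambiguous case must be excluded explicitly: if every cyclic boundary is ambiguous, the whole tail of $\sigma$ is a suffix of $c_0^\omega$, forcing $\sigma \in \botdeg$, and this is what guarantees $m \ge 1$ survives the merging. With these two points supplied the argument closes; as written it is an acknowledged sketch at exactly the step where the lemma earns its name.
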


\begin{example}
  We continue Example~\ref{ex:wprod}.
  We have $\vec{\alpha} = \tup{\vec{\alpha_0},\vec{\alpha_1}}$.
  Accordingly, we have prefixes $p_0,p_1 \in \two^*$ and cycles $c_0,c_1 \in \two^*$.
  Then the transduct $\sigma$ in Lemma~\ref{lem:disambiguate},
  defined by the double product, 
  can be derived as follows:
  \begin{center}\vspace{-.5ex}
    \begin{tikzpicture}[scale=0.95,nodes={scale=1}]
      \node at (-.5cm,0) [anchor=east] {$f$};
      \node at (10.1*.8cm,0) [anchor=east] {$\cdots$};
      \foreach \i/\j in {0/0,1/1,2/4,3/9,4/16,5/25,6/36,7/49,8/64,9/81} {
        \node (\i) at (\i*.8cm,0) {\j};
      }
      \node at (-.2cm,-0.8cm) [anchor=east] {$\wprod{\vec{\alpha}}{f}$};
      \node at (10.1*.8cm,-0.8cm) [anchor=east] {$\cdots$};
      \foreach \i/\x/\v in {0/1/18,1/3.5/17,2/6/248,3/8.5/82} {
        \node (s\i) at (\x*.8cm,-0.8cm) {\v};
      }
      \begin{scope}[inner sep=0,nodes={scale=.8}]
      \draw (0.south west) -- (s0); 
      \draw (2.south east) -- (s0); 
      \draw (3.south west) -- (s1); 
      \draw (4.south east) -- (s1); 
      \draw (5.south west) -- (s2); 
      \draw (7.south east) -- (s2); 
      \draw (8.south west) -- (s3); 
      \draw (9.south east) -- (s3);
      \end{scope}

      \node at (1*0.8,-.4) {$\vec{\alpha_0}$};
      \node at (3.5*0.8,-.4) {$\vec{\alpha_1}$};
      \node at (6*0.8,-.4) {$\vec{\alpha_0}$};
      \node at (8.5*0.8,-.4) {$\vec{\alpha_1}$};

      \node at (-.2cm,-1.4cm) [anchor=east] {$\sigma = w$};
      \node at (10.1*.8cm,-1.4cm) [anchor=east] {$\cdots$};
      \node at (0,-1.4) {$\cdot$};
      \node at (1*0.8,-1.4) {$p_0\, c_0^{18}$};
      \node at (2.15*0.8,-1.4) {$\cdot$};
      \node at (3.5*0.8,-1.4) {$p_1\, c_1^{17}$};
      \node at (4.65*0.8,-1.4) {$\cdot$};
      \node at (6*0.8,-1.4) {$p_0\, c_0^{248}$};
      \node at (7.15*0.8,-1.4) {$\cdot$};
      \node at (8.5*0.8,-1.4) {$p_1\, c_1^{82}$};
    \end{tikzpicture}\vspace{-.5ex}
  \end{center}
  The infinite word $\sigma$ is the 
  infinite concatenation of $w$
  followed by alternating $p_0 c_0^{e_0}$ and $p_1 c_1^{e_1}$,  
  where the exponents $e_0$ and $e_1$ are the result of applying weights $\vec{\alpha_0}$ and $\vec{\alpha_1}$, respectively.
\end{example}

We characterise the transducts of spiralling words up to equivalence ($\fstconv$).
\begin{theorem}[\cite{endr:grab:hend:zant:2015}]\label{thm:transducts:up:to}
  Let $f : \nat \to \nat$ be spiralling, and $\sigma \in \str{\two}$.
  Then $\seq{f} \fstred \sigma$
  if and only if 
  $\sigma \fstconv \seq{\wprod{\vec{\alpha}}{\shift{n_0}{f}}}$
  for some $n_0 \in \nat$, and a tuple of weights $\vec{\alpha}$.
\end{theorem}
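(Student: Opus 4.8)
The statement to prove is Theorem~\ref{thm:transducts:up:to}, which characterises, up to equivalence~$\fstconv$, the transducts of a spiralling word $\seq{f}$ as exactly the words of the form $\seq{\wprod{\vec{\alpha}}{\shift{n_0}{f}}}$ for some shift $n_0$ and some tuple of weights $\vec{\alpha}$. I would prove the two implications separately, the ``if'' direction being the easy half and the ``only if'' direction being where the real work sits.

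For the ``if'' direction, suppose $\sigma \fstconv \seq{\wprod{\vec{\alpha}}{\shift{n_0}{f}}}$ for some $n_0$ and $\vec{\alpha}$. It suffices to show $\seq{f} \fstred \seq{\wprod{\vec{\alpha}}{\shift{n_0}{f}}}$, since transducibility is transitive and $\fstconv$-equivalent words are mutually transducible. First, a trivial transducer skips the first $n_0$ blocks $10^{f(0)}\cdots 10^{f(n_0-1)}$, witnessing $\seq{f} \fstred \seq{\shift{n_0}{f}}$ (here I would note $\shift{n_0}{f}$ is still spiralling, using closure of spiralling functions under shifts, which is immediate from Definition~\ref{def:spiralling}). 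The only subtlety is that $\wprod{\vec{\alpha}}{\shift{n_0}{f}}$ need not be a \emph{natural} weighted product: the weights in $\vec{\alpha}$ are rational and the values $(\wprod{\vec{\alpha}}{\shift{n_0}{f}})(n)$ could fail to be natural numbers. But $\seq{g}$ is only defined for $g : \nat \to \nat$, so implicitly the statement restricts attention to those $\vec{\alpha}$ for which $\wprod{\vec{\alpha}}{\shift{n_0}{f}}$ \emph{is} natural; under that restriction, Lemma~\ref{lem:wprod:FST} gives exactly $\seq{\shift{n_0}{f}} \fstred \seq{\wprod{\vec{\alpha}}{\shift{n_0}{f}}}$, and composing the two transducers finishes this direction.

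For the ``only if'' direction, assume $\seq{f} \fstred \sigma$. If $\sigma \in \botdeg$, then $\sigma$ is ultimately periodic, and one checks directly that $\sigma \fstconv \seq{\wprod{\vec{\alpha}}{\shift{n_0}{f}}}$ for a constant-valued weighted product (choosing $\vec{\alpha}$ so that all exponents are equal); alternatively, absorb this into the general argument by allowing constant weights. If $\sigma \notin \botdeg$, apply Lemma~\ref{lem:disambiguate}: there exist $n_0, m$, a prefix $w$, a tuple of non-constant weights $\vec{\alpha}$ of length $m$, and tuples $\vec{p},\vec{c}$ of finite words with $c_j \ne \emptyword$, such that $\sigma = w \cdot \prod_{i=0}^{\infty}\prod_{j=0}^{m-1} p_j c_j^{\cyc{i}{j}}$ with $\cyc{i}{j} = (\wprod{\vec{\alpha}}{\shift{n_0}{f}})(mi+j)$, and moreover the ``no ambiguities'' condition~\ref{item:no:ambiguities} holds. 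The task is now to show $\sigma \fstconv \seq{g}$ where $g = \wprod{\vec{\alpha}}{\shift{n_0}{f}}$. One inequality, $\seq{g} \fstred \sigma$, is a direct transduction: a transducer reading $\seq{g} = 1 0^{g(0)} 1 0^{g(1)} \cdots$ can, cycling through residues of its position modulo $m$, replace the $k$-th block $10^{g(k)}$ (with $k = mi+j$) by $p_j c_j^{g(k)}$ and prepend $w$ at the start — this is finite-state precisely because the replacement pattern is periodic in $j$ with period $m$. For the converse $\sigma \fstred \seq{g}$: the exponents $\cyc{i}{j}$ tend to infinity (since $f$ is spiralling and the weights are non-constant), so for $i$ large enough the blocks $p_j c_j^{\cyc{i}{j}}$ are long and, by the disambiguation condition, one can unambiguously detect block boundaries and, reading over one period $j = 0,\ldots,m-1$, recover the exponent $\cyc{i}{j}$ up to an additive constant — emitting $1 0^{\cyc{i}{j}}$ (possibly after a bounded correction absorbed into a shift of $g$ and/or a change of the finitely many initial blocks, which only changes the degree by $\fstconv$). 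Here I would lean on the detailed structural analysis of~\cite{endr:grab:hend:zant:2015} that already underlies Lemma~\ref{lem:disambiguate}.

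The main obstacle is the second half of the ``only if'' direction: showing that $\sigma$ can be transduced \emph{back} to $\seq{g}$ in a finite-state manner. The danger is that the factorisation $\prod_j p_j c_j^{e_j}$, read as a raw bit-string, might not let a finite-state device locate the $c_j$-boundaries — e.g. if $c_j^\omega$ were a prefix of $p_{j+1}c_{j+1}^\omega$ the device could not tell where one cyclic block stops and the next prefix begins. This is exactly what condition~\ref{item:no:ambiguities} of Lemma~\ref{lem:disambiguate} rules out, so the real content is to turn that non-ambiguity statement, together with $c_j \ne \emptyword$ and $e_j \to \infty$, into an explicit finite-state parser; I expect this to require a careful but routine automaton construction plus a counting argument (reading a bounded number of symbols past each block to confirm the boundary), with the ``up to $\fstconv$'' slack absorbing all the bounded initial-segment and additive-constant corrections.
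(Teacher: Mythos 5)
The paper does not actually prove this theorem --- it is imported verbatim from~\cite{endr:grab:hend:zant:2015} --- but your reconstruction is correct and uses exactly the toolkit the paper itself deploys for the neighbouring Theorem~\ref{thm:transducts}: Lemma~\ref{lem:wprod:FST} together with a prefix-skipping transducer for the easy direction, and Lemma~\ref{lem:disambiguate} for the hard one, with the back-transduction $\sigma \fstred \seq{\wprod{\vec{\alpha}}{\shift{n_0}{f}}}$ obtained from a bounded-delay parser whose correctness rests on condition~\ref{item:no:ambiguities} and on absorbing the finitely many short initial blocks and the additive-constant miscounts into the $\fstconv$ slack. You correctly isolate the one genuine difficulty (finite-state detection of the $c_j$/$p_{j+1}$ boundaries) and the precise hypothesis that resolves it, so the sketch is sound and essentially the intended argument.
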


Roughly speaking, polynomials of order $k$
are closed under transduction.
\begin{proposition}[\cite{endr:grab:hend:zant:2015}]\label{prop:poly-transducts}
  Let $p(n)$ be a polynomial of order $k$ with non-negative integer coefficients,
  and let $\sigma\notin\botdeg$ with $\seq{p(n)} \fstred \sigma$.
  Then $\sigma \fstred \seq{q(n)}$
  for some polynomial $q(n)$ of order~$k$
  with non-negative integer coefficients.
\end{proposition}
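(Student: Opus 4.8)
The plan is to push the statement through the structure theorem for transducts of spiralling words (Theorem~\ref{thm:transducts:up:to}) and then tidy the resulting weighted product with two elementary transducer moves. Write $p(n)=\sum_{d=0}^{k}c_d n^d$ with $c_d\in\nat$ and $c_k>0$; we may assume $k\ge 1$, since for $k=0$ the word $\seq{p(n)}$ lies in $\botdeg$ and has no transducts outside $\botdeg$, so the hypothesis is vacuous. As $\seq{p(n)}\fstred\sigma$ with $\sigma\notin\botdeg$, Theorem~\ref{thm:transducts:up:to} yields $n_0\in\nat$ and a tuple of weights $\vec{\alpha}=\tup{\vec{\alpha_0},\ldots,\vec{\alpha_{m-1}}}$ with $\sigma\fstconv\seq{g}$, where $g=\wprod{\vec{\alpha}}{\shift{n_0}{p}}$ is a natural weighted product, so $g:\nat\to\nat$. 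It therefore suffices to exhibit a polynomial of order $k$ with non-negative integer coefficients to which $\seq{g}$ transduces.

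First I would show that $g$ is ``piecewise polynomial of order $k$'' along residue classes modulo $m$. Writing $k_\ell=\length{\vec{\alpha_\ell}}-1$, $S=\sum_{\ell=0}^{m-1}k_\ell$, $t_j=\sum_{\ell=0}^{j-1}k_\ell$, and $\vec{\alpha_j}=\tup{a^{(j)}_0,\ldots,a^{(j)}_{k_j-1},b_j}$, unfolding Definition~\ref{def:weighted:product} gives for $0\le j<m$ that
\[
  g(mn+j)\;=\;\textstyle\sum_{r=0}^{k_j-1}a^{(j)}_r\,p\bigl(nS+t_j+r+n_0\bigr)+b_j\;=:\;q_j(n).
\]
Since $p$ has non-negative integer coefficients and $S,t_j,r,n_0\in\nat$, each $p(nS+\text{const})$ expands in $n$ with non-negative integer coefficients; hence every coefficient of $q_j$ at a power $n^{e}$ with $e\ge 1$ is a non-negative rational, while the constant term of $q_j$ equals $q_j(0)=g(j)\in\nat$, so all coefficients of $q_j$ are non-negative. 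Moreover $g$ is not eventually constant (otherwise $\seq{g}$, and hence $\sigma$, would be ultimately periodic), so some $\vec{\alpha_j}$ is non-constant; for such a $j$ we have $k_j\ge1$, so $S\ge1$ and $\sum_r a^{(j)}_r>0$, and then $q_j$ has order exactly $k$ with positive leading coefficient $\bigl(\sum_r a^{(j)}_r\bigr)c_kS^{k}$, the remaining $q_j$ having order $\le k$. Consequently
\[
  q(n)\;:=\;\textstyle\sum_{j=0}^{m-1}q_j(n)\;=\;\textstyle\sum_{j=0}^{m-1}g(mn+j)
\]
is a polynomial of order exactly $k$ with non-negative rational coefficients.

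Two transducer constructions then close the argument. (a)~$\seq{g}\fstred\seq{q}$: the FST that copies every $0$ and, counting occurrences of $1$ modulo $m$ in its $m$ states, keeps exactly those occurrences of $1$ whose index is a multiple of $m$, sends $1\,0^{g(0)}\,1\,0^{g(1)}\cdots$ to $1\,0^{g(0)+\cdots+g(m-1)}\,1\,0^{g(m)+\cdots+g(2m-1)}\cdots=\seq{q}$; equivalently, apply Lemma~\ref{lem:wprod:FST} to the single weight $\vec{\beta}=\tup{\tup{1,\ldots,1,0}}$ of length $m+1$, for which $\wprod{\vec{\beta}}{g}=q$, a natural weighted product since $g$ is. (b)~$\seq{q}\fstred\seq{Dq}$ for any positive common denominator $D$ of the coefficients of $q$: use the one-state FST copying every $1$ and replacing every $0$ by $0^{D}$. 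Now $Dq$ is a polynomial of order $k$ with non-negative integer coefficients, and $\sigma\fstconv\seq{g}\fstred\seq{q}\fstred\seq{Dq}$, which proves the proposition.

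All of the real difficulty is hidden in Theorem~\ref{thm:transducts:up:to}, which we are assuming; beyond that, the plan is routine, and the two points that call for a little care are: (i)~that the weighted product of a polynomial of order $k$ is, on each residue class modulo $\length{\vec{\alpha}}$, again a polynomial of order \emph{exactly} $k$ --- here one must observe that the inner substitution rescales $n$ by $S$ and that $S\ge1$ holds precisely because $\sigma\notin\botdeg$ forces some $\vec{\alpha_j}$ to be non-constant; and (ii)~keeping the constant terms $b_j$ under control, so that clearing denominators in step~(b) preserves non-negativity of all coefficients, not just integrality.
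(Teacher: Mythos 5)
Your argument is correct. Note that the paper does not prove Proposition~\ref{prop:poly-transducts} at all --- it is imported from \cite{endr:grab:hend:zant:2015} --- so there is no in-paper proof to compare against; your reconstruction via Theorem~\ref{thm:transducts:up:to} (to get $\sigma\fstconv\seq{g}$ with $g=\wprod{\vec{\alpha}}{\shift{n_0}{p}}$) followed by the block-merging weight $\tup{\tup{1,\ldots,1,0}}$ and denominator clearing is exactly what the quoted machinery is designed for, and each step checks out against the definitions. In particular you handle the two genuinely delicate points properly: the additive constants $b_j$ of the weights may be negative, but the constant term of $q_j$ is $q_j(0)=g(j)\in\nat$, so non-negativity of all coefficients survives; and the order of $q$ is exactly $k$ because $\sigma\notin\botdeg$ forces some $\vec{\alpha_j}$ to be non-constant, which gives $S\ge 1$ and a strictly positive coefficient of $n^k$. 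The only implicit assumption worth flagging is that the weighted product produced by Theorem~\ref{thm:transducts:up:to} is natural (so that $g:\nat\to\nat$ and Lemma~\ref{lem:wprod:FST} applies); this is indeed forced by the statement of that theorem, since $\seq{\cdot}$ is only defined for $\nat$-valued (or, later, naturalisable) functions, and even in the naturalisable reading your argument goes through with Lemma~\ref{lem:wprod:FST:q} in place of Lemma~\ref{lem:wprod:FST}.
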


\section{The Degree of $\seq{n^k}$ is Not an Atom for $k \ge 3$}\label{sec:nk}

We show that the degree of $\seq{n^k}$ is not an atom for~$k \ge 3$.
For this purpose,
we prove a strengthening of Theorem~\ref{thm:transducts:up:to},
a lemma on weighted products of strongly non-constant weights,
and we employ the power mean inequality~\cite{hard:plya:1988}.

\begin{definition}\normalfont
  For $p \in \reals$, the \emph{weighted power mean} $M_p(\vec{x})$ of 
  $\vec{x} = \tup{x_1,x_2,\ldots,x_n} \in \reals_{>0}^n$ 
  with respect to $\vec{w} = \tup{w_1,w_2,\ldots,w_n} \in \reals_{>0}^n$
  with $\sum_{i = 1}^n w_i = 1$
  is  
  \begin{align*}
    M_{\vec{w},0}(\vec{x}) &= \textstyle\prod_{i=1}^{n} x_i^{w_i} &
    M_{\vec{w},p}(\vec{x}) &= (\textstyle\sum_{i=1}^{n} w_i x_i^p)^{1/p} \,.
  \end{align*}
\end{definition}

\begin{proposition}[Power mean inequality]\label{prop:weighted:mean}
  For all $p,q\in\reals$, $\vec{x},\vec{w} \in \reals_{>0}^n$:
  \begin{gather*}
    p < q \implies M_{\vec{w},p}(\vec{x}) \le M_{\vec{w},q}(\vec{x})\\
    ( p = q \vee x_1 = x_2 = \cdots = x_n ) \iff M_{\vec{w},p}(\vec{x}) = M_{\vec{w},q}(\vec{x}) \,.
  \end{gather*}
\end{proposition}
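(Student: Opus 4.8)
The final statement to prove is the Power mean inequality (Proposition~\ref{prop:weighted:mean}), which is a classical result; the plan below sketches a self-contained proof via convexity.

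\medskip

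The plan is to derive both the inequality and the equality characterisation from the strict convexity/concavity of the power functions together with Jensen's inequality. First I would reduce the general case $p < q$ (with both possibly negative, zero, or positive) to the basic building block: for any real $r \ne 0$, the weighted power mean $M_{\vec w, r}(\vec x)$ of positive reals is a weighted $r$-th root of a weighted average, and $M_{\vec w, 0}$ is the weighted geometric mean, which is the limit $\lim_{r \to 0} M_{\vec w, r}$ — so once we have monotonicity for $r \ne 0$ on both sides of $0$, continuity gives us the value at $0$ and hence monotonicity across $0$. This lets us assume $0 < p < q$ or $p < q < 0$ without loss of generality (the mixed case $p < 0 < q$ then follows by transitivity through $0$). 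Note that scaling all $x_i$ by a common positive constant scales every $M_{\vec w, r}(\vec x)$ by the same constant, so we are also free to normalise.

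\medskip

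Next I would treat the core case $0 < p < q$. Set $t = q/p > 1$ and put $y_i = x_i^p > 0$. The claim $M_{\vec w,p}(\vec x) \le M_{\vec w,q}(\vec x)$ becomes $\left(\sum_i w_i y_i\right)^{1/p} \le \left(\sum_i w_i y_i^{t}\right)^{1/(pt)}$, i.e.\ raising both sides to the $p$-th power, $\left(\sum_i w_i y_i\right)^{t} \le \sum_i w_i y_i^{t}$. Since $z \mapsto z^{t}$ is convex on $\reals_{>0}$ for $t > 1$ (its second derivative $t(t-1)z^{t-2}$ is positive), Jensen's inequality applied with the probability weights $w_i$ (which sum to $1$) yields exactly this. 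The case $p < q < 0$ is handled symmetrically: writing $s = p/q \in (0,1)$ and $y_i = x_i^q$, one needs $\left(\sum_i w_i y_i\right)^{s} \ge \sum_i w_i y_i^{s}$, which is Jensen for the concave function $z \mapsto z^{s}$ with $0 < s < 1$; taking the appropriate (negative-exponent) roots reverses the inequality back to the desired direction. For the geometric-mean comparisons $M_{\vec w,0} \le M_{\vec w,q}$ (with $q>0$) and $M_{\vec w,p} \le M_{\vec w,0}$ (with $p<0$), the cleanest route is to take logarithms: $\log M_{\vec w,0}(\vec x) = \sum_i w_i \log x_i$, and comparing with $\tfrac1q \log\left(\sum_i w_i x_i^q\right)$ reduces, after the substitution $y_i = x_i^q$, to the weighted AM--GM inequality $\sum_i w_i \log y_i \le \log\left(\sum_i w_i y_i\right)$, which is Jensen for the concave logarithm.

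\medskip

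Finally, for the equality characterisation I would invoke the \emph{strict} form of Jensen's inequality: for a strictly convex (or strictly concave) function $\phi$ and strictly positive weights $w_i$, one has $\phi\left(\sum_i w_i y_i\right) = \sum_i w_i \phi(y_i)$ if and only if $y_1 = \cdots = y_n$. Since the functions $z \mapsto z^t$ ($t \ne 1$, $t>0$), $z \mapsto z^s$ ($0<s<1$), and $\log z$ are all strictly convex or strictly concave on $\reals_{>0}$, and the substitutions $y_i = x_i^p$ (resp.\ $x_i^q$) are strictly monotone bijections of $\reals_{>0}$, the condition $y_1 = \cdots = y_n$ is equivalent to $x_1 = \cdots = x_n$. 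Combining with the obvious fact that $M_{\vec w, p} = M_{\vec w, q}$ whenever $p = q$, and chaining the across-zero comparisons (each of which is again an equality only under the same constant-$\vec x$ condition, or when the two exponents coincide), gives the stated biconditional. The only mildly delicate point — which I expect to be the main obstacle to writing this cleanly rather than a real difficulty — is the bookkeeping around $p = 0$ or $q = 0$: one must justify $M_{\vec w, 0} = \lim_{r\to 0} M_{\vec w, r}$ and check that strict monotonicity is not lost in the limit, which follows because for $r$ on either side of $0$ the inequalities $M_{\vec w, r} < M_{\vec w, 0}$ (for $r<0$) and $M_{\vec w, 0} < M_{\vec w, r}$ (for $r>0$) are already strict when $\vec x$ is non-constant, so no degeneracy occurs at the boundary.
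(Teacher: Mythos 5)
The paper does not prove this proposition at all: it is a classical result quoted from Hardy--Littlewood--P\'olya~\cite{hard:plya:1988} and used as a black box in the proof of Theorem~\ref{thm:not:atom}, so there is no in-paper argument to compare yours against. Your Jensen-based derivation is the standard self-contained proof and is essentially sound: the case $0<p<q$ via convexity of $z\mapsto z^{t}$ with $t=q/p>1$, the comparison with the geometric mean via concavity of $\log$, the treatment of $M_{\vec w,0}$ as the limit $\lim_{r\to 0}M_{\vec w,r}$, the mixed case $p<0<q$ by transitivity through $0$, and the equality characterisation via the strict form of Jensen all go through. One small slip in the case $p<q<0$: you set $s=p/q$ and claim $s\in(0,1)$, but since $|p|>|q|$ one has $p/q>1$. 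With your substitution $y_i=x_i^{q}$ the relevant exponent is $p/q>1$, so you need \emph{convexity} of $z\mapsto z^{p/q}$, after which raising to the negative power $1/p$ reverses the inequality as desired; alternatively take $s=q/p\in(0,1)$ together with $y_i=x_i^{p}$ to use concavity as you intended. Either repair is one line, so this is a bookkeeping error rather than a gap.
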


Theorem~\ref{thm:transducts:up:to} characterises transducts of spiralling sequences only up to equivalence.
This makes it difficult to employ the theorem for proving non-transducibility.
We improve the characterisation for the case of spiralling transducts as follows.
\begin{theorem}\label{thm:transducts}
  Let $f,g : \nat \to \nat$ be spiralling functions.
  Then $\seq{g} \fstred \seq{f}$ if and only if
  some shift of $f$ is a weighted product of a shift of $g$, that is:
  \begin{align*}
    \shift{n_0}{f} = \wprod{\vec{\alpha}}{\shift{m_0}{g}}
  \end{align*}
  for some $n_0, m_0 \in \nat$ and a tuple of weights $\vec{\alpha}$.
\end{theorem}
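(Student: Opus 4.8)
The plan is to strengthen Theorem~\ref{thm:transducts:up:to} by removing the ``up to $\fstconv$'' slack. The ``if'' direction is immediate: if $\shift{n_0}{f} = \wprod{\vec{\alpha}}{\shift{m_0}{g}}$, then Lemma~\ref{lem:wprod:FST} gives $\seq{\shift{m_0}{g}} \fstred \seq{\shift{n_0}{f}}$ (the weighted product is natural since $f$ is $\nat$-valued), and trivially $\seq{g} \fstred \seq{\shift{m_0}{g}}$ and $\seq{\shift{n_0}{f}} \fstred \seq{f}$ by deleting a finite prefix of $1$-blocks, so $\seq{g} \fstred \seq{f}$. The real work is the ``only if'' direction. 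Suppose $\seq{g} \fstred \seq{f}$. Since $f$ is spiralling, $\seq{f} \not\in \botdeg$, so I may apply Lemma~\ref{lem:disambiguate} with $\seq{g}$ in the role of the source and $\sigma = \seq{f}$: there exist $n_0' \in \nat$ (the shift of $g$), a modulus $m$, a prefix word $w$, a tuple of weights $\vec{\alpha}$ of length $m$, and tuples $\vec{p}, \vec{c}$ of finite words such that
\begin{align*}
  \seq{f} = w \cdot \prod_{i=0}^{\infty}\prod_{j=0}^{m-1} p_j\, c_j^{\cyc{i}{j}}, \qquad \cyc{i}{j} = (\wprod{\vec{\alpha}}{\shift{n_0'}{g}})(mi+j),
\end{align*}
with each $c_j \ne \wordemp$, each $\alpha_j$ non-constant, and the non-ambiguity condition~\ref{item:no:ambiguities}.

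The key step is to recognise that a word of the form $w \cdot \prod_i \prod_j p_j c_j^{e_{ij}}$, where the exponents $e_{ij} \to \infty$, is forced to be literally of the shape $\seq{h}$ for an explicit $h$, once we know it equals the concrete word $\seq{f} = \prod_n 1 0^{f(n)}$. Here I would argue combinatorially about the structure of the $c_j$ and $p_j$: each block $c_j$, being repeated arbitrarily often inside a word whose only letters are $0$ and whose $1$'s are isolated and ever-sparser, must be a power of $0$ (if $c_j$ contained a $1$, repeating it would create a bounded gap between two $1$'s infinitely often, contradicting $f(n)\to\infty$); so write $c_j = 0^{d_j}$ with $d_j \ge 1$. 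Similarly each $p_j$ is of the form $0^{s_j} 1 0^{t_j} \cdots$, a finite word with a fixed bounded number of $1$'s, say $r_j \ge 1$ of them (it must contain at least one $1$, else consecutive $c_j$-blocks would merge and we could absorb $p_j$ into a reindexing; the non-ambiguity condition~\ref{item:no:ambiguities} is exactly what pins this down). Reading off the positions of the $1$'s in the displayed product and matching them against $1 0^{f(0)} 1 0^{f(1)} \cdots$ then expresses each $f(n)$, for $n$ beyond the length of the $1$'s in $w$, as an affine function of finitely many consecutive values $\cyc{i}{0}, \ldots, \cyc{i}{m-1}$ with non-negative integer coefficients and non-negative integer constant — that is, as $(\wprod{\vec{\beta}}{(\wprod{\vec{\alpha}}{\shift{n_0'}{g}})})(n')$ for a suitable tuple of weights $\vec{\beta}$ encoding the ``how the $1$'s of the $p_j$'s carve up the $c_j$-runs'' pattern, up to a further finite shift $n_1$ to discard the initial irregular part contributed by $w$. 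By Lemma~\ref{lem:wprod:compose}, $\wprod{\vec{\beta}}{(\wprod{\vec{\alpha}}{\shift{n_0'}{g}})} = \wprod{(\wprod{\vec{\beta}}{\vec{\alpha}})}{\shift{n_0'}{g}}$, so with $m_0 = n_0'$ and $\vec{\alpha}^* = \wprod{\vec{\beta}}{\vec{\alpha}}$ we obtain $\shift{n_0}{f} = \wprod{\vec{\alpha}^*}{\shift{m_0}{g}}$, as required.

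I expect the main obstacle to be the bookkeeping in the middle step: carefully showing that the $c_j$ are pure $0$-powers and that each $p_j$ contains at least one $1$, and then correctly aligning indices so that ``the $n$-th one-gap of $\seq{f}$'' is matched with the right affine combination of the $\cyc{i}{j}$. The subtlety is that a single $p_j c_j^{e}$ factor can contribute several $1$-gaps of $\seq{f}$ (one for each $1$ occurring in $p_j$), and the last gap of that factor is $t_j + d_j e + (\text{leading $0$-run of } p_{j+1})$, which splices contributions from neighbouring factors; one must check this splicing still has the affine-in-$\cyc{i}{j}$ form with $\nat$ coefficients, which is where condition~\ref{item:no:empty:cycles} ($c_j \ne \wordemp$, i.e.\ $d_j \ge 1$) and the non-ambiguity condition~\ref{item:no:ambiguities} are used to rule out the degenerate collapses that would otherwise break the counting. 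Once the affine description is in hand, assembling $\vec{\beta}$ and invoking Lemma~\ref{lem:wprod:compose} is routine.
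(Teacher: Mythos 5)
Your proposal is correct and follows essentially the same route as the paper: apply Lemma~\ref{lem:disambiguate}, use $f(n)\to\infty$ together with conditions~\ref{item:no:ambiguities} and~\ref{item:no:empty:cycles} to force each $c_j$ into $\{0\}^+$ and each $p_j$ to contain a $1$, then read off the weighted product. The only difference is in the final bookkeeping: the paper first observes that each $p_j$ contains \emph{at most} one $1$ (since $p_j$ recurs infinitely often while the $1$'s of $\seq{f}$ grow ever sparser), hence exactly one, and then normalises to $p_j=1$, $c_j=0$ by rescaling and translating the weights $\vec{\alpha_j}$ in place, whereas you absorb the same splicing into a composition $\wprod{\vec{\beta}}{\vec{\alpha}}$ via Lemma~\ref{lem:wprod:compose}.
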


\begin{proof}%[Proof of Theorem~\ref{thm:transducts}]
  For the direction `$\Leftarrow$', assume that $\shift{n_0}{f} = \wprod{\vec{\alpha}}{\shift{m_0}{g}}$.
  Then we have
  %\begin{align*}
  $\seq{g} \fstconv \seq{\shift{m_0}{g}} \fstred \seq{\wprod{\vec{\alpha}}{\shift{m_0}{g}}} = \seq{\shift{n_0}{f}} \fstconv \seq{f}$  
  %\end{align*}
  by invariance under shifts and by Lemma~\ref{lem:wprod:FST}.

  For the direction `$\Rightarrow$', assume that $\seq{g} \fstred \seq{f}$.
  Then by Lemma~\ref{lem:disambiguate} there exist
  $m_0,m \in \nat$, $w \in \two^*$, $\vec{\alpha}$, $\vec{p}$ and $\vec{c}$ 
  with $\length{\vec{\alpha}} = \length{\vec{p}} = \length{\vec{c}} = m > 0$
  such that:
  \begin{align}
    \seq{f} = \textstyle w \cdot \prod_{i = 0}^{\infty} \prod_{j = 0}^{m-1} p_j \, c_j^{\cyc{i}{j}}
    \label{eq:f:product}
  \end{align}
  where $\cyc{i}{j} = (\wprod{\vec{\alpha}}{\shift{m_0}{g}})(mi + j)$
  such that the conditions~\ref{item:no:ambiguities} and~\ref{item:no:empty:cycles} of Lemma~\ref{lem:disambiguate}
  are fulfilled.
  
  Note that, as $\lim_{n\to \infty} f(n) = \infty$, the distance of ones in the sequence $\seq{g}$ tends to infinity.
  For every $j \in \nat_{<m}$,
  the word $p_j$ occurs infinitely often in $\seq{f}$ by~\eqref{eq:f:product},
  and hence $p_j$ can contain at most one occurrence of the symbol $1$.

  By condition~\ref{item:no:empty:cycles}, we have for every $j \in \nat_{<m}$ that $c_j \ne \emptyword$,
  and the weight $\vec{\alpha_j}$ is not constant.
  As $\lim_{n\to \infty} g(n) = \infty$, it follows that $c_j^2$ 
  appears infinitely often in $\seq{f}$ by~\eqref{eq:f:product}.
  Hence $c_j$ consists only of $0$'s, that is, $c_j \in \{0\}^+$ for every $j \in \nat_{<m}$.
  
  By condition~\ref{item:no:ambiguities} 
  we never have $c_j^\omega = p_{j+1} c_{j+1}^\omega$ for $j \in \nat_{<m}$ (where addition is modulo $m$).
  As $c_j^\omega = 0^\omega$ and $p_{j+1} 0^\omega = p_{j+1} c_{j+1}^\omega$,
  we obtain that $p_{j+1}$ must contain a $1$.
  Hence, for every $k \in \nat_{<m}$, the word $p_j$ contains precisely one $1$.

  Finally, we apply the following transformations 
  to ensure $p_j = 1$ and $c_j = 0$ for every $j \in \nat_{<m}$:
  \begin{enumerate}
    \item 
    For every $j \in \nat_{<m}$ such that $c_j = 0^h$ for some $h > 1$, 
    we set $c_j = 0$ and replace the weight $\vec{\alpha_j}$ in $\vec{\alpha}$ by $h\vec{\alpha_j}$.
    \item 
    For every $j \in \nat_{<m}$ such that $p_j = 0^h 1 0^\ell$ for some $h \ge 1$ or $\ell \ge 1$,
    we set $p_j = 1$ and replace the weight $\vec{\alpha_j}$ in $\vec{\alpha}$ by $(\vec{\alpha_j}+\ell)$ 
    and the weight $\vec{\alpha_{j-1}}$ by $(\vec{\alpha_{j-1}} + h)$.
    Here, for a weight $\vec{\gamma} = \tup{x_0,\ldots,x_{\ell-1},y}$ and $z \in \rat$, 
    we write $\vec{\gamma} + z$ for the weight $\tup{x_0,\ldots,x_{\ell-1},y+z}$.
    If $j = 0$, we moreover append $0^h$ to the word $w$.
  \end{enumerate}
  Both transformations preserve equation~\eqref{eq:f:product};
  the double product remains the same.
  
  Thus we now have $p_j = 1$ and $c_j = 0$ for every $j \in \nat_{<m}$.
  It follows from~\eqref{eq:f:product} that 
  $\seq{f} = w \seq{\wprod{\vec{\alpha}}{\shift{m_0}{g}}}$.
  Hence $\shift{n_0}{f} = \wprod{\vec{\alpha}}{\shift{m_0}{g}}$
  for some $n_0\in\nat$.  
\end{proof}

Theorem~\ref{thm:transducts} strengthens Theorem~\ref{thm:transducts:up:to}
in the sense that the characterisation uses equality ($=$ and shifts) instead of equivalence ($\fstconv$).
But Theorem~\ref{thm:transducts} only characterises spiralling transducts whereas Theorem~\ref{thm:transducts:up:to} characterises all transducts. 
However, next we will employ the gained precision to show that 
certain spiralling transducts of $\seq{n^k}$ cannot be transduced back to $\seq{n^k}$,
and conclude that $\seq{n^k}$ is not an atom for \mbox{$k \ge 3$}.

\begin{theorem}\label{thm:not:atom}
  For $k \ge 3$, the degree of $\seq{n^k}$ is not an atom.
\end{theorem}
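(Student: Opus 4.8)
The goal is to exhibit, for each $k \ge 3$, a spiralling word $\sigma$ strictly between $\botdeg$ and $\convclass{\seq{n^k}}$, i.e.\ with $\seq{n^k} \fstredstrict \sigma$ but $\sigma \not\fstred \seq{n^k}$. The natural candidate is a word $\seq{g}$ where $g$ is obtained from $n \mapsto n^k$ by a weighted product that ``fuses'' consecutive blocks, exactly as in Example~\ref{ex:fuse}: for a suitable period $m$ and weight tuple $\vec{\alpha}$, set $g = \wprod{\vec{\alpha}}{(n \mapsto n^k)}$. By Lemma~\ref{lem:wprod:FST} we automatically get $\seq{n^k} \fstred \seq{g}$, and $g$ is still spiralling (sums/products of spiralling functions are spiralling, and $g(n) \to \infty$ if $\vec{\alpha}$ is chosen strongly non-constant), and $\seq{g} \notin \botdeg$. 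So the two easy inclusions are handled. The crux is to choose $\vec{\alpha}$ so that $\seq{g} \not\fstred \seq{n^k}$, and to prove this non-transducibility.

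For the non-transducibility I would invoke Theorem~\ref{thm:transducts}: if $\seq{g} \fstred \seq{n^k}$ then some shift of $n \mapsto n^k$ equals $\wprod{\vec{\beta}}{\shift{m_0}{g}}$ for a tuple of weights $\vec{\beta}$. Combining this with $g = \wprod{\vec{\alpha}}{(n\mapsto n^k)}$ and Lemma~\ref{lem:wprod:compose}, one gets that a shift of $n\mapsto n^k$ is a weighted product of (a shift of) $n \mapsto n^k$ itself, where the composed weight $\wprod{\vec{\beta}}{\vec{\alpha}}$ is ``genuinely fusing'' — it must combine at least two of the original blocks into one output block, because $g$ already did so and $\vec{\beta}$ cannot undo a fusion (a weighted product can only merge blocks, never split them; formally $\sumlength{\cdot}$ and block structure only coarsen). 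This is where the power mean inequality enters. Writing one fused output value as $a_1 (n{+}r_1)^k + \cdots + a_t (n{+}r_t)^k + b$ with $t \ge 2$ and $a_i > 0$, and comparing its growth rate against a single block $n^k$ that it is supposed to equal (after a shift), I would extract the leading coefficient: the fused expression has leading term $(\sum a_i) n^k$, so matching $n^k$ forces $\sum a_i = 1$; then the subleading term (coefficient of $n^{k-1}$) is $k\sum a_i r_i$ plus contributions, and more importantly, looking at the exact polynomial identity $\sum a_i (n+r_i)^k + b = (n+s)^k$ for all $n$, strict convexity of $x \mapsto x^k$ for $k \ge 3$ (equivalently, the strict case of Proposition~\ref{prop:weighted:mean} with the weights $a_i$ summing to $1$) shows $M_{\vec a,k} > M_{\vec a,1}$ unless all $r_i$ are equal, contradicting $t \ge 2$ with distinct shifts. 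So no such identity can hold, and $\seq{g} \not\fstred \seq{n^k}$.

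The main obstacle — and the part needing genuine care rather than routine calculation — is controlling the shape of the composed weight $\wprod{\vec{\beta}}{\vec{\alpha}}$: I need to argue rigorously that \emph{however} $\vec{\beta}$ is chosen, the composite still contains at least one output coordinate that is a strongly non-constant weight applied to $\ge 2$ distinct inputs $(n+r)^k$, so that the power mean argument has a target to bite on. Lemma~\ref{lem:wprod:double:non:constant} is tailor-made for this: it says $\wprod{\vec{\beta}}{\vec{\alpha}}$ has each coordinate constant or strongly non-constant, and since the overall function $n \mapsto n^k$ is non-constant and unbounded, not all coordinates can be constant; a strongly non-constant coordinate is precisely one that sums $\ge 2$ of the underlying $(n+r)^k$ terms. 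Then I finish: that coordinate yields a polynomial identity $\sum_{i} a_i (n+r_i)^k + b = (n+s)^k$ with $\ge 2$ distinct $r_i$ and all $a_i > 0$, which I refute by evaluating asymptotically (leading coefficient forces $\sum a_i = 1$) and then applying the strict power mean / strict convexity inequality for exponent $k \ge 3$ versus exponent $1$, obtaining $\sum a_i (n+r_i)^k > (\sum a_i (n + r_i))^k = (n+s')^k$ for large $n$ — a contradiction once the linear terms are matched. (This is exactly why $k \ge 3$, or even $k \ge 2$, is needed: for $k = 1$ the function is affine, fusing is reversible, and indeed $\seq{n}$ is an atom.) Concretely I would pick the simplest witnessing $\vec{\alpha}$, e.g.\ $m = 2$ with $\vec{\alpha_0} = \tup{1,1,0}$ (fuse two consecutive blocks) and $\vec{\alpha_1} = \tup{1,0}$ (keep the next), so that $g(2i) = (2i)^k + (2i{+}1)^k$ and $g(2i{+}1) = (2i{+}2)^k$, matching the flavour of Example~\ref{ex:fuse}, and run the argument above on this $g$.
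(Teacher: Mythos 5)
Your plan is the paper's proof: take $g=\wprod{\tup{\tup{1,1,0}}}{(n\mapsto n^k)}$, i.e.\ $g(n)=(2n)^k+(2n+1)^k$, obtain $\seq{n^k}\fstred\seq{g}$ from Lemma~\ref{lem:wprod:FST}, and refute $\seq{g}\fstred\seq{n^k}$ by combining Theorem~\ref{thm:transducts}, Lemma~\ref{lem:wprod:compose}, Lemma~\ref{lem:wprod:double:non:constant}, and the power mean inequality applied to the distinct shifts with positive coefficients. That is exactly the published argument, and your outline of it is sound.

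Two caveats, one of which is a real (though easily repaired) gap. First, your concrete witness $\vec{\alpha}=\tup{\tup{1,1,0},\tup{1,0}}$ does not satisfy the hypothesis of Lemma~\ref{lem:wprod:double:non:constant}: that lemma requires \emph{every} weight of the inner tuple to be strongly non-constant, and $\tup{1,0}$ is not. With this $g$ a coordinate of the composite $\wprod{\vec{\beta}}{\vec{\alpha}}$ can be non-constant without being strongly non-constant (it may pick out a single unfused block), so the power-mean step has no target on that coordinate, and you would need an extra argument that not all coordinates can be of that form. The paper sidesteps this by fusing \emph{every} consecutive pair, i.e.\ using the single strongly non-constant weight $\tup{\tup{1,1,0}}$ as in Example~\ref{ex:fuse}; then the lemma applies and strict monotonicity of $n\mapsto n^k$ excludes the constant alternative. (Your indexing for that $\vec{\alpha}$ is also off: it yields $g(2i)=(3i)^k+(3i+1)^k$ and $g(2i+1)=(3i+2)^k$.) Second, the hedge ``or even $k\ge 2$'' is wrong, and the reason is worth pinning down: the convexity discrepancy $\sum_i a_i(n+r_i)^k-\bigl(n+\sum_i a_ir_i\bigr)^k$ has order $n^{k-2}$, so for $k=2$ it is a constant and is absorbed by the free additive constant $b$ of the weight --- consistently with the fact that $\seq{n^2}$ \emph{is} an atom. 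The paper makes this explicit by comparing the coefficients of $n^k$, $n^{k-1}$ and $n^1$ (the $n^0$ coefficient being useless because of $b$), i.e.\ the means $M_{\vec{w},1}$ and $M_{\vec{w},k-1}$, which can coincide for distinct arguments only when $k-1=1$.
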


\begin{proof}
  Define $f : \nat \to \nat$ by $f(n) = n^k$. 
  We have $\seq{f} \fstred \seq{g}$ where $g : \nat \to \nat$ is defined by $g(n) = (2n)^k + (2n+1)^k$;
  cf.\ Example~\ref{ex:fuse}.
  Assume that we had $\seq{g} \fstred \seq{f}$.
  Then, by Theorem~\ref{thm:transducts} we have
  $\shift{n_0}{f} = \wprod{\vec{\alpha}}{\shift{m_0}{g}}$
  for some
  $n_0, m_0 \in \nat$ and a tuple of weights $\vec{\alpha}$.
  Note that $g = \wprod{\tup{\tup{1,1,0}}}{f}$
  and 
  \begin{align*}
    \shift{n_0}{f} 
    &= \wprod{\vec{\alpha}}{\shift{m_0}{\wprod{\tup{\tup{1,1,0}}}{f}}} \\
    &= \wprod{\vec{\alpha}}{(\wprod{\tup{\tup{1,1,0}}}{\shift{2 m_0}{f}})} 
    = \wprod{\vec{\beta}}{\shift{2 m_0}{f}}
  \end{align*}
  where $\vec{\beta} = \wprod{\vec{\alpha}}{\tup{\tup{1,1,0}}}$.
  By Lemma~\ref{lem:wprod:double:non:constant} 
  every weight in $\vec{\beta}$ is either
  constant or strongly non-constant.
  As $\shift{n_0}{f}$ is strictly increasing (and hence contains no constant subsequence),
  each weight in $\vec{\beta}$ must be strongly non-constant.

  Let $\vec{\beta} = \tup{\vec{\beta_0},\ldots,\vec{\beta_{\ell-1}}}$.
  For every $n \in \nat$ we have
  \begin{align}
    \begin{aligned}
    \shift{n_0}{f}(\ell n) 
    &= (\wprod{\vec{\beta}}{\shift{2 m_0}{f}})(\ell n) 
    = \wof{\vec{\beta_0}}{\shift{2 m_0 + \sumlength{\vec{\beta}}\cdot n}{f}}\;.
    \end{aligned}
    \label{eq:not:prime:eq}
  \end{align}
  Then we have
  \begin{align} 
    \shift{n_0}{f}(\ell n) &= (n_0 + \ell n)^k \notag
                      = \textstyle\sum_{i=0}^{k} \binom{k}{i} n_0^i \ell^{k-i}n^{k-i} \notag\\
                      &= \ell^k n^k + k n_0 \ell^{k-1} n^{k-1} + \cdots + k n_0^{k-1} \ell n + n_0^k 
                      \;. \label{eq:not:prime:left}
  \end{align}
  Let $\vec{\beta_0} = \tup{a_0,a_1,\ldots,a_{h-1},b}$.
  We define $c_i = a_i \sumlength{\vec{\beta}}^k$ and $d_i = (2m_0 + i)/ \sumlength{\vec{\beta}}$. 
  We obtain
  \begin{align}
    \ \wof{\vec{\beta_0}}{\shift{2 m_0 + \sumlength{\vec{\beta}}\cdot n}{f}} \notag 
    =&\ b + \textstyle\sum_{i = 0}^{h-1} a_i f(2 m_0 + \sumlength{\vec{\beta}}\cdot n + i) \notag \\
    =&\ b + \textstyle\sum_{i = 0}^{h-1} a_i f(\sumlength{\vec{\beta}}(n + \frac{2m_0+i}{\sumlength{\vec{\beta}}})) \notag \\
    =&\ b + \textstyle\sum_{i = 0}^{h-1} a_i \sumlength{\vec{\beta}}^k(n + d_i)^k = b + \sum_{i = 0}^{h-1} c_i (n + d_i)^k \notag \\
    =&\ b + \textstyle\sum_{i = 0}^{h-1} c_i (n^k + k d_i n^{k-1} + \cdots + k d_i^{k-1} n + d_i^k) \;. \label{eq:not:prime:right}
  \end{align}
  Recall equation~\eqref{eq:not:prime:eq}.
  Comparing the coefficients of $n^k$, $n^{k-1}$ and~$n$ in \eqref{eq:not:prime:left} and \eqref{eq:not:prime:right} 
  we obtain
  \begin{align*}
    \ell^k &= \sum_{i = 0}^{h-1} c_i 
    &
    kn_0 \ell^{k-1} &= \sum_{i = 0}^{h-1} c_i k d_i 
    &
    kn_0^{k-1} \ell &= \sum_{i = 0}^{h-1} c_i k d_i^{k-1} \,\text{, and hence} \\
    1 &= \sum_{i = 0}^{h-1} \frac{c_i}{\ell^k} 
    &
    \frac{n_0}{\ell} &= \sum_{i = 0}^{h-1} \frac{c_i}{\ell^{k}} d_i 
    &
    \frac{n_0^{k-1}}{\ell^{k-1}} &= \sum_{i = 0}^{h-1} \frac{c_i}{\ell^{k}} d_i^{k-1} \,,
  \end{align*}
  contradicting the weighted power means inequality (Proposition~\ref{prop:weighted:mean}).
  Clearly all $d_i$ are distinct, and, as a consequence of $\vec{\beta_0}$ being strongly non-constant,
  there are at least two $i \in \nat_{<h}$ for which $c_i \ne 0$.
  Thus our assumption $\seq{g} \fstred \seq{f}$ is wrong.
  Hence the degree of $\seq{n^k}$ is not an atom.
\end{proof}

\section{Atoms of Every Polynomial Order}\label{sec:polynomial}

\newcommand{\qpoly}{\mathfrak{Q}}
\newcommand{\npoly}{\mathfrak{N}}

The previous section stated that $\seq{n^k}$ is not an atom for $k \ge 3$.
Now we show that for every $k \in \nat$ there exists a polynomial $p(n)$ of order $k$
such that the degree of the word~$\seq{p(n)}$ is an atom.
Hence there are at least $\aleph_0$ atoms in the transducer degrees.

As we have seen in the proof of Theorem~\ref{thm:not:atom}, whenever $k \ge 3$,
we have that $\seq{n^k} \fstred \seq{g(n)}$,
but not $\seq{g(n)} \fstred \seq{n^k}$ for $g(n) = (2n)^k + (2n+1)^k$.
Thus there exist polynomials $p(n)$ of order $k$
for which $\seq{p(n)}$ cannot be transduced to $\seq{n^k}$.
However, the key observation underlying the construction in this section is the following:
Although we may not be able to reach $\seq{n^k}$ from $\seq{p(n)}$,
we can get arbitrarily close (Lemma~\ref{lem:epsilon}, below).
This enables us to employ the concept of \emph{continuity}.
% We refer the reader to Appendix~\ref{sec:intuitive:atom}, 
% for a more elaborate, intuitive explanation of the proof of Theorem~\ref{thm:atom}.

In order to have continuous functions over the space of polynomials to allow limit constructions,
we now permit rational coefficients. 
For $k \in \nat$, let $\qpoly_k$ be the set of polynomials of order $k$ 
with non-negative rational coefficients.
We also use polynomials in $\qpoly_k$ to denote spiralling sequences.
However, we need to give meaning to $\seq{q(n)}$ for the case that the block sizes $q(n)$ are not natural numbers.
For this purpose, we make use of the fact that the degree of a word $\seq{f(n)}$
is invariant under multiplication of the block sizes by a constant, as is easy to see.
More precisely, for $f : \nat \to \nat$,
we have $\seq{f(n)} \fstconv \seq{d \cdot f(n)}$ for every $d \in \nat$ with $d \ge 1$.
So to give meaning to $\seq{q(n)}$, 
we multiply the polynomial by the least natural number $d > 0$ such that
$d \cdot q(n)$ is a natural number for every $n\in \nat$.

\begin{definition}\label{def:seq:q}\normalfont
  We call a function $f : \nat \to \rat$ \emph{naturalisable}
  if there exists a natural number $d  \ge 1$ such that
  for all $n \in \nat$ we have $(d \cdot f(n)) \in \nat$.

  For naturalisable $f : \nat \to \rat$ we define
  $\seq{f} \;=\; \seq{d\cdot f}$
  where $d \in \nat$ is the least number 
  such that $d \geq 1$ where 
  for all $n \in \nat$ we have $(d \cdot f(n)) \in \nat$.
  (Note that, for $f : \nat \to \nat$, $\seq{f(n)}$  has been defined in Section~\ref{sec:tools}.)
\end{definition}
\noindent
Observe that every $q(n) \in \qpoly_k$ is naturalisable (multiply by the least common denominator of the coefficients).
Also, naturalisable functions are preserved under weighted products.

Lemma~\ref{lem:wprod:FST} generalises as follows.
We no longer need to require that the weighted product is natural.
All weighted products of naturalisable functions can be realised by finite-state transducers.
\begin{lemma}\label{lem:wprod:FST:q}
  Let $f : \nat \to \rat$ be naturalisable, and $\vec{\alpha}$ a tuple of weights.
  Then $\wprod{\vec{\alpha}}{f}$ is naturalisable and $\seq{f} \fstred \seq{\wprod{\vec{\alpha}}{f}}$.
\end{lemma}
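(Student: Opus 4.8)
The plan is to reduce Lemma~\ref{lem:wprod:FST:q} to the already established Lemma~\ref{lem:wprod:FST} by clearing denominators. First I would establish that $\wprod{\vec{\alpha}}{f}$ is naturalisable. Since $f$ is naturalisable, fix $d_1 \ge 1$ with $d_1 f(n) \in \nat$ for all $n$. Since the entries of the weights in $\vec{\alpha}$ are rationals, let $d_2 \ge 1$ be a common denominator for all of them, so that $d_2 \vec{\alpha}$ has integer entries (and still non-negative leading coefficients, so it is a legitimate tuple of weights). Then $d_1 d_2 \cdot (\wprod{\vec{\alpha}}{f}) = \wprod{(d_2\vec{\alpha})}{(d_1 f)}$ by Lemma~\ref{lem:wprod:mul}, and the right-hand side is a weighted product of an $\nat$-valued function with integer weights, hence takes values in $\nat$ (this is immediate from the inductive definition of the weighted product, since each $\wof{\vec{\alpha_i}}{\cdot}$ of natural numbers with non-negative integer coefficients and integer constant term is a natural number — here one should note that the constant term $b$ of $d_2\vec{\alpha_i}$ is an integer, and the sum of non-negative contributions plus this integer is natural provided it is non-negative, which it is because the original weighted product is $\rat$-valued and we only need the value, once naturalisability is known; more carefully, one argues directly that $\wprod{\vec\alpha}{f}$ takes non-negative rational values and that $d_1 d_2$ times it is an integer, hence a natural number). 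This shows $d = d_1 d_2$ witnesses naturalisability of $\wprod{\vec{\alpha}}{f}$.

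For the transduction claim, I would chain the following equivalences and reductions:
\begin{align*}
  \seq{f} &\fstconv \seq{d_1 f} \\
  &\fstred \seq{\wprod{(d_2 \vec{\alpha})}{(d_1 f)}} \\
  &= \seq{d_1 d_2 \cdot (\wprod{\vec{\alpha}}{f})} \\
  &\fstconv \seq{\wprod{\vec{\alpha}}{f}} \,.
\end{align*}
The first step is the definition of $\seq{\cdot}$ on naturalisable functions (Definition~\ref{def:seq:q}) together with the invariance $\seq{g} \fstconv \seq{e \cdot g}$ for $e \ge 1$, which is stated in the text just before Definition~\ref{def:seq:q}. The second step is Lemma~\ref{lem:wprod:FST} applied to the $\nat$-valued function $d_1 f$ and the integer-weight tuple $d_2 \vec{\alpha}$, whose weighted product is natural by the paragraph above. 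The third step is Lemma~\ref{lem:wprod:mul}. The fourth step is again invariance under multiplication by a positive integer constant, applied to $e = d_1 d_2$. Composing, $\seq{f} \fstred \seq{\wprod{\vec{\alpha}}{f}}$.

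The only genuinely delicate point — the ``main obstacle'', though it is minor — is making sure the constant terms behave: after scaling by $d_2$ the weights have integer but possibly we want to check they are still valid weights (non-negative leading coefficients: yes, scaling by a positive constant preserves signs) and that Lemma~\ref{lem:wprod:FST}'s hypothesis ``$\wprod{\vec\alpha}{f}$ is a natural weighted product'' is actually met, which is exactly the naturalisability argument and the observation that a weighted product of non-negative integers under non-negative integer weights, with the overall value known a priori to be rational and in fact non-negative, lands in $\nat$. One should also remark, as the surrounding text already does, that Lemma~\ref{lem:wprod:FST:q} properly generalises Lemma~\ref{lem:wprod:FST}: when $f$ is already $\nat$-valued and $\wprod{\vec\alpha}{f}$ is natural, one may take $d = 1$ and recover the original statement. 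Everything else is routine bookkeeping with the scaling lemmas.
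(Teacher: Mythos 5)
Your overall strategy is the same as the paper's: clear the denominators of $f$ and of the weights, invoke Lemma~\ref{lem:wprod:FST} on the resulting natural data, and undo the scaling via the invariance $\seq{g} \fstconv \seq{e \cdot g}$. However, the pivotal identity you use,
\begin{align*}
  d_1 d_2 \cdot (\wprod{\vec{\alpha}}{f}) \;=\; \wprod{(d_2\vec{\alpha})}{(d_1 f)}\,,
\end{align*}
is false whenever some weight in $\vec{\alpha}$ has a non-zero constant term. For a single weight $\vec{\beta} = \tup{a_0,\ldots,a_{k-1},b}$ one has $\wof{(d_2\vec{\beta})}{(d_1 f)} = d_1 d_2 \sum_i a_i f(i) + d_2 b$, whereas $d_1 d_2\,(\wof{\vec{\beta}}{f}) = d_1 d_2 \sum_i a_i f(i) + d_1 d_2 b$; the constant term is off by a factor of $d_1$. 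This is precisely why the paper introduces the operation $\odot$, which rescales only the constant term, and why Lemma~\ref{lem:wprod:mul} is stated as $c(\wprod{\vec{\alpha}}{f}) = \wprod{(c\vec{\alpha})}{f} = \wprod{(\vec{\alpha}\odot c)}{(cf)}$: scaling the argument function must be compensated on the constant terms, not on the coefficients. Note that Lemma~\ref{lem:wprod:mul}, which you cite for this step, does not yield the identity you wrote.

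The repair is exactly the paper's chain: replace $\wprod{(d_2\vec{\alpha})}{(d_1 f)}$ by $\wprod{((d_2\vec{\alpha})\odot d_1)}{(d_1 f)}$, which by Lemma~\ref{lem:wprod:mul} equals $d_1(\wprod{(d_2\vec{\alpha})}{f}) = d_1 d_2 (\wprod{\vec{\alpha}}{f})$, is a weighted product of the $\nat$-valued function $d_1 f$ by a tuple of weights with integer entries, and hence is a natural weighted product to which Lemma~\ref{lem:wprod:FST} applies. With this substitution both your naturalisability argument and your transduction chain go through, and the remaining bookkeeping (the scaled weights are still legitimate weights, and the final descaling step via Definition~\ref{def:seq:q} and the invariance under multiplication by a positive integer) matches the paper's proof.
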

\begin{proof}
  Let $\vec{\alpha} = \tup{\vec{\alpha_0},\ldots,\vec{\alpha_{m-1}}}$ for some $m \ge 1$.
  Let $c \in \nat$ with $c \ge 1$ be minimal such that
  all entries of $c\vec{\alpha}$ are natural numbers.
  Let $d \in \nat$ with $d \ge 1$ be the least natural number such that
  $\myal{n \in \nat}{(d \cdot f(n)) \in \nat}$.
  
  Then we obtain $(\wprod{(dc\vec{\alpha})}{f})(n) \in \nat$ for ever $n\in\nat$.
  By the definition of weighted products it follows immediately that 
  $\wprod{(dc\vec{\alpha})}{f} = dc(\wprod{\vec{\alpha}}{f})$,
  and hence $\wprod{\vec{\alpha}}{f}$ is naturalisable.
  Let $e \in \nat$ with $e \ge 1$ be the least natural number such that
  $\myal{n \in \nat}{(e \cdot (\wprod{\vec{\alpha}}{f})(n)) \in \nat}$.
  
  We have the following transduction 
  \begin{align*}
    \seq{f} &= \seq{df} &&\text{by Definition~\ref{def:seq:q}}\\
    &\fstred \seq{\wprod{((c\vec{\alpha}) \odot d)}{(df)}} &&\text{by Lemma~\ref{lem:wprod:FST}}\\
    &= \seq{\wprod{(dc\vec{\alpha})}{f}} = \seq{dc(\wprod{\vec{\alpha}}{f})} &&\text{by Lemma~\ref{lem:wprod:mul}}\\
    &\fstred \seq{\wprod{\tup{\tup{\frac{e}{dc},0}}}{(dc(\wprod{\vec{\alpha}}{f}))}} &&\text{by Lemma~\ref{lem:wprod:FST}}\\
    &= \seq{e(\wprod{\vec{\alpha}}{f})} = \seq{\wprod{\vec{\alpha}}{f}} &&\text{by Definition~\ref{def:seq:q}}
  \end{align*}
  This concludes the proof.
\end{proof}

The following lemma states 
that every word $\seq{q(n)}$, for~a polynomial $q(n) \in \qpoly_k$ of order $k$,
can be transduced arbitrarily close to (but perhaps not equal to) $\seq{n^k}$.

\begin{lemma}\label{lem:epsilon}
  Let $k \ge 1$ and let $q(n) \in \qpoly_k$ be a polynomial of order~$k$.
  For every $\varepsilon > 0$ we have
  $%\begin{align*}
    \seq{q(n)} \fstred \seq{n^k + b_{k-1} n^{k-1} + \cdots + b_1 n}
  $ %\end{align*}
  for some rational coefficients $0 \le b_{k-1},\ldots,b_1 < \varepsilon$.
\end{lemma}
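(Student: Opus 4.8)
The plan is to use a weighted product that takes a long stretch of consecutive blocks of $\seq{q(n)}$ and glues them together in a way that approximates $n^k$. Concretely, write $q(n) = a_k n^k + a_{k-1} n^{k-1} + \cdots + a_0$ with $a_k > 0$ rational. Fix a large integer $L$ (to be chosen depending on $\varepsilon$) and consider the weight $\vec{\alpha} = \tup{\vec{\alpha_0}}$ consisting of a single block that sums $L$ consecutive values of the input function: roughly $\vec{\alpha_0} = \tup{c,c,\ldots,c,0}$ with $L$ copies of a suitable constant $c > 0$. Applying this weighted product to a shift $\shift{m_0}{q}$ yields a new polynomial
\begin{align*}
  p(n) = c \sum_{i=0}^{L-1} q(m_0 + Ln + i),
\end{align*}
and by Lemma~\ref{lem:wprod:FST:q} we have $\seq{q(n)} \fstred \seq{p(n)}$ (note $p(n)$ is naturalisable, and $\seq{\cdot}$ is defined for naturalisable functions via Definition~\ref{def:seq:q}, so the non-integrality of $c$ causes no problem). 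So it suffices to show that, by choosing $c$, $L$, $m_0$ appropriately, $p(n)$ equals $n^k$ plus lower-order terms with arbitrarily small positive coefficients.

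First I would compute the leading behaviour. Since $\sum_{i=0}^{L-1} q(m_0+Ln+i)$ has leading term $a_k \sum_{i=0}^{L-1} (Ln)^k \sim a_k L \cdot L^k n^k = a_k L^{k+1} n^k$, choosing $c = 1/(a_k L^{k+1})$ makes the $n^k$-coefficient of $p(n)$ exactly $1$. The remaining task is to bound the coefficients of $n^{k-1}, \ldots, n^1$ in $p(n)$. Expanding $q(m_0 + Ln + i) = a_k(Ln + (m_0+i))^k + a_{k-1}(Ln+(m_0+i))^{k-1} + \cdots$, the coefficient of $n^{k-j}$ (for $j \ge 1$) in $c\sum_i q(m_0+Ln+i)$ is a sum of terms, each carrying a factor $L^{k-j}$ times at most $L$ summands, i.e.\ of order $L^{k-j+1}$, against the normalising factor $c \sim L^{-(k+1)}$, giving order $L^{-j} \le L^{-1}$. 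So every lower-order coefficient is $O(1/L)$, hence $< \varepsilon$ once $L$ is large enough. I would also need the constant term to vanish: this is arranged either by further applying a trivial shift so that $p$ has zero constant term, or by absorbing it — but since Lemma~\ref{lem:epsilon} only asks for the shape $n^k + b_{k-1}n^{k-1} + \cdots + b_1 n$ with no constant term, I should instead pick the weight's constant entry $b = -c\sum_i(\text{constant part})$ to cancel it, checking this stays $\ge 0$ or, if not, shift the input to make the constant term of $p$ positive and then subtract via the weight's $b$-entry (which is allowed to be negative in a weight provided the resulting weighted product stays in $\nat$ after naturalisation — one must double-check the definition permits $b < 0$; Definition~\ref{def:weight} only constrains the $a_i \ge 0$, so $b$ negative is fine, and naturalisability then needs the polynomial $p$ to be non-negative on $\nat$, which holds since its leading coefficient is positive and we can shift to kill any small-$n$ negativity).

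Finally I would also need to ensure the $b_{k-1},\ldots,b_1$ are \emph{strictly} positive, not merely small and non-negative, since the statement says $0 \le b_j < \varepsilon$ — actually the statement allows $b_j = 0$, so this is not required; but if a cleaner downstream use wants strict positivity one can perturb by adding a tiny further weighted product contributing $\eta n^j$ for each $j$. The main obstacle I anticipate is the bookkeeping of the constant term and naturalisability: making sure that after choosing $c$ rational and $b$ to kill the constant term, the resulting function $p:\nat\to\rat$ is genuinely naturalisable and non-negative, so that $\seq{p(n)}$ is well-defined and the transduction $\seq{q(n)} \fstred \seq{p(n)}$ really follows from Lemma~\ref{lem:wprod:FST:q}. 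Everything else is a routine binomial expansion and an estimate of the form (coefficient) $= O(1/L)$.
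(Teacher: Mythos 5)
Your overall strategy (apply one weighted product to turn $q$ into a polynomial that is close to $n^k$, then invoke Lemma~\ref{lem:wprod:FST:q}) is the right shape, and your handling of the constant term via a negative last entry $b$ of the weight is fine (the paper does exactly this with the weight $\tup{\tup{1,-a_0/(a_kd^k)}}$). But the key quantitative step fails: merging $L$ consecutive blocks with equal coefficients does \emph{not} drive the lower-order coefficients to $0$. Your estimate treats each summand contributing to the $n^{k-j}$ coefficient as $O(1)$, but the relevant summands involve $(m_0+i)^j$ with $i$ ranging up to $L-1$, so
\begin{align*}
  \sum_{i=0}^{L-1}(m_0+i)^j = \Theta(L^{j+1}), \quad\text{not } O(L).
\end{align*}
Hence the $n^{k-j}$ coefficient of $c\sum_{i=0}^{L-1}q(m_0+Ln+i)$ is $\Theta(L^{k-j}\cdot L^{j+1})=\Theta(L^{k+1})$, which against $c=\Theta(L^{-(k+1)})$ gives a constant, not $O(1/L)$. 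Concretely, for $q(n)=n^k$ and $m_0=0$ your normalised polynomial is the Riemann sum $\frac{1}{L}\sum_{i=0}^{L-1}(n+i/L)^k \to \int_0^1(n+t)^k\,dt = \frac{(n+1)^{k+1}-n^{k+1}}{k+1}$, whose $n^{k-1}$-coefficient is $k/2$; for $k=2$ you get $n^2+n+\tfrac13$, with an $n$-coefficient of $1$ no matter how large $L$ is. So the merge can never get below $\varepsilon$.

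The fix is to make the blocks \emph{sparser}, not to merge them: select the subsequence $n\mapsto q(dn)$ (keep one block in every $d$, which is a weighted product with a weight having a single nonzero coefficient) and then scale by $1/(a_kd^k)$. This yields
\begin{align*}
  n^k + \frac{a_{k-1}}{a_kd}n^{k-1} + \cdots + \frac{a_1}{a_kd^{k-1}}n + \frac{a_0}{a_kd^k},
\end{align*}
where the coefficient of $n^{k-j}$ carries a factor $d^{k-j}/d^k = d^{-j}$ and hence genuinely tends to $0$ as $d\to\infty$; removing the constant term as you describe then finishes the proof. This is the paper's argument.
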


\begin{proof}
  Let $q(n) = a_kn^k + a_{k-1} n^{k-1} + \cdots + a_1 n + a_0$,
  and let $\varepsilon > 0$ be arbitrary.
  Then for every $d \in \nat$, we have
  \begin{align*}
    \seq{q(n)} &\fstred \seq{q(dn)} 
    \fstred \seq{\frac{q(dn)}{a_k d^k}} 
    = \seq{n^k + \frac{a_{k-1}}{a_k d}n^{k-1} + \cdots + \frac{a_{1}}{a_k d^{k-1}}n^1 + \frac{a_{0}}{a_k d^{k}} } \\
    &\fstred \seq{n^k + \frac{a_{k-1}}{a_k d}n^{k-1} + \ldots + \frac{a_{1}}{a_k d^{k-1}}n^1 } 
  \end{align*}
  The first transduction selects a subsequence of the blocks.
  The second transduction is a division of the size of each block
  (application of Lemma~\ref{lem:wprod:FST:q} with the weight $\tup{\tup{1/a_k d^k,0}}$).
%   Using Definition~\ref{def:seq:q} it is not problem if the block size becomes rational.
  The last transduction amounts to removing a constant number of zeros from each block
  (application of Lemma~\ref{lem:wprod:FST:q} with the weight $\tup{\tup{1,-a_{0}/(a_k d^{k})}}$).
%   Each of these transformations is a weighted product, 
%   and hence we can use Lemma~\ref{lem:wprod:FST:q}.
  The last polynomial in the transduction is of the desired form
  if $d \in \nat$ is chosen large enough.
\end{proof}

For polynomials $p(n) \in \qpoly_k$,
we want to express weighted products $\wprod{\tup{\vec{\alpha}}}{p}$
in terms of matrix products, as follows.
\begin{definition}\normalfont
  For weights $\vec{\alpha} = \tup{a_0, \dots, a_{k - 1}, b}$
  we define a column vector
  \begin{gather*}
      U(\vec{\alpha}) = (a_0, \dots, a_{k - 1})^T \,.
  \end{gather*}
\end{definition}

\begin{definition}\normalfont
  If
  \begin{math}
      p(n) = \sum_{i = 0}^k c_i n^i
  \end{math}
  is a polynomial of order~$k$,
  we define a column vector
  \begin{math}
      V(p(n)) = (c_1, \dots, c_k)^T
  \end{math}
  and a square matrix
  \begin{equation*}
      M(p(n)) = (V(p(kn + 0)), \ \dots ,\ V(p(kn + k - 1)))\,.
  \end{equation*}
  We also write $V(p)$ short for $V(p(n))$ and $M(p)$ for $M(p(n))$.
\end{definition}

We have omitted the constant term $c_0$ from the definition of $V(p)$.
Because for every $f : \nat \to \nat$ and $c \in \nat$
we have $\seq{f(n)} \fstconv \seq{f(n) + c}$.
These words are of the same degree because a FST can 
add (or remove) a constant number of symbols $0$ to (from) every block of $0$'s.
Similarly, $b$ was omitted from the definition of $U(\vec{\alpha})$.

\begin{example}
  Consider the polynomial $n^3$:
  \begin{equation*}
      V(n^3) = \begin{pmatrix}
          0 \\
          0 \\
          1
      \end{pmatrix}
      \qquad \text{and} \qquad
      M(n^3) = \begin{pmatrix}
          0 & 9 & 36 \\
          0 & 27 & 54 \\
          27 & 27 & 27
      \end{pmatrix}
  \end{equation*}
  where the column vectors of the matrix $M(n^3)$ are given by $V((3n)^3)$, $V((3n+1)^3)$ and $V((3n+2)^3)$.
\end{example}

\begin{lemma} \label{lem:matrix}
  Let $k \geq 1$.
  Let $\vec{\alpha} = \tup{a_0, \dots, a_{k - 1}, b}$ be a weight
  and $p(n) \in \qpoly_k$.
  Then
  $M(p) \mmul U(\vec{\alpha}) = V(\wprod{\tup{\vec{\alpha}}}{p})$.
\end{lemma}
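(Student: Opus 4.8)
The statement asserts a compatibility between the matrix encoding $M(p)$ and the weighted product $\wprod{\tup{\vec\alpha}}{p}$, both built from a weight $\vec\alpha=\tup{a_0,\dots,a_{k-1},b}$ and a polynomial $p(n)\in\qpoly_k$. Since $\wprod{\tup{\vec\alpha}}{p}$ uses a single weight $\vec\alpha$, unfolding Definition~\ref{def:weighted:product} gives
\begin{align*}
  (\wprod{\tup{\vec\alpha}}{p})(n) = \wof{\vec\alpha}{\shift{kn}{p}} = \sum_{i=0}^{k-1} a_i\, p(kn+i) + b\,,
\end{align*}
because $|\vec\alpha|-1 = k$, so each successive application of $\vec\alpha$ advances the shift by $k$. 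The plan is to compute the $V$-vector of this polynomial in $n$ directly and match it against $M(p)\mmul U(\vec\alpha)$ column by column.

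First I would note that $n\mapsto p(kn+i)$ is again a polynomial of order $\le k$ in $n$, so the right-hand side $q(n):=\sum_{i=0}^{k-1}a_i\,p(kn+i)+b$ is a polynomial of order $\le k$ in $n$, and hence $V(q)$ is defined (it is the vector of coefficients of $n^1,\dots,n^k$; the constant term and the $+b$ are irrelevant, as already remarked in the paper right after the definition of $V$). The key observation is that $V$ is linear in its argument polynomial: $V(r_1+r_2)=V(r_1)+V(r_2)$ and $V(c\cdot r)=c\,V(r)$ for scalars $c$, since passing to the coefficient tuple is linear. Applying this,
\begin{align*}
  V(\wprod{\tup{\vec\alpha}}{p}) = V\!\Big(\sum_{i=0}^{k-1} a_i\, p(kn+i) + b\Big) = \sum_{i=0}^{k-1} a_i\, V(p(kn+i))\,.
\end{align*}
Now by the definition of $M(p)$, the $i$-th column ($i=0,\dots,k-1$) of $M(p)$ is exactly $V(p(kn+i))$, and $U(\vec\alpha)=(a_0,\dots,a_{k-1})^T$, so $\sum_{i=0}^{k-1} a_i\, V(p(kn+i))$ is precisely the matrix-vector product $M(p)\mmul U(\vec\alpha)$. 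This yields the claimed equality.

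The only point requiring a little care — and the mild obstacle — is bookkeeping about \emph{orders} and the dropped constant term: one must check that $p(kn+i)$ genuinely has order $\le k$ in $n$ (it has order exactly $k$ if $p$ has order exactly $k$, but could be smaller; either way $V$ makes sense as a length-$k$ vector with trailing zeros), and that discarding the constant term $c_0$ and the additive $b$ is harmless because $V$ ignores them by definition. I would also remark explicitly that $\wprod{\tup{\vec\alpha}}{p}$ really does unfold to the single sum above rather than an iterated product, since the tuple of weights here has length one — this is the place where a reader could get confused by the recursive clause $(\wprod{\vec\alpha}{f})(n+1) = (\wprod{\vec\alpha'}{\shift{|\vec\alpha_0|-1}{f}})(n)$, which for a one-element tuple simply rotates back to $\vec\alpha$ and shifts by $k$. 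Once that unfolding is made explicit, the rest is just linearity of $V$ and reading off the columns of $M(p)$, so the proof is short.
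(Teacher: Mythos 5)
Your proposal is correct and follows exactly the same computation as the paper's proof: write $M(p)\mmul U(\vec{\alpha})$ as $\sum_{i=0}^{k-1} a_i V(p(kn+i))$, use linearity of $V$ and the fact that $V$ ignores constant terms (including $b$), and recognise the result as $V(\wprod{\tup{\vec{\alpha}}}{p})$. The paper's proof is just this one-line calculation; your additional remarks on unfolding the one-element weighted product and on order bookkeeping are accurate but not needed beyond what the paper already states.
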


\begin{proof}
  We calculate
  \begin{align*}
      M(p) \mmul U(\vec{\alpha}) &= \sum_{i = 0}^{k - 1} a_i V(p(kn + i))
      = V\big(\sum_{i = 0}^{k - 1} a_i p(kn + i)\big) \\
      &= V\big(\sum_{i = 0}^{k - 1} a_i p(kn + i) + b\big)
      = V(\wprod{\tup{\vec{\alpha}}}{p})\;,
  \end{align*}
  which proves the lemma.
\end{proof}

Let us take a closer look at the matrix $M(n^k)$.
The element in the $i$th row and $j$th column is
%\begin{align*}
$M_{i,j} = \binom{k}{i} k^i (j - 1)^{k - i}$\;.
%\end{align*}
Dividing the $i$th row by $\binom{k}{i} k^i$ for each $i$
gives a Vandermonde-type matrix, which is invertible.
Thus also $M(n^k)$ is invertible.  

\begin{lemma}\label{lem:invert}
  For $k \ge 1$, $M(n^k)$ is invertible. \qed
\end{lemma}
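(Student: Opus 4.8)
The statement to prove is Lemma~\ref{lem:invert}: for $k \ge 1$, the square matrix $M(n^k)$ is invertible. The plan is essentially to make rigorous the informal argument already sketched in the paragraph preceding the lemma. First I would recall that, by definition, the $j$th column of $M(n^k)$ is the vector $V((kn+j-1)^k)$ for $j \in \{1,\ldots,k\}$, i.e.\ the coefficients of $n^1,\ldots,n^k$ in the polynomial $(kn + (j-1))^k$. Expanding via the binomial theorem, $(kn + (j-1))^k = \sum_{i=0}^{k} \binom{k}{i} (kn)^{k-i} (j-1)^i = \sum_{i=0}^{k} \binom{k}{i} k^{k-i} (j-1)^i n^{k-i}$, so the coefficient of $n^{m}$ (for $1 \le m \le k$) is $\binom{k}{k-m} k^{m} (j-1)^{k-m}$. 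Reindexing so that row $m$ of $M(n^k)$ corresponds to the coefficient of $n^m$, the $(m,j)$ entry equals $\binom{k}{m} k^{m} \cdot (j-1)^{k-m}$. (This matches the formula $M_{i,j} = \binom{k}{i} k^i (j-1)^{k-i}$ in the text, up to the indexing convention for which power of $n$ a given row records.)

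The key step is then to factor out the row-dependent scalars: write $M(n^k) = D \cdot W$, where $D$ is the diagonal matrix with $m$th diagonal entry $\binom{k}{m} k^{m}$, and $W$ is the matrix with $(m,j)$ entry $(j-1)^{k-m}$, for $m \in \{1,\ldots,k\}$ and $j \in \{1,\ldots,k\}$. Since $k \ge 1$, each diagonal entry $\binom{k}{m} k^{m}$ is a strictly positive integer, so $\det D \ne 0$ and $D$ is invertible. It remains to show $W$ is invertible. The matrix $W$ is, up to reordering its rows (replacing the exponent $k-m$ by $m' = k-m$, which ranges over $\{0,1,\ldots,k-1\}$ as $m$ ranges over $\{1,\ldots,k\}$), exactly the Vandermonde matrix with entries $x_j^{m'}$ where $x_j = j - 1$ for $j = 1,\ldots,k$, i.e.\ the nodes $0, 1, \ldots, k-1$. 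These nodes are pairwise distinct, so the Vandermonde determinant $\prod_{1 \le a < b \le k} (x_b - x_a)$ is nonzero; hence $W$ is invertible. Therefore $\det M(n^k) = \det D \cdot \det W \ne 0$, and $M(n^k)$ is invertible.

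There is no real obstacle here; the lemma is a standard linear-algebra fact and the only mild care needed is bookkeeping the index conventions — which row of $M(n^k)$ records the coefficient of which power $n^m$, and correspondingly that the dropped constant term $c_0$ (which is why $V$ and hence $M$ live in dimension $k$ rather than $k+1$) does not interfere, since for $m \ge 1$ the coefficient of $n^m$ in $(kn+(j-1))^k$ is independent of and unaffected by the constant term. One should also note that for $k = 1$ the matrix $M(n)$ is the $1 \times 1$ matrix whose single entry is the coefficient of $n$ in $(n+0)^1 = n$, namely $1 \ne 0$, so the base case is trivially covered by the same computation. I would present the argument in this order: (1) compute the $(m,j)$ entry by binomial expansion; (2) factor as $D \cdot W$ with $D$ diagonal and positive; (3) identify $W$ as a Vandermonde matrix on the distinct nodes $0,1,\ldots,k-1$ and invoke the Vandermonde determinant formula; (4) conclude $\det M(n^k) \ne 0$.
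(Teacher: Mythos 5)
Your proof is correct and follows essentially the same route as the paper, which establishes the lemma in the paragraph just before its statement: compute $M_{i,j} = \binom{k}{i} k^i (j-1)^{k-i}$, divide out the row factors $\binom{k}{i}k^i$, and recognise the result as a Vandermonde matrix on the distinct nodes $0,1,\dots,k-1$. Your version merely adds the explicit $D\cdot W$ factorisation and index bookkeeping that the paper leaves implicit.
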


\begin{theorem}\label{thm:atom}
    Let $k \geq 1$.
    Let $a_0, \dots, a_{k - 1}$ be positive rational numbers,
    \begin{math}
        \vec{\alpha} = \tup{a_0, \dots, a_{k - 1}, 0},
    \end{math}
    and
    \begin{align*}
        p(n) = (\wprod{\tup{\vec{\alpha}}}{n^k})(n)
        = \sum_{i = 0}^{k - 1} a_i (k n + i)^k.
    \end{align*}
    Then
    \begin{math}
        \seq{q(n)} \fstred \seq{p(n)}
    \end{math}
    for all $q(n) \in \qpoly_k$.
    Moreover, the degree $\convclass{\seq{p(n)}}$ is an atom.
    Note that the degree $\convclass{\seq{p(n)}}$ is 
    the infimum of all degrees of words $\seq{q(n)}$ with $q(n) \in \qpoly_k$.
\end{theorem}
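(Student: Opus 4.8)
The plan is to prove the three claims in order: first that $\seq{q(n)} \fstred \seq{p(n)}$ for all $q(n) \in \qpoly_k$, then that $\convclass{\seq{p(n)}}$ is the infimum of all these degrees, and finally that it is an atom. The heart of the matter is the first claim, and the key is Lemma~\ref{lem:matrix} combined with the invertibility of $M(n^k)$ (Lemma~\ref{lem:invert}). First I would observe that, by Lemma~\ref{lem:epsilon}, any $\seq{q(n)}$ transduces to some $\seq{r(n)}$ with $r(n) = n^k + b_{k-1}n^{k-1} + \cdots + b_1 n$ where the $b_i$ are as small as we like. So it suffices to transduce $\seq{r(n)}$ to $\seq{p(n)}$ for such $r$ with sufficiently small coefficients. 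Now $p(n) = (\wprod{\tup{\vec{\alpha}}}{n^k})(n)$, so $V(p) = M(n^k)\,U(\vec{\alpha})$ by Lemma~\ref{lem:matrix}; the idea is to find a weight $\vec{\beta} = \tup{b_0,\dots,b_{k-1},b}$ with all $b_i \ge 0$ such that $V(\wprod{\tup{\vec{\beta}}}{r}) = V(p)$. Using $M(r) = M(n^k) + E$ where $E$ is a matrix whose entries are controlled by the small coefficients $b_{k-1},\dots,b_1$ of $r$, we want $M(r)\,U(\vec{\beta}) = V(p) = M(n^k)\,U(\vec{\alpha})$, i.e.\ $U(\vec{\beta}) = M(r)^{-1} M(n^k)\, U(\vec{\alpha})$. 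Since $M(n^k)$ is invertible and $E \to 0$ as the $b_i \to 0$, the matrix $M(r)$ is invertible and $M(r)^{-1}M(n^k) \to I$ as $\varepsilon \to 0$; because $U(\vec{\alpha}) = (a_0,\dots,a_{k-1})^T$ has \emph{strictly positive} entries, for $\varepsilon$ small enough $U(\vec{\beta}) := M(r)^{-1}M(n^k)U(\vec{\alpha})$ still has strictly positive (in particular non-negative) entries, hence is a legal weight vector. Then by Lemma~\ref{lem:matrix} again, $\wprod{\tup{\vec{\beta}}}{r}$ and $p$ agree on coefficients $c_1,\dots,c_k$; they may differ in the constant term, but that is irrelevant since $\seq{f(n)} \fstconv \seq{f(n)+c}$. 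Applying Lemma~\ref{lem:wprod:FST:q} gives $\seq{r(n)} \fstred \seq{\wprod{\tup{\vec{\beta}}}{r}} \fstconv \seq{p(n)}$, so $\seq{q(n)} \fstred \seq{p(n)}$ as desired.

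For the infimum claim: $\convclass{\seq{p(n)}}$ is a lower bound of $\{\convclass{\seq{q(n)}} : q \in \qpoly_k\}$ by what we just proved, and $p(n)$ itself lies in $\qpoly_k$ (it is a polynomial of order $k$ with non-negative rational coefficients, since the $a_i$ are positive), so $\convclass{\seq{p(n)}}$ is actually the least element of this set, hence trivially its infimum.

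For atomicity: suppose $\botdeg \fstredistrict \convclass{\sigma} \fstredi \convclass{\seq{p(n)}}$; I must show $\convclass{\sigma} = \convclass{\seq{p(n)}}$. Since $\seq{p(n)}$ is spiralling (it is $\seq{r}$ for a polynomial with natural coefficients after naturalising) and $\sigma \notin \botdeg$, Proposition~\ref{prop:poly-transducts} applies: $\seq{p(n)}\fstred\sigma$ with $\sigma$ non-trivial gives $\sigma \fstred \seq{q(n)}$ for some polynomial $q(n)$ of order $k$ with non-negative integer coefficients. Thus $q(n) \in \qpoly_k$, so by the first part of the theorem $\seq{q(n)} \fstred \seq{p(n)}$, whence $\seq{p(n)} \fstred \sigma \fstred \seq{q(n)} \fstred \seq{p(n)}$, and all of these are $\fstconv$-equivalent; in particular $\convclass{\sigma} = \convclass{\seq{p(n)}}$. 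Therefore no degree strictly between $\botdeg$ and $\convclass{\seq{p(n)}}$ exists, and since $\seq{p(n)} \notin \botdeg$ (the block sizes tend to infinity, so it is not ultimately periodic), $\convclass{\seq{p(n)}}$ is an atom.

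The main obstacle is the first step: one must verify carefully that the perturbation argument goes through, i.e.\ that the entries of $E = M(r) - M(n^k)$ can genuinely be made uniformly small by shrinking the $b_i$ (this uses that the entries of $M(r)$ are fixed polynomial expressions in the coefficients of $r$, continuous in those coefficients), and that this forces $M(r)^{-1}M(n^k)U(\vec{\alpha})$ into the open positive orthant — which works precisely because $U(\vec{\alpha})$ was chosen with \emph{strictly} positive entries, giving the needed slack. One also has to be mildly careful that the various naturalisations in passing between $\qpoly_k$ and integer-coefficient polynomials do not disturb the weighted-product bookkeeping, but this is handled uniformly by Lemma~\ref{lem:wprod:FST:q} and the invariance $\seq{f} \fstconv \seq{d f}$.
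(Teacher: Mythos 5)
Your proposal is correct and follows essentially the same route as the paper's proof: reduce to a nearby polynomial $q_\varepsilon$ via Lemma~\ref{lem:epsilon}, solve $M(q_\varepsilon)\,U = V(p)$ and use continuity of the matrix inverse together with the strict positivity of $U(\vec{\alpha})$ to obtain a legal weight for small $\varepsilon$, then conclude atomicity via Proposition~\ref{prop:poly-transducts}. Your $U(\vec{\beta}) = M(r)^{-1}M(n^k)U(\vec{\alpha})$ is literally the paper's $U_\varepsilon = M(q_\varepsilon)^{-1}V(p)$ since $V(p)=M(n^k)U(\vec{\alpha})$, and your explicit remark that the infimum is in fact a minimum (because $p\in\qpoly_k$) is a small but welcome addition.
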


\begin{proof}
    By Lemma \ref{lem:matrix},
    \begin{math}
        M(n^k) \mmul U(\vec{\alpha}) = V(p).
    \end{math}
    By Lemma~\ref{lem:invert}, $M(n^k)$ is invertible and we can write
    \begin{math}
        U(\vec{\alpha}) = M(n^k)^{-1} V(p).
    \end{math}
    By Lemma~\ref{lem:epsilon},
    for every $\varepsilon > 0$ there exists $q_\varepsilon\in \qpoly_k$ such that
    $\seq{q(n)} \fstred \seq{q_\varepsilon(n)}$ and
    \begin{equation*}
        q_\varepsilon(n) = n^k + b_{k-1} n^{k-1} + \dots + b_1 n
    \end{equation*}
    with $0 \le b_i \le \varepsilon$ for every $i \in \{1, \dots, k - 1\}$.
    We will show that if $\varepsilon$ is small enough, then
    \begin{math}
        \seq{q_\varepsilon(n)} \fstred \seq{p(n)}.
    \end{math}

    We have
    \begin{math}
        \lim_{\varepsilon \to 0} M(q_\varepsilon) = M(n^k).
    \end{math}
    As $\det(M(n^3)) \ne 0$ and the determinant function is continuous,
    also $\det(M(q_\varepsilon)) \ne 0$ for all sufficiently small $\varepsilon$.
    Then $M(q_\varepsilon)$ is invertible, and we define
    \begin{math}
        U_\varepsilon = M(q_\varepsilon)^{-1} V(p).
    \end{math}
    We would like to have $U_\varepsilon = U(\gamma)$ for some weight $\gamma$.
    This is not always possible,
    because some elements of $U_\varepsilon$ might be negative.
    However, by the continuity of matrix inverse and product,
    \begin{align*}
        \lim_{\varepsilon \to 0} U_\varepsilon
        &= \lim_{\varepsilon \to 0} (M(q_\varepsilon)^{-1} V(p)) 
        = (\lim_{\varepsilon \to 0} M(q_\varepsilon))^{-1} V(p) 
        = M(n^k)^{-1} V(p) = U(\vec{\alpha}).
    \end{align*}
    Since every element of $U(\vec{\alpha})$ is positive,
    we can fix a small enough $\varepsilon$ so that
    every element of $U_\varepsilon$ is positive.
    Then we have $U_\varepsilon = U(\gamma)$ for some weight $\gamma$.

    We have
    \begin{math}
        M(q_\varepsilon) \mmul U(\gamma) = V(\wprod{\tup{\gamma}}{q_\varepsilon})
    \end{math}
    by Lemma \ref{lem:matrix},
    and
    \begin{math}
        M(q_\varepsilon) \mmul U(\gamma) = V(p)
    \end{math}
    by the definition of $U_\varepsilon$.
    As a consequence
    \begin{math}
        (\wprod{\tup{\gamma}}{q_\varepsilon})(n) = p(n) + c
    \end{math}
    for some constant $c$.
    By Lemma~\ref{lem:wprod:FST:q},
    we obtain
    \begin{math}
        \seq{q_\varepsilon(n)} \fstred \seq{p(n)}.
    \end{math}

   It remains to show that the degree $\convclass{\seq{p(n)}}$ is an atom.
   Assume that $\seq{p(n)} \fstred w$ and $w \not\in \botdeg$.
   By Proposition~\ref{prop:poly-transducts} we have 
   $w \fstred \seq{q(n)}$ for some $q(n) \in \qpoly_k$.
   As shown above, $\seq{q(n)} \fstred \seq{p(n)}$, thus $w \fstred \seq{p(n)}$.
   Hence $\convclass{\seq{p(n)}}$ is an atom.
\end{proof}

\section{A Hereditary Uncomputable Degree}\label{sec:uncomputable}

We show that for any countable set $\deg{D}$ of transducer degrees
that does not contain the bottom degree,
there exists a degree~$\deg{z} \ne \botdeg$ that such that $\deg{z}{\downarrow}$ 
contains no degree from $\deg{D}$.
Here $\deg{z}{\downarrow}$ is the \emph{cone} of $\deg{z}$, that is, the set of degrees below $\deg{z}$:
\begin{align*}
  \deg{z}{\downarrow} \;=\; \{\, \deg{a} \mid \deg{z} \fstred \deg{a} \,\} \;.
\end{align*}
To this end, we will prove the following theorem.
\begin{theorem}\label{thm:independent}
  Let $\mathcal{S} \subseteq \two^\nat$ be a countable set of words
  that contains no ultimately periodic words.
  Then there exists a word $w \in \two^\nat$ that is not ultimately periodic and 
  none of the transducts~$u$ of $w$, $w \fstred u$,
  resides in $\mathcal{S}$.
\end{theorem}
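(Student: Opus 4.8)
The plan is to construct $w$ by a forcing (diagonalisation) argument, as the limit of a strictly increasing chain of finite prefixes $\emptyword = p_0 \prefixof p_1 \prefixof p_2 \prefixof \cdots$. There are only countably many finite-state transducers with input alphabet $\two$ (up to renaming of states and output letters); fix an enumeration $A_0, A_1, \dots$ of them and an enumeration $s_0, s_1, \dots$ of $\mathcal{S}$ (repeating an element if $\mathcal{S}$ is finite). Every transduct $u$ of $w$ lying in $\str{\two}$ equals the output of some $A_j$ on $w$, so it is enough to meet, for all $i,j \in \nat$, the \emph{diagonalisation requirement} $R_{i,j}$: the output of $A_j$ on $w$ differs from $s_i$; together with, for every $k \in \nat$, the \emph{non-periodicity requirement} $N_k$: the word $1 0^k 1$ occurs as a factor of $w$. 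Interleaving all requirements into a single list $Q_0, Q_1, \dots$, at stage $m$ we pass from $p_m$ to some $p_{m+1} \succ p_m$ meeting $Q_m$. The key point is that $Q_m$ can always be met \emph{robustly}: $p_{m+1}$ will be chosen so that $Q_m$ holds for \emph{every} $w' \in \str{\two}$ with $p_{m+1} \prefixof w'$. Hence satisfying $Q_m$ is permanent, and the requirements never interfere.

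The requirement $N_k$ is met robustly and trivially by $p_{m+1} = p_m\, 1 0^k 1$; and if a word contains $1 0^k 1$ for all $k$, then the symbol $1$ occurs in it infinitely often with unbounded gaps, so it is not ultimately periodic. Thus it remains to meet each $R_{i,j}$ robustly. Fix $A = A_j$ with output function $\lambda$, transition function $\delta$ and initial state $q_0$; let $s = s_i$ and $p = p_m$; write $y = \lambda(q_0, p)$ (a finite word) and $q = \delta(q_0, p)$. If $y \not\prefixof s$, then every $w' \succeq p$ satisfies $\lambda(q_0, w') \succeq y \not\prefixof s$, so $\lambda(q_0, w') \ne s$ and we set $p_{m+1} = p\,0$. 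Otherwise $s = y s'$ for a unique $s' \in \str{\two}$, and we split into two cases. \emph{Case A:} some finite word $z$ has $\lambda(q, z) \not\prefixof s'$; then $\lambda(q_0, p z) = y\,\lambda(q, z) \not\prefixof s$, and $p_{m+1} = p z$ meets $R_{i,j}$ robustly (as in the previous sentence, and $z \ne \emptyword$ so this is a proper extension). \emph{Case B:} $\lambda(q, z) \prefixof s'$ for every finite word $z$. Then also $\lambda(q, u) \prefixof s'$ for every $u \in \str{\two}$, being the supremum of a nested chain of prefixes of $s'$. I claim that in Case B the output $\lambda(q, u)$ is in fact a \emph{finite} word for every $u \in \str{\two}$; granting this, $p_{m+1} = p\,0$ works, since any $w' \succeq p\,0$ has the form $p u$ with $u \in \str{\two}$, whence $\lambda(q_0, w') = y\,\lambda(q, u)$ is finite and so different from the infinite word $s$.

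To prove the claim, suppose for contradiction that $\lambda(q, u^*)$ is infinite for some $u^* = a_0 a_1 a_2 \cdots \in \str{\two}$. Its finite approximations $\lambda(q, a_0 \cdots a_{\ell-1})$ form a nested chain of prefixes of $s'$ of unbounded length, so their supremum is $s'$, i.e. $\lambda(q, u^*) = s'$. Reading $u^*$ from $q$, the visited states lie in the finite state set of $A$, so some state $r$ is visited infinitely often; since the emitted word $s'$ is infinite, there are two visits to $r$ with a non-empty emitted word between them, giving a finite word $z_0$ with $\delta(q, z_0) = r$ and a finite loop word $z_1$ with $\delta(r, z_1) = r$ and $\lambda(r, z_1) \ne \emptyword$. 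By the Case B hypothesis, $\lambda(q, z_0 z_1^k) = \lambda(q, z_0)\, \lambda(r, z_1)^k \prefixof s'$ for all $k$, and these prefixes have unbounded length, so $s' = \lambda(q, z_0)\, \lambda(r, z_1)^\omega$. Then $s = y s'$ is ultimately periodic, contradicting $s \in \mathcal{S}$. This proves the claim. Taking $w = \lim_m p_m$, the word $w$ is not ultimately periodic (by the $N_k$) and lies in $\str{\two}$, and for every transduct $u \in \str{\two}$ of $w$, which is the output of some $A_j$ on $w$, and every $s_i \in \mathcal{S}$, requirement $R_{i,j}$ gives $u \ne s_i$; hence no transduct of $w$ lies in $\mathcal{S}$.

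The only substantive step is Case B: showing that a transducer which cannot be pushed ``off track'' from $s$ by any finite input must \emph{stall} --- produce only finitely much output --- on every infinite continuation. This is exactly where the hypothesis that $\mathcal{S}$ contains no ultimately periodic word is used, via the state-pumping argument above; everything else (the count of transducers, the interleaving of requirements, the reduction of non-periodicity to the factors $1 0^k 1$) is routine bookkeeping.
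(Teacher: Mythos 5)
Your proof is correct and follows essentially the same diagonalisation as the paper's: your Case~A/Case~B dichotomy is exactly the paper's ``not predetermined''/``predetermined'' split, and your Case~B pumping argument is the paper's Lemma on predetermined transducers, except that the paper concludes ``finite or ultimately periodic'' and then invokes the hypothesis on $\mathcal{S}$, whereas you fold that hypothesis into the pumping to conclude ``finite'' directly. Your explicit $N_k$ requirements guaranteeing that $w$ itself is not ultimately periodic are a point of extra care: the paper's proof does not address this part of the statement at all.
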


Before proving this theorem, we mention a few corollaries.
\begin{corollary}\label{cor:uncomputable}
  There exists an uncomputable word $\uncomp \in \two^\nat$
  whose finite-state transducts are all uncomputable, ultimately periodic or finite.
\end{corollary}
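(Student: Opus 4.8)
The plan is to obtain Corollary~\ref{cor:uncomputable} as an immediate instance of Theorem~\ref{thm:independent}, by taking $\mathcal{S}$ to be exactly the class of words we wish to rule out. Concretely, I would set $\mathcal{S} \subseteq \two^\nat$ to be the set of all computable words over $\two$ that are \emph{not} ultimately periodic. This set is countable, since each of its members is computed by one of the countably many algorithms, and it contains no ultimately periodic word by construction; so Theorem~\ref{thm:independent} applies and produces a word $\uncomp \in \two^\nat$ that is not ultimately periodic and such that no transduct of $\uncomp$ belongs to $\mathcal{S}$.

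It then remains to check that this $\uncomp$ witnesses the corollary. First I would note that $\uncomp$ is uncomputable: the identity transducer gives $\uncomp \fstred \uncomp$, so if $\uncomp$ were computable then, being non-ultimately-periodic, it would lie in $\mathcal{S}$ --- contradiction. Next, given an arbitrary finite-state transduct $u$ of $\uncomp$, I would split into cases. If $u$ is finite we are done. Otherwise $u$ is infinite, and using the standard fact (Section~\ref{sec:degrees}) that every infinite word over a finite alphabet is $\fstconv$-equivalent to one over $\two$, together with transitivity of $\fstred$, I can fix $u' \in \two^\nat$ with $u \fstconv u'$ and $\uncomp \fstred u'$. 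If $u'$ is ultimately periodic, then so is $u$, because the ultimately periodic words form exactly the bottom degree $\botdeg$ and hence ultimate periodicity is a $\fstconv$-invariant. If $u'$ is not ultimately periodic, then $u' \notin \mathcal{S}$, so $u'$ is uncomputable; and since $u \fstconv u'$ and finite-state transducers realise computable functions, $u$ is uncomputable iff $u'$ is, so $u$ is uncomputable. In every case $u$ is finite, ultimately periodic, or uncomputable.

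I do not expect any real obstacle in this deduction --- all of the substance lies in Theorem~\ref{thm:independent} itself, which builds $w$ by diagonalising against the countably many (transducer, $\mathcal{S}$-word) pairs, and which we are permitted to assume. The only minor point requiring attention here is the reduction of transducts over an arbitrary finite output alphabet to a $\two$-representative; this is dispatched by the routine observations that finiteness, ultimate periodicity, and computability are all unchanged under passing to such a representative.
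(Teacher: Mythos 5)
Your proposal is correct and is essentially identical to the paper's own proof, which simply invokes Theorem~\ref{thm:independent} with $\mathcal{S}$ the set of computable words that are not ultimately periodic. The extra verification you supply (uncomputability of $\uncomp$ via the identity transducer, and the case analysis on transducts) is sound and merely makes explicit what the paper leaves to the reader.
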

\begin{proof}
  Follows from Theorem~\ref{thm:independent} 
  with $S$ the set of computable words that are not ultimately periodic.
\end{proof}

Theorem~\ref{thm:independent} and Corollary~\ref{cor:uncomputable}
have the following immediate implications for the hierarchy of transducer degrees.
\begin{corollary}\label{cor:independent:degrees}
  Let $\deg{D}$ be a countable set of transducer degrees
  not containing the bottom degree.
  Then there exists a degree~$\deg{z} \ne \botdeg$
  that has no degrees in $\deg{D}$ below itself, that is,
  $\deg{z}{\downarrow} \cap \deg{D} = \emptyset$.
\end{corollary}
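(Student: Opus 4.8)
The plan is to deduce Corollary~\ref{cor:independent:degrees} directly from Theorem~\ref{thm:independent} by packaging the given set of degrees into a single countable set of \emph{words}. First I would observe that every transducer degree is a countable set of words: there are only countably many finite-state transducers (each is a finite object over finite alphabets), hence for any fixed $v \in \str{\two}$ the set $\{\,u \mid v \fstred u\,\}$ of transducts of $v$ is countable; and any degree $\deg{d} \in \convclassset{\words}$, being contained in the set of transducts of any one of its members, is therefore countable. Since $\deg{D}$ is countable, the set $\mathcal{S} = \bigcup_{\deg{d} \in \deg{D}} \deg{d} \subseteq \str{\two}$ is a countable union of countable sets, hence countable.

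Next I would check that $\mathcal{S}$ contains no ultimately periodic words. Every ultimately periodic word lies in the bottom degree $\botdeg$, and by hypothesis $\botdeg \notin \deg{D}$; since distinct degrees are disjoint, no $\deg{d} \in \deg{D}$ contains an ultimately periodic word, so neither does $\mathcal{S}$. Applying Theorem~\ref{thm:independent} to $\mathcal{S}$ then yields a word $w \in \str{\two}$ that is not ultimately periodic and none of whose transducts lies in $\mathcal{S}$. I would set $\deg{z} = \convclass{w}$; since $w$ is not ultimately periodic, $\deg{z} \ne \botdeg$.

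It remains to verify $\deg{z}{\downarrow} \cap \deg{D} = \emptyset$. Suppose, for contradiction, that some $\deg{d} \in \deg{D}$ satisfies $\deg{z} \fstred \deg{d}$. Pick any representative $v \in \deg{d}$ (which is nonempty and, by definition of $\convclassset{\words}$, a subset of $\str{\two}$). By the definition of the partial order on degrees, $\convclass{w} = \deg{z} \fstred \deg{d} = \convclass{v}$ gives $w \fstred v$, so $v$ is a transduct of $w$. But $v \in \deg{d} \subseteq \mathcal{S}$, contradicting the choice of $w$. Hence $\deg{z}{\downarrow} \cap \deg{D} = \emptyset$, which completes the argument.

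The argument is short, and the only real subtlety — the step I would be most careful about — is that one must take the union of the \emph{entire} degrees in $\deg{D}$, rather than one representative word per degree: Theorem~\ref{thm:independent} rules out transducts that literally belong to $\mathcal{S}$, whereas the corollary must rule out transducts merely \emph{equivalent} to some word witnessing a degree in $\deg{D}$. Countability of each individual degree is precisely what makes this packaging legitimate, and it is the one place where the finiteness of transducers is used.
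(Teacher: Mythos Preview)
Your proof is correct and is exactly the natural way to spell out what the paper leaves as an immediate implication of Theorem~\ref{thm:independent}; the paper gives no separate argument for this corollary.

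One small correction to your closing commentary: taking a single representative $v_{\deg{d}}$ from each $\deg{d} \in \deg{D}$ would in fact already suffice. If $\deg{z} \fstred \deg{d}$, then by definition of the order on degrees $w \fstred v_{\deg{d}}$ for the specific representatives $w$ and $v_{\deg{d}}$, so $v_{\deg{d}}$ itself is a transduct of $w$ --- there is no need to separately rule out words merely \emph{equivalent} to $v_{\deg{d}}$. Your argument via the full union is still valid, just slightly more than necessary; the countability of individual degrees is not actually where the argument hinges.
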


The following result is somewhat reminiscent of 
the situation in Turing degrees where there
exists a set of incomparable degrees of size continuum.

\begin{corollary}\label{cor:cones}
  Let $\deg{C}$ be a countable set of degrees
  with pairwise almost disjoint cones, that is, for all $\deg{a}, \deg{b} \in \deg{C}$~with~$\deg{a} \ne \deg{b}$,
  we have $\deg{a}{\downarrow} \cap \deg{b}{\downarrow} = \{ \botdeg \}$.
  Then $\deg{C}$ can be extended to an uncountable set
  of degrees with pairwise almost disjoint cones.
\end{corollary}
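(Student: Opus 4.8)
The plan is to construct the uncountable family by transfinite recursion of length~$\omega_1$, repeatedly applying Corollary~\ref{cor:independent:degrees} to adjoin one new degree at a time. The single observation that makes this go through is that \emph{every cone is countable}: fixing a representative word $w$ of a degree $\deg{a}$, every degree below $\deg{a}$ has the form $\convclass{u}$ for some transduct $u$ of $w$, and $w$ has only countably many transducts since there are only countably many finite-state transducers. Hence any countable union of cones is again a countable set of degrees, which is exactly the hypothesis needed to keep invoking Corollary~\ref{cor:independent:degrees}.

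Concretely, I would build an increasing chain $(\deg{C}_\alpha)_{\alpha < \omega_1}$ of countable degree sets, each extending $\deg{C}$ and each having pairwise almost disjoint cones. Set $\deg{C}_0 = \deg{C}$. At a limit ordinal $\lambda$, put $\deg{C}_\lambda = \bigcup_{\alpha < \lambda}\deg{C}_\alpha$; this is a countable union of countable sets, hence countable, and any two of its members already occur together in some earlier $\deg{C}_\alpha$, so almost-disjointness of cones is inherited. At a successor stage, given $\deg{C}_\alpha$, form
\[
  \deg{D}_\alpha \;=\; \bigcup_{\deg{b}\in\deg{C}_\alpha}\bigl(\deg{b}{\downarrow}\setminus\{\botdeg\}\bigr),
\]
which by the observation above is countable and does not contain $\botdeg$; by Corollary~\ref{cor:independent:degrees} choose $\deg{z}_\alpha \ne \botdeg$ with $\deg{z}_\alpha{\downarrow}\cap\deg{D}_\alpha = \emptyset$, and set $\deg{C}_{\alpha+1} = \deg{C}_\alpha \cup \{\deg{z}_\alpha\}$.

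Then I would verify the invariant and conclude. For every $\deg{b}\in\deg{C}_\alpha$, emptiness of $\deg{z}_\alpha{\downarrow}\cap\deg{D}_\alpha$ gives $\deg{z}_\alpha{\downarrow}\cap\deg{b}{\downarrow}\subseteq\{\botdeg\}$, and the reverse inclusion holds since $\botdeg$ lies below every degree; so $\deg{C}_{\alpha+1}$ again has pairwise almost disjoint cones. Also $\deg{z}_\alpha\notin\deg{C}_\alpha$, for otherwise $\deg{z}_\alpha\in\deg{z}_\alpha{\downarrow}\cap\deg{D}_\alpha$; hence the chain strictly grows and the $\deg{z}_\alpha$ are pairwise distinct. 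Finally $\deg{C}_{\omega_1} = \bigcup_{\alpha<\omega_1}\deg{C}_\alpha$ extends $\deg{C}$, has pairwise almost disjoint cones by the limit-stage argument, and contains the $\aleph_1$ distinct degrees $\deg{z}_\alpha$, so it is uncountable, as required.

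The whole argument is essentially bookkeeping once Corollary~\ref{cor:independent:degrees} is in hand; the only real content --- and the step worth stating with care --- is the countability of cones, since that is what guarantees $\deg{D}_\alpha$ never grows too large to feed back into the corollary, so that the recursion runs through all $\aleph_1$ stages. Beyond that I anticipate no genuine obstacle.
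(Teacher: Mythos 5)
Your proof is correct and takes essentially the same route as the paper's: the paper passes to a maximal extension of $\deg{C}$ and derives a contradiction from its countability via Corollary~\ref{cor:independent:degrees}, while you unwind that maximality argument into an explicit transfinite recursion of length $\omega_1$ built on the very same one-step extension. You also spell out the one fact the paper leaves implicit --- that every cone is countable because each degree has only countably many transducts --- which is exactly what keeps the set $\deg{D}$ fed into Corollary~\ref{cor:independent:degrees} countable at every stage.
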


\begin{proof}
  Let $\deg{C}'$ be a maximal extension of $\deg{C}$.
  If $\deg{C}'$ was countable, then by Corollary~\ref{cor:independent:degrees}
  it could be extended by a disjoint cone:
  take $\deg{D} = \{ \deg{b} \mid \deg{a} \in \deg{C}', \deg{a} \fstred \deg{b}\} \setminus \{\botdeg\}$.
  This contradicts maximality of $\deg{C}'$.
\end{proof}
\noindent
We do not know if `uncountable' can be replaced by continuum in the corollary.

We call a degree \emph{uncomputable} if it contains an uncomputable word.
Note that degrees cannot contain both computable and uncomputable words
since the set of computable words is closed under finite-state transduction.

\begin{corollary}\label{cor:uncomputable:degrees}
  There exists an uncomputable transducer degree $\convclass{\uncomp}$
  that has only uncomputable degrees and the bottom degree below itself.
\end{corollary}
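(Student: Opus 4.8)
The plan is to read the statement off directly from Corollary~\ref{cor:uncomputable}. First I would take the word $\uncomp \in \two^\nat$ provided there, so that every finite-state transduct of $\uncomp$ is uncomputable, ultimately periodic, or finite. Since $\uncomp$ is uncomputable it is in particular not ultimately periodic, so $\convclass{\uncomp} \neq \botdeg$; and $\convclass{\uncomp}$ is an uncomputable degree because it contains the uncomputable word $\uncomp$ (a degree cannot contain both a computable and an uncomputable word, as the set of computable words is closed under finite-state transduction).

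Next I would unfold what it means for a degree to lie below $\convclass{\uncomp}$ in the partial order $\pair{\convclassset{\words}}{\fstred}$: such a degree $\deg{a}$ is of the form $\convclass{u}$ for some \emph{infinite} word $u \in \str{\two}$ with $\uncomp \fstred u$. By the defining property of $\uncomp$, this transduct $u$ is uncomputable or ultimately periodic (the ``finite'' alternative of Corollary~\ref{cor:uncomputable} cannot apply, since $u$ is infinite). If $u$ is ultimately periodic then $\deg{a} = \convclass{u} = \botdeg$; otherwise $u$ is uncomputable and hence $\deg{a}$ is an uncomputable degree. Therefore $\convclass{\uncomp}{\downarrow}$ contains only uncomputable degrees and $\botdeg$, which is exactly the claim, with $\convclass{\uncomp}$ itself uncomputable as noted above.

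There is essentially no obstacle here: all the work was already done in Theorem~\ref{thm:independent} and Corollary~\ref{cor:uncomputable}. The only two points worth a sentence of care are (i) that the transducer degree hierarchy is taken over \emph{infinite} words, which is what lets us discard the ``finite transduct'' case, and (ii) that speaking of an ``uncomputable degree'' is legitimate, i.e.\ that a single degree cannot contain both a computable and an uncomputable word — true because a finite-state transducer can be simulated by a Turing machine, so every transduct of a computable word is computable.
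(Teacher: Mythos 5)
Your proposal is correct and matches the paper's intent exactly: the paper states this corollary without proof, treating it as an immediate consequence of Corollary~\ref{cor:uncomputable} together with the remark that a degree cannot contain both computable and uncomputable words. Your write-up simply makes explicit the two small points (discarding finite transducts, and the legitimacy of calling a degree uncomputable) that the paper leaves implicit.
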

 
For the proof of Theorem~\ref{thm:independent}
we introduce a few auxiliary definitions and lemmas.

\begin{definition}
  Let $A = \sixtuple{\alphin}{\alphout}{\states}{\istate{0}}{\stransfun}{\soutfun}$
  be a finite-state transducer,
  and $w \in \alphin^*$ a word.
  Then~$A$ is \emph{predetermined~by~$w$} 
  if there exists $u \in \alphout^\nat$
  such that for every $w' \in \alphin^*$
  it holds that $\outfun{\istate{0}}{ww'} \prefixof u$.
\end{definition}

When a transducer is predetermined by $w$, 
then it transduces words starting with $w$ to ultimately periodic words (or finite words).
\begin{lemma}\label{lem:predetermined}
  Let $A = \sixtuple{\alphin}{\alphout}{\states}{\istate{0}}{\stransfun}{\soutfun}$
  be predetermined by $w \in \alphin^*$, and let $w' \in \str{\alphin}$.
  If the word $\outfun{\istate{0}}{ww'}$ is infinite, then it is ultimately periodic. 
\end{lemma}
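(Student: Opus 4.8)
The plan is to extract from the hypothesis a single infinite word $u$ that all the finite outputs $\outfun{\istate{0}}{ww''}$ refine, to show that $u$ is necessarily ultimately periodic, and to observe that the infinite output $\outfun{\istate{0}}{ww'}$ must then coincide with $u$. Concretely: since $A$ is predetermined by $w$, fix $u \in \alphout^\nat$ with $\outfun{\istate{0}}{ww''} \prefixof u$ for all finite $w'' \in \alphin^*$. Writing $ww' = c_1 c_2 c_3 \cdots$, the word $\outfun{\istate{0}}{ww'}$ is the limit, in the metric $d$ of Section~\ref{sec:prelims}, of the finite outputs $\outfun{\istate{0}}{c_1 \cdots c_n}$; for $n \ge |w|$ each of these is $\outfun{\istate{0}}{ww''}$ with $w'' = c_{|w|+1}\cdots c_n$ finite, hence a prefix of $u$. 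Being a limit of prefixes of $u$, the word $\outfun{\istate{0}}{ww'}$ is therefore itself a prefix of $u$, and since it is infinite by assumption it equals $u$. So it suffices to prove that $u$ is ultimately periodic.

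For this I would examine the run $r_0 = \istate{0}, r_1, r_2, \ldots$ of $A$ on $ww'$, where $r_{n+1} = \transfun{r_n}{c_{n+1}}$. As $\states$ is finite, some state $r$ satisfies $r_n = r$ for infinitely many $n$; fix such indices $p_0 < p_1 < \cdots$ with $p_0 \ge |w|$, and for each $i$ let $\zeta_i = c_{p_i + 1} \cdots c_{p_{i+1}}$, so $\transfun{r}{\zeta_i} = r$ and $u = \outfun{\istate{0}}{c_1\cdots c_{p_0}} \cdot \outfun{r}{\zeta_0} \cdot \outfun{r}{\zeta_1} \cdots$. Since $u$ is infinite while $\outfun{\istate{0}}{c_1\cdots c_{p_0}}$ is finite, some $\outfun{r}{\zeta_j}$ is nonempty; fix such a $j$ and set $y = \outfun{r}{\zeta_j}$, $x = c_1 \cdots c_{p_j}$ and $s = \outfun{\istate{0}}{x}$. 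Because $p_j \ge |w|$, for every $t \ge 0$ the input $x \zeta_j^t$ has the form $ww''$ with $w''$ finite, and reading it starting from $\istate{0}$ ends in state $r$ with output $s y^t$; hence by predeterminedness $s y^t \prefixof u$ for all $t$. As $|y| \ge 1$, the words $s y^t$ are prefixes of $u$ of unbounded length, so they converge to $u$; but they also converge to $s y^\omega$, whence $u = s y^\omega$ is ultimately periodic, as required.

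The one place demanding care is the pumping step. One has to insist that the recurrence positions $p_i$ lie beyond $|w|$, so that the pumped inputs $x \zeta_j^t$ really do begin with $w$ and the defining property of being predetermined by $w$ applies to them; and one has to use the infinitude of $\outfun{\istate{0}}{ww'}$ to secure a loop $\zeta_j$ whose output $y$ is nonempty (a loop with empty output would leave $u$ finite). Everything else is routine manipulation of the prefix order $\prefixof$ and of the metric $d$.
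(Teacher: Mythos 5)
Your proof is correct and follows essentially the same route as the paper's: identify the predetermined word $u$ with the infinite output, find a state recurring in the run beyond $w$, extract a loop with nonempty output, and pump it, using predeterminedness to force every pumped output $sy^t$ to be a prefix of $u$, whence $u = sy^{\omega}$. The only cosmetic difference is that the paper phrases the last step as ``$v$ is a transduct of an ultimately periodic input, hence ultimately periodic,'' while you compute the periodic form of $u$ explicitly.
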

\begin{proof}
  Let $u \in \alphout^\nat$ such that 
  \begin{align}
    \myal{w' \in \alphin^*}{\outfun{\istate{0}}{ww'} \prefixof u} \label{eq:prefix}
  \end{align}
  Let $w' \in \alphin^\nat$ such that $\outfun{\istate{0}}{ww'}$ is infinite.
  Note that from~\eqref{eq:prefix} it follows that $\outfun{\istate{0}}{ww'} = u$.
  Let $q = \transfun{\istate{0}}{w}$.
  Then $\outfun{\istate{0}}{ww'} = \outfun{\istate{0}}{w}\outfun{q}{w'}$.
  By the infinitary pigeonhole principle, there exists some state $q' \in Q$
  that is visited infinitely often when the automaton reads $w'$
  starting in state~$q$.
  Consequently there are non-empty words $w_0,w_1,\ldots \in \alphin^+$  such that
  $w' = w_0 w_1 w_2\cdots$, and
  for every $n\in\nat$ we have that $\transfun{q}{w_0w_1\cdots w_n} = q'$.
  As $\outfun{\istate{0}}{ww'}$ and hence $\outfun{q}{w'}$ is infinite, 
  there exists some $i > 0$ such that $\outfun{q'}{w_i} \ne \emptyword$.
  Define $v = \outfun{q}{w_0w_1\cdots w_{i-1} w_i w_i w_i w_i\cdots}$,
  then it follows that $v$ is infinite.
  Moreover $v$ is ultimately periodic as it is the transduct of an ultimately periodic sequence. 
  We obtain $\outfun{\istate{0}}{w}v = u$ from \eqref{eq:prefix},
  and thus $u = \outfun{\istate{0}}{ww'}$ is ultimately periodic.
\end{proof}

We are now ready to prove Theorem~\ref{thm:independent}.

\begin{proof}[Proof of Theorem~\ref{thm:independent}]
  Let $\mathcal{S} \subseteq \two^\nat$ be a countable set of words
  that contains no ultimately periodic words.
  Let $\mathcal{A}$ be the set of all finite-state transducers over the alphabet $\two$.
  Note that $\mathcal{A} \times \mathcal{S}$ is countable and
  let $(A_0,s_0), (A_1,s_1), \ldots$ be an enumeration of this set.
  For $i =0,1,2,\ldots$ we define
  words $w_i \in \two^+$ as follows.
    Let $v_i = w_0\cdots w_{i-1}$. We stipulate that $v_0 = \emptyword$.
    Let $A_i = \sixtuple{\two}{\two}{\states}{\istate{0}}{\stransfun}{\soutfun}$.
    If $A_i$ is predetermined by $v_i$, then the choice of $w_i$ is arbitrary; say $w_i = 0$.
    Otherwise, there exist words $x,y \in \alphin^+$ 
    such that neither $x' \prefixof y'$ nor $y' \prefixof x'$,
    where $x' = \outfun{\istate{0}}{v_ix}$ and $y' = \outfun{\istate{0}}{v_iy}$.
    Then there exists an index $j < \min \{|x'|,\, |y'|\}$ such that
    $x'(j) \ne y'(j)$.
    Define $w_i = y$ if $x'(j) = s_i(j)$, and $w_i = x$, otherwise. 
    This choice guarantees that
    \begin{align}
      \outfun{\istate{0}}{w_0\cdots w_{i}}  \not\prefixof s_i \label{eq:not:prefix}
    \end{align}
  Let   $w = w_0w_1w_2\cdots$.
  Assume that there exists $u \in \mathcal{S}$ with $w \fstred u$.
  Then there exist a finite-state transducer $A \in \mathcal{A}$ such that $w \red{A} u$.
  However, there is some $i \in \nat$ such that $(A_i,s_i) = (A,u)$.
  If $A_i$ is predetermined by $w_0w_1\cdots w_{i-1}$,
  then $u$ is ultimately periodic by Lemma~\ref{lem:predetermined}, and hence $u \not\in \mathcal{S}$.
  Otherwise property~\eqref{eq:not:prefix} contradicts 
  $w = w_0w_1w_2 \cdots \red{A} s_i = u$.
  Thus $w$ has the required properties.
\end{proof}

\section{Future Work}\label{sec:future}

\xtend{
We have shown that there are at least $\aleph_0$ many atoms in the hierarchy of transducer degrees.
They reside in a class of words over $\{0,1\}$
in which the distance of ones grows according to a polynomial.
In particular,
we have proven that, for every $k \ge 1$,
there exists a polynomial $p_k(n)$ of order $k$ 
such that the degree of the word $\seq{p_k(n)}$ is an atom (see Theorem~\ref{thm:atom}). 
This atom is the unique atom among,
and the infimum of, the degrees of polynomials of order $k$.
% Note that the atom degree is unique, but there are various polynomials of order $k$ that fall into this atom.

The degrees of $\seq{n}$ and $\seq{n^2}$
are the unique atoms among the polynomials of order $1$ and $2$, respectively.
Surprisingly, we find that the degree of $\seq{n^k}$ is \emph{not} an atom whenever $k \ge 3$.
The degree of $\seq{n^k}$ lies strictly above the degree of $\seq{p_k(n)}$.
}

Our results hint at an interesting structure of the transducer degrees of 
words $\seq{p(n)}$ for polynomials $p(n)$ of order $k \in \nat$.
Here, we have only scratched the surface of this structure.
Many questions remain open, for example:\smallskip
\begin{enumerate}\setlength{\itemsep}{.2ex}
  \item [\myquestion] \qlabel{q:spiralling:k} What is the structure of `polynomial spiralling' degrees (depending on $k \in \nat$)?
  Is the number of degrees finite for every $k \in \nat$?
  \item [\myquestion] \qlabel{q:nk} Are there interpolant degrees between the degrees of $\seq{n^k}$ and $\seq{p_k(n)}$?
  \item [\myquestion] \qlabel{q:atoms} Are there continuum many atoms?
  \item [\myquestion] \qlabel{q:morse} Is the degree of the Thue--Morse sequence an atom?
\end{enumerate}

% 
% \section{Preliminaries}\label{sec:prelims}
% \input{prelims}
% 
% \section{Main Results}\label{sec:main}
% 
% \subsection{For $k > 2$, the degree of $\seq{n^k}$ is not an atom}
% \input{nk}
% 
% \subsection{Atoms of every polynomial degree}
% \input{polynomial}

\bibliography{main}

\xtend{
\newpage
\appendix

\section*{Appendix}

\section{Weighted Products}\label{sec:composition}

We define concatenation and unfolding of tuples of weights.
\begin{definition}\normalfont
  Let $\vec{\alpha} = \tup{\vec{\alpha_0},\ldots,\vec{\alpha_{m-1}}}$,
  $\vec{\beta} = \tup{\vec{\beta_0},\ldots,\vec{\beta_{m-1}}}$ be a tuple of weights.
  We define \emph{concatenation}:
  \begin{align*}
    \vec{\alpha} \concat \vec{\beta} = \tup{\vec{\alpha_0},\ldots,\vec{\alpha_{m-1}},\vec{\beta_0},\ldots,\vec{\beta_{m-1}}}\;.
  \end{align*}
  We define \emph{unfolding} by induction on $n \in \nat$ with $n > 0$: 
  \begin{align*}
    \vec{\alpha}^1 = \vec{\alpha} && \vec{\alpha}^{n+1} = \vec{\alpha} \concat \vec{\alpha}^n
  \end{align*}
\end{definition}

Unfolding a tuple of weights
does not change its semantics.
\begin{lemma}
  Let $f : \nat \to \rat$, $\vec{\alpha}$ a tuple of weights and $n \ge 1$.
  Then $\wprod{\vec{\alpha}}{f} = \wprod{\vec{\alpha}^n}{f}$.
\end{lemma}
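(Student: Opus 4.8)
The plan is to prove, by induction on the argument $\ell \in \nat$, a statement slightly more general than the lemma: \emph{for a fixed $n \ge 1$, for every tuple of weights $\vec{\beta}$ and every function $g : \nat \to \rat$, we have $(\wprod{\vec{\beta}^n}{g})(\ell) = (\wprod{\vec{\beta}}{g})(\ell)$.} Keeping $\vec{\beta}$ and $g$ universally quantified inside the induction hypothesis is the crucial point, because the recursive clause of Definition~\ref{def:weighted:product} expresses $(\wprod{\vec{\beta}}{g})(\ell+1)$ through a weighted product over the \emph{rotation} $\vec{\beta}'$ and a \emph{shift} of $g$; a version tied to a single $\vec{\alpha}$ and $f$ would be useless at the smaller argument.

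First I would record two elementary facts about unfolding. A trivial induction on $n$ shows that, for $\vec{\beta} = \tup{\vec{\beta_0},\ldots,\vec{\beta_{m-1}}}$, the tuple $\vec{\beta}^n$ is exactly the $(nm)$-tuple consisting of $n$ consecutive copies of $\vec{\beta_0},\ldots,\vec{\beta_{m-1}}$. Consequently its first entry is unchanged, $(\vec{\beta}^n)_0 = \vec{\beta_0}$ (hence $\length{(\vec{\beta}^n)_0} = \length{\vec{\beta_0}}$), and moving that entry to the back turns $n$ copies of $\vec{\beta_0},\ldots,\vec{\beta_{m-1}}$ into $n$ copies of $\vec{\beta_1},\ldots,\vec{\beta_{m-1}},\vec{\beta_0}$, that is,
\begin{align*}
  (\vec{\beta}^n)' = (\vec{\beta}')^n \,.
\end{align*}

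The induction on $\ell$ is then straightforward. For $\ell = 0$, Definition~\ref{def:weighted:product} gives $(\wprod{\vec{\beta}^n}{g})(0) = (\vec{\beta}^n)_0 \cdot g = \vec{\beta_0} \cdot g = (\wprod{\vec{\beta}}{g})(0)$. For $\ell+1$, set $k = \length{\vec{\beta_0}}-1$ and compute, using Definition~\ref{def:weighted:product}, the two facts above, and the induction hypothesis instantiated at the tuple $\vec{\beta}'$ and the function $\shift{k}{g}$:
\begin{align*}
  (\wprod{\vec{\beta}^n}{g})(\ell+1)
  &= (\wprod{(\vec{\beta}^n)'}{\shift{\length{(\vec{\beta}^n)_0}-1}{g}})(\ell)
   = (\wprod{(\vec{\beta}')^n}{\shift{k}{g}})(\ell) \\
  &= (\wprod{\vec{\beta}'}{\shift{k}{g}})(\ell)
   = (\wprod{\vec{\beta}}{g})(\ell+1) \,.
\end{align*}
Specialising to $\vec{\beta} = \vec{\alpha}$ and $g = f$ gives the lemma. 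I do not anticipate any genuine obstacle here; the only thing to be careful about is carrying the universal quantifiers over $\vec{\beta}$ and $g$ through the induction, since every unfolding step of the recursion changes both of them at once. If a non-inductive write-up is preferred, one can instead first establish the closed form $(\wprod{\vec{\alpha}}{f})(\ell) = \vec{\alpha}_{\ell \bmod m} \cdot \shift{\sigma(\ell)}{f}$ with $\sigma(\ell) = \sum_{i=0}^{\ell-1}(\length{\vec{\alpha}_{i \bmod m}}-1)$, and then observe that passing from $\vec{\alpha}$ to $\vec{\alpha}^n$ changes neither $\ell \bmod m$ (because $m$ divides $nm$) nor $\sigma(\ell)$, so the two weighted products coincide pointwise.
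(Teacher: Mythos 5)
Your proof is correct. The paper itself offers no real argument here---its proof is the single sentence ``follows immediately from the cyclic fashion in which the weights in the weighted product are applied''---and your induction on the argument $\ell$, with $\vec{\beta}$ and $g$ universally quantified in the induction hypothesis and the two auxiliary identities $(\vec{\beta}^n)_0 = \vec{\beta_0}$ and $(\vec{\beta}^n)' = (\vec{\beta}')^n$, is precisely the rigorous formalisation of that cyclicity intuition.
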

\begin{proof}
  Follows immediately from the cyclic fashion in which the weights in the weighted product are applied.
\end{proof}

We will now define the product $\wprod{\vec{\alpha}}{\vec{\beta}}$ of tuples of weights 
such that we have
$\wprod{\vec{\alpha}}{(\wprod{\vec{\beta}}{f})} = \wprod{(\wprod{\vec{\alpha}}{\vec{\beta}})}{f}$
for every $f : \nat \to \rat$.
We need one auxiliary definition.
\begin{definition}\label{def:prod:weight:weights}\normalfont
  For a weight $\vec{\gamma} = \tup{x_{0},\ldots,x_{\ell-1},y}$
  and a tuple of weights $\vec{\alpha} = \tup{\vec{\alpha_0},\ldots,\vec{\alpha_{\ell-1}}}$ 
  with $\vec{\alpha_i} = \tup{a_{i,0},\ldots,a_{i,m_i},b_i}$,
  we define the weight $\vec{\gamma} \cdot \vec{\alpha}$ by
  \begin{align*}
    \vec{\gamma} \cdot \vec{\alpha} &= \tup{x_0a_{0,0},\; \ldots,\; x_0a_{0,m_0}, \\
      &\hspace{3.8ex}x_1a_{1,0},\; \ldots,\; x_1a_{1,m_1}, \\[-1ex]
      &\hspace{12ex}\vdots\\[-1ex]
      &\hspace{3.8ex}x_\ell a_{\ell,0},\; \ldots,\; x_\ell a_{\ell,m_\ell},\\
      &\hspace{3.8ex} x_0 b_0 + x_1 b_1 + \cdots + x_\ell b_\ell + y
      }\;.
  \end{align*}
\end{definition}
% Then the following lemma is immediate.
% \begin{lemma}\label{lem:prod:weight:weights}
%   Let $\gamma = \tup{x_{0},\ldots,x_{\ell-1},y}$ be a weight,
%   $\vec{\alpha} = \tup{\alpha_0,\ldots,\alpha_{\ell-1}}$ a tuple of weights,
%   and $f : \nat \to \rat$ a function.
%   Then
%   $\wprod{(\gamma \cdot \vec{\alpha})}{f} = \gamma \cdot ( \wprod{\vec{\alpha}}{f} )$.
% \end{lemma}

We now define the product of tuples of weights.
\begin{definition}\label{def:wprod:compose}\normalfont
  For tuples of weights $\vec{\alpha}$ and $\vec{\beta}$ 
  with the property $\sumlength{\vec{\alpha}} = \length{\vec{\beta}}$
  we define $\wprod{\vec{\alpha}}{\vec{\beta}}$ by induction on the tuple length:
  \begin{align*}
    \wprod{\vec{\alpha}}{\vec{\beta}} =\ & 
      \tup{\vec{\alpha_0}\cdot \tup{\vec{\beta_0},\ldots,\vec{\beta_{|\alpha_0|-2}}}} \concat \\
      &\quad \big(\wprod{\tup{\vec{\alpha_1},\ldots,\vec{\alpha_{k-1}}}}{\tup{\vec{\beta_{|\alpha_0|-1}},\ldots,\vec{\beta_{\ell-1}}}}\big)
  \end{align*}
  where $\vec{\alpha} = \tup{\vec{\alpha_0},\ldots,\vec{\alpha_{k-1}}}$.
  and $\vec{\beta} = \tup{\vec{\beta_0},\ldots,\vec{\beta_{\ell-1}}}$.
  Here we stipulate that $\wprod{\tup{}}{\tup{}} = \tup{}$.
  
  For $\vec{\alpha}$ and $\vec{\beta}$ 
  such that $\sumlength{\vec{\alpha}} \ne \length{\vec{\beta}}$
  we define $\wprod{\vec{\alpha}}{\vec{\beta}}$
  as follows:
  \begin{align*}
    \wprod{\vec{\alpha}}{\vec{\beta}} =     
    \wprod{\big(\vec{\alpha}^{\frac{c}{\sumlength{\vec{\alpha}}}}\big)}{\big(\vec{\beta}^{\frac{c}{\length{\vec{\beta}}}}\big)}
  \end{align*}
  where $c \in \nat$ is the least common multiple of $\sumlength{\vec{\alpha}}$ and $\length{\vec{\beta}}$.
\end{definition}

\begin{example}
  Let $\vec{\alpha} = \tup{\vec{\alpha_1},\vec{\alpha_2}}$ and $\vec{\beta} = \tup{\vec{\beta_1},\vec{\beta_2}}$
  \begin{align*}
    \vec{\alpha_1} &= \tup{2,1,3} & \vec{\alpha_2} &= \tup{1,1} \\
    \vec{\beta_1} &= \tup{1,2,3,4} & \vec{\beta_2} &= \tup{0,1,1}
  \end{align*}
  Note that $\vec{\beta}$ is the tuple of weights used in Example~\ref{ex:wprod}. 
  We compute $\wprod{\vec{\alpha}}{\vec{\beta}}$.
  We have $\sumlength{\vec{\alpha}} = 3$ and $\length{\vec{\beta}} = 2$.
  Thus, we have to unfold $\vec{\alpha}$ twice and $\vec{\beta}$ trice:
  $\vec{\alpha}^2 = \tup{\vec{\alpha_1},\vec{\alpha_2},\vec{\alpha_1},\vec{\alpha_2}}$
  and
  $\vec{\beta}^3 = \tup{\vec{\beta_1},\vec{\beta_2},\vec{\beta_1},\vec{\beta_2},\vec{\beta_1},\vec{\beta_2}}$.
  Then
  \begin{align*}
    &\wprod{\vec{\alpha}}{\vec{\beta}} 
    = \wprod{\vec{\alpha}^2}{\vec{\beta}^3} \\
    &= \wprod{\tup{\vec{\alpha_1},\vec{\alpha_2},\vec{\alpha_1},\vec{\alpha_2}}}{\tup{\vec{\beta_1},\vec{\beta_2},\vec{\beta_1},\vec{\beta_2},\vec{\beta_1},\vec{\beta_2}}} \\
    &= \tup{\vec{\alpha_1} \cdot \tup{\vec{\beta_1},\vec{\beta_2}}} \concat \wprod{\tup{\vec{\alpha_2},\vec{\alpha_1},\vec{\alpha_2}}}{\tup{\vec{\beta_1},\vec{\beta_2},\vec{\beta_1},\vec{\beta_2}}} \\[-.5ex]
    &\hspace{5ex}\vdots\\[-.5ex]
%     &= \tup{\alpha_1 \cdot \tup{\beta_1,\beta_2}} \concat \tup{\alpha_2 \cdot \tup{\beta_1}} \concat \wprod{\tup{\alpha_1,\alpha_2}}{\tup{\beta_2,\beta_1,\beta_2}} \\
%     &= \tup{\alpha_1 \cdot \tup{\beta_1,\beta_2}} \concat \tup{\alpha_2 \cdot \tup{\beta_1}} 
%        \concat \tup{\alpha_1 \cdot \tup{\beta_2,\beta_1}} \concat \wprod{\tup{\alpha_2}}{\tup{\beta_2}} \\
    &= \tup{\vec{\alpha_1} \cdot \tup{\vec{\beta_1},\vec{\beta_2}}} \concat \tup{\vec{\alpha_2} \cdot \tup{\vec{\beta_1}}} 
       \concat \tup{\vec{\alpha_1} \cdot \tup{\vec{\beta_2},\vec{\beta_1}}} \concat \tup{\vec{\alpha_2} \cdot \tup{\vec{\beta_2}}} \\
    &= \tup{\tup{2,4,6,0,1,12},\; \tup{1,2,3,5},\; \tup{0,2,1,2,3,9},\; \tup{0,1,2} }
  \end{align*}
\end{example}

\begin{proof}[Proof of Lemma~\ref{lem:wprod:double:non:constant}]
  Follows directly from the definition of $\wprod{\vec{\alpha}}{\vec{\beta}}$.
  Every weight in $\wprod{\vec{\alpha}}{\vec{\beta}}$ is a concatenation of 
  scalar multiplications of weights in $\vec{\beta}$.
%   Note that the scalar multiplication of a double non-constant weight is double non-constant. 
  \qed
\end{proof}

\section{Additional Intuition for the Proof of Theorem~\ref{thm:atom}}\label{sec:intuitive:atom}

Let $k \ge 1$ and $a_0,\ldots,a_{k-1} > 0$. Define 
\begin{align*}
  p_k(n) = \sum_{i = 0}^{k - 1} a_i (k n + i)^k \;.
\end{align*}
Theorem~\ref{thm:atom} states that
for every polynomial $q(n) \in \qpoly_k$,
we have $\seq{q(n)} \fstred \seq{p_k(n)}$.
Hence the degree of $\seq{p(n)}$ is an atom as a consequence of Proposition~\ref{prop:poly-transducts}.

We give some more intuition for the proof of Theorem~\ref{thm:atom},
involving a more explicit appeal to the continuity argument.
For functions $f_0,\ldots,f_{k-1} : \nat \to \rat$,
we define the function $\zip{f_0,\ldots,f_{k-1}} : \nat \to \rat$ %~\cite{grab:endr:hend:klop:moss:2012},
a.k.a.\ \emph{perfect shuffle}~\cite{allo:shal:1999,ramp:shal:wang:2005}, 
by induction on $n$ as follows
\begin{align*}
  \zip{f_0,\ldots,f_{k-1}}(0) &= f_0(0) \\
  \zip{f_0,\ldots,f_{k-1}}(n+1) &= \zip{f_1,\ldots,f_{k-1},\shift{1}{f_0}}(n)
\end{align*}
where $\shift{i}{f}$ is the \emph{$i$-th shift} of $f$, defined by $m \mapsto f(m+i)$.

We have $\seq{n^k} \fstred \seq{p_k}$ by the following transduction:
\begin{align*}
  \seq{n^k} &= \seq{\zip{(k n + 0)^k, (k n + 1)^k, \ldots, (k n + k-1)^k}} \\
            &\fstred \seq{a_0 (k n + 0)^k + \cdots + a_{k-1}(k n + k-1)^k} \\
            &= \seq{p_k(n)}
\end{align*}
Thinking of $n^k$ as an infinite word of natural numbers,
then $(k n + 0)^k$, $(k n + 1)^k$, \ldots, $(k n + k-1)^k$
are subsequences of $n^k$.
Namely those subsequences picking every $k$-th element starting from element at index $0$,$1$,\ldots,$k-1$,
respectively.
Note that the transduction in the second line corresponds to the weighted product $\tup{a_0,a_1,\ldots,a_{k-1},0}$,
and thus can be realised by a finite state transducer (Lemma~\ref{lem:wprod:FST:q}).

However, this transduction works only for $\seq{n^k}$.
It remains to be argued that there exists such a transduction from $\seq{q(n)}$ to $\seq{p_k(n)}$ 
for every polynomial $q(n) \in \qpoly_k$,

Let us write $\sim_\varepsilon$ for the relation that relates polynomials of the same order
whose coefficients differ by at most $\varepsilon > 0$.
By Lemma~\ref{lem:epsilon} we can get arbitrarily close to~$\seq{n^k}$.
For every $\varepsilon > 0$, there exists $h(n) \in \qpoly_k$ such that  
\begin{align*}
  \seq{q(n)} \fstred \seq{h(n)} &&\text{and}&& h(n) \sim_\varepsilon n^k
\end{align*}
Moreover, for $i \in \nat_{<k}$, we have 
\begin{align}
  h(k n + i) \sim_{\varepsilon'} (k n + i)^k \label{sub:pn:sim}
\end{align}
where $\varepsilon'$ depends on $\varepsilon$ (and $i$).
If $\varepsilon$ tends to $0$, so will $\varepsilon'$.
% for some function $\xi : \rat \to \rat$ such that $\lim_{\varepsilon \to 0} \xi(\varepsilon) = 0$.

The crucial observation is that 
$(k n + 0)^k$, $(k n + 1)^k$, \ldots, $(k n + k-1)^k$
form a basis of the vector space of polynomials of order $k$ with addition and scalar multiplication.\footnote{%
  The cautious reader will have observed that the basis consists only of $k$ vectors
  while the vector space has dimension $k+1$.
  We tacitly ignore the constant terms of the polynomials
  since, in the transducer degrees, we have $\seq{f} \equiv \seq{f + c}$ for every $c\in\rat$.
}
The property of `being a basis' is continuous.
Hence, for small enough $\varepsilon$, and using approximation~\eqref{sub:pn:sim},
we conclude that 
$h(k n + 0)$, $h(k n + 1)$, \ldots, $h(k n + k-1)$ 
form a basis as well.
Thus, there exist $a_0',a_1',\ldots,a_{k-1}' \in \rat$ such that
\begin{align*}
  p_k(n) = a_0' h(k n + 0) + \cdots + a_{k-1}'h(k n + k-1)
\end{align*}
We have that
\begin{align*}
  \seq{h(n)} &= \seq{\zip{h(k n + 0), h(k n + 1), \ldots, h(k n + k-1)}}\\
            &\fstred \seq{a_0' h(k n + 0) + \cdots + a_{k-1}'h(k n + k-1)} \\
            &= \seq{p_k(n)}
\end{align*}
However, for this transduction to work, we need to ensure that  
$a_0',a_1',\ldots,a_{k-1}' \ge 0$.
Recall that $a_0,a_1,\ldots,a_{k-1} > 0$.
Again, by continuity,  
$a_i'$ approaches $a_i$ as $\varepsilon$ approaches $0$.
Hence, for small enough $\varepsilon$,
we have $a_i' \ge 0$ for every $i \in \nat_{<k}$.
Thus we have $\seq{q(n)} \fstred \seq{h(n)} \fstred \seq{p(n)}$.
Hence $p(n)$ is an atom.

}

\end{document}